\numberwithin{equation}{section}
\renewcommand\section{\@startsection {section}{1}{\z@}
{-3.5ex \@plus -1ex \@minus -.2ex}
{2.3ex \@plus.2ex}
{\normalfont\Large\bfseries}}
\renewcommand\subsection{\@startsection{subsection}{2}{\z@}
{-3.25ex\@plus -1ex \@minus -.2ex}
{1.5ex \@plus.2ex}
{\normalfont\large\bfseries}}
\newcommand{\bea}{\begin{eqnarray}}
\newcommand{\eea}{\end{eqnarray}}
\newtheorem{theorem}{Theorem}[section]
\newtheorem{corollary}{Corollary}[theorem]
\newtheorem{proposition}[theorem]{Proposition}
\begin{document}
%%%%%%%%%%%%%%%%%%%%%%%%%%%%%%%%%%%%%%

\thispagestyle{empty}
\begin{flushright}\footnotesize\ttfamily
DMUS-MP-21/17
\end{flushright}
\vspace{2em}

\begin{center}

{\Large\bf \vspace{0.2cm}
{\color{black} \large Jordan blocks and the Bethe ansatz I: The eclectic spin chain as a limit}} %v2ref
\vspace{1.5cm}

\textrm{Juan Miguel Nieto García\footnote{\texttt{j.nietogarcia@surrey.ac.uk}} and Leander Wyss\footnote{\texttt{l.wyss@surrey.ac.uk}}}

\vspace{2em}

\vspace{1em}
\begingroup\itshape
Department of Mathematics, University of Surrey, Guildford, GU2 7XH, UK
\par\endgroup

\end{center}

\vspace{2em}

\begin{abstract}\noindent %v2ref
We present a procedure to extract the generalised eigenvectors of a non-diagonalisable matrix by considering a diagonalisable perturbation of it and computing the non-diagonalisable limit of its eigenvectors. As an example of this process, we compute a subset of the spectrum of the eclectic spin chain by means of the Nested Coordinate Bethe Ansatz. This allows us to show that the Bethe Ansatz of the finitely twisted spin chain contains enough information to reconstruct the generalised eigenvectors of the eclectic spin chain.
\end{abstract}

\newpage

\overfullrule=0pt
\parskip=2pt
\parindent=12pt
\headheight=0.0in \headsep=0.0in \topmargin=0.0in \oddsidemargin=0in

\vspace{-3cm}
\thispagestyle{empty}
\vspace{-1cm}

\tableofcontents

\setcounter{footnote}{0}

\section{Introduction}

Non-Hermitian systems have a plethora of applications in physics, ranging from optics to critical phenomena, even appearing in transport phenomena in biological systems, see \cite{Heiss:2012dx,Ashida:2020dkc} and references therein for particular instances. In spite of that, they are not as widely studied as Hermitian systems due to their complexity. Non-Hermitian systems are even less studied in the context of quantum mechanics, where requiring the Hamiltonian operator to have real eigenvalues and being bounded from below is usually taken care of by demanding its hermiticity. However, being Hermitian is a sufficient condition for a matrix to have real eigenvalues, but it not a necessary condition. In fact, it has been shown that the Dirac-von Neumann axiom regarding the hermiticity of the Hamiltonian can be relaxed to the existence of an anti-linear operator that commutes with the Hamiltonian (for example, $\mathcal{PT}$ symmetry) \cite{Bender:1998ke,Bender:2007nj,Alexandre:2015kra,PT}. Logarithmic Conformal Field Theories deserve a special mention, as it has been proven that the existence of correlation functions with logarithmic singularities is fundamentally tied to the Virasoro operator $L_0$ developing Jordan cells (which cannot happen if $L_0$ is Hermitian) \cite{Gurarie:1993xq}.

Non-Hermitian systems have also been studied in the context of integrability. Some examples are Baxter's $\mathbb{Z}_N$ Hamiltonian \cite{Baxter},  Toda field theories with imaginary coupling \cite{Hollowood:1991vfe}, non-Hermitian extensions of Calogero-Moser-Sutherland models \cite{Fring:2005ys}, the Ising model in the presence of an imaginary magnetic field \cite{Castro-Alvaredo:2009xex}, and quantum many-body models in contact with a Markovian environment \cite{LPP}. More extreme cases where non-hermiticity leads to non-diagonalisability are the loop models studied in \cite{MS} and the $U_q (\mathfrak{sl}(2))$ open spin chain when $q$ becomes a root of unity \cite{Gainutdinov:2016pxy}. In addition, non-diagonalisable integrable structures seem to be relatively prevalent in the classification of $R$-matrices performed in \cite{DeLeeuw:2019gxe}.

A context where non-Hermitian Hamiltonians are especially common is in perturbation theory. Given two Hermitian operators $A$ and $B$, we can study the diagonalisability of the operator $H(\lambda)=A+\lambda B$ and the consequences of analytically continuing the parameter $\lambda$ to complex values. We assume that $A$ and $B$ do not commute, as the problem would be trivial otherwise. Notice that considering complex values of $\lambda$ makes the operator $H(\lambda)$ non-Hermitian, which opens the door to some additional structures. It is well-known that there may exist points $\lambda_0$ such that $H(\lambda_0)$ has \emph{degenerate} eigenvalues. In contrast with Hermitian matrices, non-Hermitian matrices can develop an additional kind of degeneration. In particular, non-Hermitian matrices that depend on continuous parameters may have one or more points where two or more eigenvectors coincide. We call these points \emph{exceptional points}, and we say that eigenvectors \emph{coalesce} to a single eigenvector, following the terminology established in \cite{Kato}. Obviously, both eigenvalues and eigenvectors cannot be analytic functions of $\lambda$ around exceptional points. Rather, they can be described in terms of a fractional power series (related with the number of eigenvectors that coalesce) called \emph{Puiseux series}, see appendix B of \cite{KGTP} for the explicit derivation in the case of Jordan cells of size 2. The properties of these series are discussed in \cite{Knopp}.

In recent years, the interest in non-Hermitian physics has grown in the integrability community. This interest is driven mainly by a particular deformation of $AdS_5 \times S^5$ called \emph{fishnet theory}, proposed in \cite{Gurdogan:2015csr}, which is described in terms of a non-Hermitian Lagrangian. In order to construct the fishnet theory action, we first deform $\mathcal{N}=4$ Super-Yang-Mills (SYM) theory by twisting the product of the fields in the Lagrangian. This theory is called $\gamma_i$-deformation, and it has proven to be integrable \cite{Lunin:2005jy,Frolov:2005dj} but not conformal \cite{Fokken:2013aea}. Strongly twisted models are obtained when one considers the limit of infinite twist, $\gamma_i \rightarrow \infty$, and vanishing coupling, $g\rightarrow 0$, while keeping their product constant, $\gamma_i g=\xi_i$. This limit is very interesting, as the theory recovers the conformal symmetry. The most general of these strongly twisted theories has the following interaction Lagrangian (up to relabelling of the fields)
\begin{align}
 \mathcal{L}_{\text{int}} =&-i N \text{Tr} \left[ \sqrt{\xi_2 \xi_3} (\psi^3 \phi^1 \psi^2 +\bar{\psi}_3 \phi_1^\dagger \bar{\psi}_2 ) + \sqrt{\xi_3 \xi_1} (\psi^1 \phi^2 \psi^3 +\bar{\psi}_1 \phi_2^\dagger \bar{\psi}_3 ) + \sqrt{\xi_1 \xi_2} (\psi^2 \phi^3 \psi^1 +\bar{\psi}_2 \phi_3^\dagger \bar{\psi}_1 ) \right] \notag \\
 &- N \text{Tr} \left[ (\xi_3)^2 \phi_1^\dagger \phi_2^\dagger \phi^1 \phi^2 + (\xi_1)^2 \phi_2^\dagger \phi_3^\dagger \phi^2 \phi^3 + (\xi_2)^2 \phi_3^\dagger \phi_1^\dagger \phi^3 \phi^1 \right]   \ , \label{interactionstrongtwist}
\end{align}
where $\phi^j$ are complex bosonic fields and $\psi^j$ are fermionic fields. The gauge fields and the fourth fermion decouple in this limit. The fishnet theory is a particular case of this strongly twisted Lagrangian where all but one of the deformation parameters $\xi_i$ are set to zero.

Direct inspection of the Lagrangian~(\ref{interactionstrongtwist}) is enough to realise that strongly twisted theories are non-Hermitian. This implies that the dilatation operator associated with the conformal symmetry of the theory is no longer Hermitian for general values of the deformation parameters. It has been observed that not only do the eigenvalues associated with the dilatation operator take complex values, but that it also becomes non-diagonalisable. Similarly to $\mathcal{N}=4$ SYM theory, we can express the action of the dilatation operator on single trace operators at one-loop in terms of the nearest-neighbour Hamiltonian of an effective spin chain \cite{Ipsen:2018fmu}. The effective spin chain that describes the action of the one-loop dilatation operator on single trace operators made of scalars was studied in \cite{Ipsen:2018fmu,StaudacherAhn}. To that end, the authors solved the Bethe equations for the case of finite values of the twist parameter $q_i$ and computed the large twist limit. Although they were able to find the correct number of Bethe roots, they were not able to find all the Bethe vectors. The reason for it was that several Bethe vectors have the same limit at strong twist. They denote those particular configurations as \emph{locked states}, as they all have the relative positions of the excitations fixed. They find that these states are indeed eigenstates of the dilatation operator, but they do \emph{not} exhaust the number of true eigenvectors of the dilatation operator. This disappearance of eigenvectors might seem surprising at first sight, but it is actually the same effect as the coalescence of eigenvectors we described before for perturbation theory. It is a consequence of the non-hermiticity of the one-loop dilatation operator for general values of the deformation parameters $\gamma_i$, and the fact that the limit we are considering is an exceptional point of it.

While finishing up writing this article, \cite{Ahn:2021emp} was published, which extends the counting of the eigenstates done in \cite{StaudacherAhn} to any number of excitations by means of combinatorial arguments. While they perform the counting only on the hyper-eclectic spin chain (that is, effective spin chain associated with $\xi_1=\xi_2=0$), they argue that the counting has to be the same for the eclectic spin chain.  Although this creates a substantial overlap between their results and the ones we present here, the method we used to obtain those results is entirely different. Thus, the two articles complement and reaffirm each other.

Although there is a wide range of theorems on the analytic properties and behaviour of eigenvalues when approaching an exceptional point, we were not able to find as many for properties of the eigenvectors. The most relevant one we found was proposition 4.3 in \cite{Gainutdinov:2016pxy}, where the authors provide a recipe for computing generalised eigenvectors in a particular model with Jordan cells of size two. However, this proposition was crafted specifically for the model they were studying and a situation that only involves Jordan cells of size two.

In this article, we plan to examine in detail what happens to the eigenvectors of a diagonalisable matrix when we approach an exceptional point, and provide a method to extract the generalised eigenvectors at the exceptional point from their analytical properties near this point. Although our results will be general, here we will only apply them to the so called \emph{eclectic spin chain} from \cite{Ipsen:2018fmu,StaudacherAhn}, which describes the one-loop dilatation operator of strongly twisted models. We will construct the explicit form of the eigenvectors of this dilatation operator for generic values of the twist using the Nested Coordinate Bethe Ansatz and check that we can find all the eigenvectors and generalised eigenvectors of a specific subsector using our method. This result also gives us a first-principle explanation for the success of proposition 4.3 in \cite{Gainutdinov:2016pxy}.

The outline of this article is as follows.  In section 2 we review the connection between the strong twist limit of $\mathcal{N}=4$ SYM and the eclectic spin chains introduced in \cite{Ipsen:2018fmu}. In section 3 we recall some results of non-diagonalisable matrices that are relevant to the upcoming computations. In sections 4 and 5 we study how matrices and their eigenvectors behave when we approach the exceptional point of the matrix they are associated with. In section 4 we focus exclusively on the case where all the Jordan cells can be distinguished, while in section 5 we consider the case where two or more Jordan cells can have the same eigenvalue. In both sections we preface our results with the computation of the generalised eigenvectors of some particular subsector of the eclectic spin chain Hamiltonian as examples to introduce our results. In section 6 we apply our results to Bethe vectors of the eclectic spin chain Hamiltonian. In section 7 we present some conclusions and future directions.

\section{Single trace operators in strongly twisted $\mathcal{N}=4$ Super-Yang-Mills and eclectic spin chains}

A very well-known procedure to compute the conformal dimension of operators in $\mathcal{N}=4$ SYM is to construct an effective spin chain whose Hamiltonian acts in the same fashion as the dilatation operator acts on single-trace operators. A detailed review of this method can be found in \cite{Minahan:2010js}.

This procedure can be extended to the $\gamma_i$-deformation, and thus to the strong twist deformation of $\mathcal{N}=4$ SYM. The details of it can be found in \cite{Fokken:2013aea,Ipsen:2018fmu,StaudacherAhn,Fokken:2013mza}. Here we will follow the expressions from \cite{StaudacherAhn}. In this article, we will care only about single-trace operators made of either of the three scalar fields with no derivatives. If we identify the scalar field $\phi^i$ with the spin state $|i\rangle$, the dilatation operator acting on a single trace operator involving a total of $L$ scalar fields takes the form
\begin{equation}
	\mathcal{D}=\mathcal{D}_0 + g^2 \mathbf{\hat{H}}_{\xi_1 , \xi_2 , \xi_3}+\mathcal{O} (g^4) =\mathcal{D}_0 + g^2 \left[\sum_{l=1}^L \hat{\mathbb{P}}^{l,l+1} \right]+\mathcal{O} (g^4) \ ,
\end{equation}
where $\mathcal{D}_0$ is the bare dimension of the operator (which is equal to $L$ for scalar operators), and $\hat{\mathbb{P}}^{a,b}$ is an operator that acts non-trivially on sites $a$ and $b$ as follows
\begin{align}\label{eclecticperm}
\hat{\mathbb{P}}\,|21\rangle &=\xi_3\, |12\rangle \ , & \hat{\mathbb{P}}\,|32\rangle &=\xi_1\, |23\rangle \ , & \hat{\mathbb{P}}\, |13\rangle &=\xi_2\, |31\rangle \ ,
\end{align}
while the remaining matrix elements are zero. In addition, as we are interested in single-trace operators, this means that we will be working with closed spin chains with the periodic identification $L+1\equiv 1$.

If we consider the single trace operator consisting solely of $\phi^1$ as the vacuum state of the effective spin chain, scalars $\phi^2$ behave as right-moving excitations while scalars $\phi^3$ behave as left-moving excitations. We will denote the total number of excitations, i.e. the number of $\phi^2$ and $\phi^3$ fields in the operator, by $M$, and the total number of $\phi^3$ fields in the operator by $K$. Without any loss of generality, we can assume that $K\leq M-K \leq L-M$. If we have operators made of two out of the three fields, i.e. $K=0$, we can diagonalise the Hamiltonian $\mathbf{\hat{H}}$ without any issue. Yet, if the three kinds of fields are present, i.e. $K\neq 0$, we can see that the Hamiltonian cannot be diagonalisable. The simplest route to prove it is to show that applying the Hamiltonian a sufficient number of times on a state containing all three kinds of excitations will make it vanish. In order to see that, we first have to consider that a left-moving excitation and a right-moving excitation will eventually meet after applying the Hamiltonian enough times. Since they cannot reflect off each other, as they would travel in the wrong direction after such an event, they act as impenetrable walls to each other. Thus, excitations start to accumulate at each side of this ``23 wall''. Once all the excitations have accumulated, an additional application of the Hamiltonian with make the state vanish, as no excitation can be moved. Because this statement holds true for any state containing the three kinds of excitations, the Hamiltonian is nilpotent. Following the terminology of \cite{Ipsen:2018fmu}, spin chains having this property are said to have eclectic field content.

Although we can write the Hamiltonian in Jordan normal form using the standard algorithm, we are interested in tacking the problem from a different perspective. Rather than computing the Jordan normal form of the effective Hamiltonian associated with the one-loop dilatation operator of strongly twisted $\mathcal{N}=4$ SYM, we will diagonalise the one associated with the $\gamma_i$-deformation and study how the eigenvalues and eigenvectors behave for large values of the twist. The effective spin chain associated with the $\gamma_i$-deformation corresponds to an integrable deformation of the $\mathfrak{su}(3)$ XXX spin chain termed \emph{twisting}. For the particular case of $\mathfrak{su}(3)$, it admits a total of 3 different twist parameters, $q_i$. Similarly to the untwisted case, the Hamiltonian can be written in terms of a \emph{twisted permutation operator} $\tilde{\mathbb{P}}^{a,b}$
\begin{equation}\label{eq:twistedXXX}
\mathbf{\tilde{H}}_{(q_1,q_2,q_3)}=\sum_{l=1}^L \tilde{\mathbb{P}}^{l,l+1}\ ,
\end{equation}
where $\tilde{\mathbb{P}}^{a,b}$ acts non-trivially on sites $a$ and $b$ as follows
\begin{align}\label{eq:twistedperm}
\tilde{\mathbb{P}}\, |11\rangle &= |11\rangle \ , & \tilde{\mathbb{P}}\,|22\rangle &= |22\rangle \ , & \tilde{\mathbb{P}}\,|33\rangle &= |33\rangle \ , \\ \nonumber
\tilde{\mathbb{P}}\,|12\rangle &=\frac{1}{q_3}\, |21\rangle \ , & \tilde{\mathbb{P}}\,|23\rangle &=\frac{1}{q_1}\, |32\rangle \ , & \tilde{\mathbb{P}}\,|31\rangle &=\frac{1}{q_2}\, |13\rangle \ , \\ \nonumber
\tilde{\mathbb{P}}\,|21\rangle &=q_3\, |12\rangle \ , & \tilde{\mathbb{P}}\,|32\rangle &=q_1\, |23\rangle \ , & \tilde{\mathbb{P}}\, |13\rangle &=q_2\, |31\rangle \ .
\end{align}
Although this Hamiltonian is Hermitian only if the twist parameters are complex phases, it is diagonalisable for generic values of said parameter (see footnote 2 in \cite{StaudacherAhn}).

As we have commented in the introduction, we can recover the eclectic spin chain Hamiltonian by considering the following limit
\begin{equation}
	\mathbf{\hat{H}}_{\xi_1 , \xi_2 , \xi_3}=\lim_{\epsilon \rightarrow 0} \epsilon \mathbf{\tilde{H}}_{(\frac{\xi_1}{\epsilon},\frac{\xi_2}{\epsilon},\frac{\xi_3}{\epsilon})} \ ,
\end{equation}
but we are not able to span the complete Hilbert space with the vectors generated through the limit ${\epsilon \rightarrow 0}$ of the eigenvectors of $\mathbf{\tilde{H}}$. If one computes the Bethe vectors for finite value of the deformation parameters and compute their ${\epsilon \rightarrow 0}$ limit, these vectors coalesce to states of the form
\begin{equation}
	|\psi (k) \rangle=\sum_{l=1}^L e^{2\pi k l i /L} U^l | 1, \dots, 1, 2,\dots , 2, 3, \dots, 3 \rangle
\end{equation}
where $L$ is the total number of sites and $U$ is the operator that shifts the configuration by one site to the left, i.e., $U |n_1, \dots ,n_{L-1}, n_L \rangle=|n_2, \dots ,n_L, n_1 \rangle$. Due to their form, where the excitations cannot move from their relative places, these states have been called \emph{locked states}. Although we can check that they are eigenvectors of the Hamiltonian $\mathbf{\hat{H}}_{\xi_1 , \xi_2 , \xi_3}$, we can also check that these cannot be all the eigenvectors of this Hamiltonian. Thus, naïvely computing the limit at the level of the eigenvectors not only gives us zero information about the generalised eigenvectors of the Hamiltonian, but also it does not provide us with the full set of eigenvectors.

In the following sections we will show that all we have to do to retrieve this information is to compute limits of linear combinations of eigenvectors rather than just computing the limit of eigenvectors.

\section{Non-diagonalisable matrices, generalised eigenvectors and Jordan normal form}

In this section, we review some key concepts about non-diagonalisable matrices. The aim is to introduce the notation we will use throughout this article and highlight some results that will be relevant.

Given a matrix $M$, we say that $\lambda_i$ is an eigenvalue of $M$ with \emph{algebraic multiplicity} $n_i$ if it is a zero of degree $n_i$ of the characteristic polynomial of $M$, that is, if
\begin{equation}
	\det (M-\lambda \mathbb{I} ) \propto (\lambda - \lambda_i)^{n_i} \ ,
\end{equation}
where $\mathbb{I}$ is the identity matrix. We say that $v_i$ is an eigenvector of $M$ associated with the eigenvalue $\lambda_i$ if
\begin{equation}
	M v_i=\lambda_i v_i \ .
\end{equation}
The total number of linearly independent vectors that fulfil this equation, i.e. the dimension of $\text{Ker}(M-\lambda_i I)$, is called \emph{geometric multiplicity} of $\lambda_i$. It is easy to check that the geometric multiplicity is always equal or smaller than the algebraic multiplicity.

A matrix is \emph{non-diagonalisable} or \emph{defective} if the geometric multiplicity of one or more of its eigenvalues is smaller than their algebraic multiplicity. This means that there exist fewer eigenvectors than we expected from the characteristic polynomial. As we cannot diagonalise this kind of matrices, the next best thing we can do in this situation is to ``make up for the missing vectors'' in some fashion. This is done by means of \emph{Jordan chains}. Given a defective eigenvalue $\lambda_i$ and an eigenvector $v_{i,\alpha}^{(1)}$ associated with it, we define the \emph{generalised eigenvector of rank $n$} as the vector fulfilling
\begin{equation}
	(M-\lambda_i \mathbb{I} ) v_{i,\alpha}^{(n)}=v_{i,\alpha}^{(n-1)} \ , \label{geneigenvdefinition}
\end{equation}
where the index $\alpha$ labels the possible geometric multiplicity of the eigenvalue $\lambda_i$. For simplicity, we will usually drop the $(1)$ superindex for eigenvectors. It is trivial to prove that the generalised eigenvectors fulfil $(M-\lambda_i \mathbb{I} )^n v_{i,\alpha}^{(n)}=0$. In addition, the vectors that form a Jordan chain are linearly independent but not necessarily orthogonal.

Thanks to these additional linearly independent vectors, we can write a similarity transformation that changes a given square matrix into a block diagonal matrix of the form
\begin{equation}
	M=\begin{pmatrix}
	J_1 & 0 & 0 &   \dots \\
	0 & J_2 & 0 &  \dots \\
	0 & 0 & J_3  & \dots \\
	\vdots & \vdots & \vdots & \ddots
	\end{pmatrix} \ ,  \text{ with each block being of the form } J_i=\begin{pmatrix}
	\lambda_i & 1 & 0 & 0 &  \dots \\
	0 & \lambda_i & 1 & 0 & \dots \\
	0 & 0 & \lambda_i & 1 & \dots \\
	0 & 0 & 0 & \lambda_i & \dots \\
	\vdots & \vdots & \vdots & \vdots & \ddots
	\end{pmatrix} \ .
\end{equation}
This matrix is called the \emph{Jordan normal form} of $M$, and each block is called \emph{Jordan block} or \emph{Jordan cell}.\footnote{Jordan normal forms can also be constructed for matrices of generic dimension $N \times M$, however, we will only consider square matrices in this article.} The size of each Jordan cell is equal to the number of vectors that form the Jordan chain associated with it.

Notice that, similarly to what happens in diagonalisable matrices with degenerated eigenvalues, the linear combinations of two generalised eigenvectors of rank $p$ associated with the same eigenvalue is also a generalised eigenvector of rank $p$, but the generalised eigenvector of rank $p-1$ it is associated with changes. On the other hand, the linear combination of a generalised eigenvector of rank $p$ and a generalised eigenvector of rank $1$ associated with the same eigenvalue is also a generalised eigenvector of rank $p$ that is associated with the same generalised eigenvector of rank $p-1$ as the original.

Among the properties of Jordan chains, in this article we will make an extensive use of the following theorem
\begin{theorem}
The number of generalised eigenvectors of rank $l$ is never larger than the number of true eigenvectors, i.e., \emph{$\text{dim}\left\{ \text{Ker}\, (M-\lambda \mathbb{I})^l \right\} - \text{dim}\left\{ \text{Ker}\, (M-\lambda \mathbb{I})^{l-1} \right\} \leq \text{dim}\left\{ \text{Ker}\, (M-\lambda \mathbb{I}) \right\} $}.

In fact, \emph{$\text{dim}\left\{ \text{Ker}\, (M-\lambda \mathbb{I})^l \right\} - \text{dim}\left\{ \text{Ker}\, (M-\lambda \mathbb{I})^{l-1} \right\} \leq \text{dim}\left\{ \text{Ker}\, (M-\lambda \mathbb{I})^n \right\} - \text{dim}\left\{ \text{Ker}\, (M-\lambda \mathbb{I})^{n-1} \right\} $} for any two $n<l$. \label{dimker}
\end{theorem}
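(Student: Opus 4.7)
The plan is to work with the nested family of subspaces $K_j := \mathrm{Ker}(N^j)$, where $N = M - \lambda \mathbb{I}$, and to exploit the obvious filtration $\{0\} = K_0 \subseteq K_1 \subseteq K_2 \subseteq \cdots$ induced by the fact that $N^j v = 0$ implies $N^{j+1} v = 0$. Writing $d_j := \dim K_j - \dim K_{j-1}$, the statement to prove is that the sequence $(d_j)$ is non-increasing in $j$, and in particular $d_l \leq d_1$ for all $l \geq 1$.

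The main step is to define, for each $l \geq 2$, the linear map induced by $N$ on quotients,
\begin{equation}
\bar{N}_l : K_l / K_{l-1} \longrightarrow K_{l-1} / K_{l-2}, \qquad v + K_{l-1} \mapsto N v + K_{l-2}.
\end{equation}
First I would check that $\bar{N}_l$ is well-defined: if $v \in K_l$ then $N^{l-1}(Nv) = N^l v = 0$, so $Nv \in K_{l-1}$; and if $v \in K_{l-1}$ then $Nv \in K_{l-2}$, so the map descends to the quotient. Next I would establish injectivity: a class $v + K_{l-1}$ in the kernel of $\bar{N}_l$ satisfies $Nv \in K_{l-2}$, which means $N^{l-1} v = N^{l-2}(Nv) = 0$, hence $v \in K_{l-1}$, i.e.\ the class is zero.

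Injectivity of $\bar{N}_l$ immediately gives $\dim(K_l/K_{l-1}) \leq \dim(K_{l-1}/K_{l-2})$, that is, $d_l \leq d_{l-1}$. Iterating this inequality from $l$ down to any $n < l$ yields $d_l \leq d_n$, which is the second (stronger) assertion of the theorem. The first assertion is the special case $n = 1$, where $d_1 = \dim K_1 = \dim \mathrm{Ker}(M - \lambda \mathbb{I})$ is precisely the number of true eigenvectors (the geometric multiplicity of $\lambda$), while $d_l$ counts the number of generalised eigenvectors of rank exactly $l$ (the size of the Jordan basis contribution in that rank).

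I do not expect any serious obstacle here; the only thing that requires care is checking that $\bar{N}_l$ is both well-defined and injective, and the rest is a one-line iteration. A complementary sanity check — which I would mention but not rely on — is that the same statement can be read directly off the Jordan normal form: each Jordan block of size $s$ contributes exactly one basis vector to $K_j/K_{j-1}$ for every $j \leq s$ and nothing thereafter, so $d_j$ equals the number of Jordan blocks of size at least $j$, which is manifestly non-increasing in $j$.
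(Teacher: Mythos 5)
Your proof is correct and follows essentially the same route as the paper: both arguments rest on exhibiting an injective linear map out of the quotient $\text{Ker}\,(M-\lambda\mathbb{I})^l/\text{Ker}\,(M-\lambda\mathbb{I})^{l-1}$, the paper using $w \mapsto (M-\lambda\mathbb{I})^{l-1}w$ into $\text{Ker}\,(M-\lambda\mathbb{I})$ while you use the one-step map $w \mapsto (M-\lambda\mathbb{I})w$ into the next quotient down. Your single-step variant is marginally tidier, since iterating $d_l \leq d_{l-1}$ yields the second, stronger assertion immediately, whereas the paper only remarks that it ``can be proven similarly.''
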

\begin{proof}
For any vector $w\in \text{Ker}\, (M-\lambda \mathbb{I})^l$, it is obvious that $(M-\lambda \mathbb{I})^{l-1} w\in \text{Ker}\, (M-\lambda \mathbb{I})$ by definition. However, some of the vectors in $\text{Ker}\, (M-\lambda \mathbb{I})^l$ only fulfil this property because $(M-\lambda \mathbb{I})^{l-1} w=0$ and should be disregarded. This means that the vector space $\text{Ker}\, (M-\lambda \mathbb{I})^l / \text{Ker}\, (M-\lambda \mathbb{I})^{l-1}$ is isomorphic to a subset of $(M-\lambda \mathbb{I}) $, which gives us the previous result by comparing their dimensions.

The second part of the theorem can be proven similarly.
\end{proof}

We should stress again that, despite being linearly independent, the generalised eigenvectors of a non-Hermitian matrix do not have to be orthogonal. We can recover a notion of orthogonality of eigenvectors, but it requires us to consider the generalised eigenvectors of the matrix $M^\dagger$. In the literature, regular eigenvectors are usually called \emph{right eigenvectors} of $M$ while the eigenvectors of $M^\dagger$ are called \emph{left eigenvectors} of $M$. Similar names exist for generalised eigenvectors. This terminology comes from the relation
\begin{equation}
	M^\dagger \hat{v}_{i,\alpha} = \lambda^*_{i} \hat{v}_{i,\alpha} \Longleftrightarrow \hat{v}_{i,\alpha}^\dagger M=\lambda_{i} \hat{v}_{i,\alpha}^\dagger \ .
\end{equation}
Left and right eigenvectors coincide for Hermitian matrices, but that is not the case for non-Hermitian ones. In fact, the left and right eigenvectors associated with the same Jordan block are orthogonal. This property is usually called \emph{self-orthogonality}, and it can be used to find exceptional points.

By introducing the concept of left eigenvectors, we can generalise the orthogonality relations that hold for Hermitian matrices to the following \emph{biorthogonality} relations
\begin{theorem}
	The left and right generalised eigenvectors of a square matrix can be chosen such that
	\begin{equation}
		(\hat{v}_{i,\alpha}^{(n_{i,\alpha}+1-p)})^\dagger \cdot v_{j,\beta}^{(q)}=\delta_{ij} \delta_{\alpha \beta} \delta_{pq} \ ,
	\end{equation}
	where $n_{j,\alpha}$ is the size of the Jordan block associated with the $\alpha$-th geometric multiplicity of $\lambda_j$.
\end{theorem}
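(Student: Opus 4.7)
The plan is to derive the biorthogonality directly from a Jordan decomposition of $M$, without any induction on Jordan chains. First I would write $M=PJP^{-1}$ with $J$ the Jordan normal form, ordering the columns of $P$ so that within the $\alpha$-th Jordan block of eigenvalue $\lambda_i$ the $q$-th column is the right generalised eigenvector $v_{i,\alpha}^{(q)}$. Taking the Hermitian conjugate gives $M^\dagger=(P^\dagger)^{-1}J^\dagger P^\dagger$, so if I set $Q:=(P^\dagger)^{-1}$ I automatically have the matrix identity $Q^\dagger P=\mathbb{I}$, which is already the right shape for a biorthogonality relation.

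The next step is to match the columns of $Q$ with left generalised eigenvectors of $M$ in the correct order. Since a Jordan block of $J$ has $1$'s on the superdiagonal, the conjugated block $J^\dagger$ carries $1$'s on the subdiagonal. Reading off $M^\dagger Q = Q J^\dagger$ block by block, the columns $q_1,\dots,q_n$ of $Q$ within a given block (of size $n=n_{i,\alpha}$) satisfy $(M^\dagger-\lambda_i^* \mathbb{I})q_n=0$ and $(M^\dagger-\lambda_i^* \mathbb{I})q_k=q_{k+1}$ for $k<n$. Comparing with the definition (\ref{geneigenvdefinition}), the Jordan chain of $M^\dagger$ is traversed from the bottom of the block upwards, so the natural identification is $\hat{v}_{i,\alpha}^{(p)}=q_{\,n_{i,\alpha}+1-p}$.

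With this labelling in place, the identity $Q^\dagger P=\mathbb{I}$, read block by block, is exactly the claimed relation
\begin{equation}
\bigl(\hat{v}_{i,\alpha}^{(n_{i,\alpha}+1-p)}\bigr)^\dagger\, v_{j,\beta}^{(q)}=\delta_{ij}\,\delta_{\alpha\beta}\,\delta_{pq}\,.
\end{equation}
The phrase \emph{can be chosen such that} in the statement accommodates the familiar non-uniqueness of a Jordan chain: different choices of $P$ (related by right multiplication with a matrix commuting with $J$) produce different biorthogonal systems, but each one is biorthogonal as soon as $Q$ is defined as $(P^\dagger)^{-1}$.

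The main obstacle I anticipate is bookkeeping rather than substance: one must carefully track the order-reversal inside each Jordan block induced by the Hermitian conjugate of the nilpotent shift, which is precisely what motivates pairing the $p$-th left vector with the $(n_{i,\alpha}+1-p)$-th right vector in the statement. Once this is done, both the cross-block/cross-eigenvalue orthogonality and the within-block biorthogonality follow simultaneously from the off-diagonal and diagonal entries of $Q^\dagger P=\mathbb{I}$, with no separate induction or recurrence needed.
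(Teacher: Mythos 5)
Your proposal is correct, and it takes a genuinely different route from the paper. The paper proves the statement incrementally: it first shows that left and right eigenvectors with distinct eigenvalues are orthogonal, extends this by induction to generalised eigenvectors using the recursions obtained from acting with $(M-\lambda_j\mathbb{I})$ on both sides of the inner product, then handles equal eigenvalues via the identity $(\hat{v}^{(p)})^\dagger (M-\lambda\mathbb{I})^{k+l} v^{(q)}=(\hat{v}^{(p-k)})^\dagger v^{(q-l)}$, and finally invokes a Gram--Schmidt-like adjustment (only sketched) when the geometric multiplicity exceeds one. You instead package the entire claim into the single matrix identity $Q^\dagger P=\mathbb{I}$ with $Q=(P^\dagger)^{-1}$, where the only real work is verifying that conjugating the Jordan block reverses the order of the chain, forcing the pairing $\hat{v}^{(p)}_{i,\alpha}=q_{\,n_{i,\alpha}+1-p}$ — exactly the index shift appearing in the statement. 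Your argument is shorter and, notably, handles the case of geometric multiplicity greater than one with no extra effort, which is precisely the step the paper leaves vague; the trade-off is that it presupposes the Jordan decomposition $M=PJP^{-1}$ (which the paper has already established at that point, so this is legitimate), whereas the paper's inductive route exposes which orthogonality relations hold automatically for any choice of chains versus which require a normalisation, information that the paper reuses later (e.g.\ in Corollary \ref{lowesttohighest}).
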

\begin{proof}
On the one hand, we have that a left and a right eigenvector associated with two different eigenvalues are orthogonal. The proof is similar to the proof of the orthogonality of eigenvectors in Hermitian matrices
\begin{displaymath}
0=\hat{v}_{i,\alpha}^\dagger M  v_{j,\beta} -\hat{v}_{i,\alpha}^\dagger M v_{j,\beta}= (\lambda_i - \lambda_j) \hat{v}_{i,\alpha}^\dagger \cdot v_{j,\beta} \ ,
\end{displaymath}
as we are considering $\lambda_i \neq \lambda_j$, the only possibility is that $\hat{v}_{i,\alpha} $ and $ v_{j,\beta}$ are orthogonal.

Now, we need to generalise this result to encompass generalised eigenvectors. To that end, we need the following two relations
\begin{gather*}
	(\lambda_i - \lambda_j) (\hat{v}_{i,\alpha}^{(1)})^\dagger  \cdot v_{j,\beta}^{(p)}= (\hat{v}_{i,\alpha}^{(1)})^\dagger (M-\lambda_j \mathbb{I} ) v_{j,\beta}^{(p)}=(\hat{v}_{i,\alpha}^{(1)})^\dagger  \cdot v_{j,\beta}^{(p-1)} \ , \\
	(\hat{v}_{i,\alpha}^{(q-1)})^\dagger \cdot v_{j,\beta}^{(p)}+ (\lambda_i - \lambda_j) (\hat{v}_{i,\alpha}^{(q)})^\dagger \cdot v_{j,\beta}^{(p)}= (\hat{v}_{i,\alpha}^{(q)})^\dagger (M-\lambda_j \mathbb{I} ) v_{j,\beta}^{(p)}=(\hat{v}_{i,\alpha}^{(q)})^\dagger \cdot v_{j,\beta}^{(p-1)} \ .
\end{gather*}
With these relations, we can show by induction that a set of left generalised eigenvectors are orthogonal to a set of right generalised eigenvectors if they are associated with different eigenvalues $\lambda_i \neq \lambda_j$.

The next step is to consider what happens with generalised eigenvectors associated with the same eigenvalue. For that, we have to consider the following relation
\begin{displaymath}
	(\hat{v}_{j,\alpha}^{(p)})^\dagger (M-\lambda_j \mathbb{I})^{k+l} v_{j,\beta}^{(q)}=(\hat{v}_{j,\alpha}^{(p)})^\dagger  \cdot v_{j,\beta}^{(q-k-l)}=(\hat{v}_{j,\alpha}^{(p-k-l)})^\dagger  \cdot v_{j,\beta}^{(q)}=(\hat{v}_{j,\alpha}^{(p-k)})^\dagger  \cdot v_{j,\beta}^{(q-l)} \ .
\end{displaymath}
If we consider the case of $k=p-1$ and $l>0$, we have that an eigenvector is orthogonal to all the generalised eigenvectors except for the ones of maximal rank. If the geometric multiplicity of the eigenvalue is one, we are done: we have proven that the left eigenvector $\hat{v}_j^{(1)}$ is orthogonal to all right generalised eigenvectors except the one of maximal rank $\hat{v}_j^{(n_j)}$. As this set of generalised eigenvectors span the full vector space and $\hat{v}_j^{(1)}$ is non-vanishing, $(\hat{v}_j^{(1)})^\dagger {v}_j^{(n_j)}\neq 0$, and can be set to 1. Furthermore, by the above equation, we can extend this result to $(\hat{v}_j^{(1+l)})^\dagger {v}_j^{(n_j-l)}=1$, with the other possible combination giving us zero. If the geometric multiplicity is different from one, the proof is similar but requires a process akin to Gram-Schmidt orthogonalisation that takes advantage of the fact that the linear combination of a generalised eigenvector of rank $p$ and a generalised eigenvector of rank $1$ associated with the same eigenvalue is also a generalised eigenvector of rank $p$.
\end{proof}

We should emphasise that biorthogonality does not imply that $\hat{v}^{(n_{i,\alpha}+1-p)}_{i,\alpha}=v^{(p)}_{i,\alpha}$ due to the generalised eigenvectors not being an orthogonal basis. However, there are two important exceptions
\begin{corollary} \label{lowesttohighest}
	The following equalities between left and right generalised eigenvectors hold: $\hat{v}^{(n_{i,\alpha})}_{i,\alpha}=v^{(1)}_{i,\alpha}$ and $\hat{v}^{(1)}_{i,\alpha}=v^{(n_{i,\alpha})}_{i,\alpha}$.
\end{corollary}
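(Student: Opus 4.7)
The plan is to specialise the biorthogonality relation of the preceding theorem to the two extreme values of $p$, namely $p=1$ and $p=n_{i,\alpha}$, and to exploit the resulting singular overlap patterns together with the freedom inherent in the construction of generalised eigenvectors. Setting $p=1$ collapses the biorthogonality to $(\hat{v}^{(n_{i,\alpha})}_{i,\alpha})^\dagger\cdot v^{(q)}_{j,\beta}=\delta_{ij}\delta_{\alpha\beta}\delta_{1q}$, so $\hat{v}^{(n_{i,\alpha})}_{i,\alpha}$ pairs non-trivially with a unique right generalised eigenvector, namely $v^{(1)}_{i,\alpha}$. Analogously, $p=n_{i,\alpha}$ gives $(\hat{v}^{(1)}_{i,\alpha})^\dagger\cdot v^{(q)}_{j,\beta}=\delta_{ij}\delta_{\alpha\beta}\delta_{q,n_{i,\alpha}}$, singling out $v^{(n_{i,\alpha})}_{i,\alpha}$ as the unique partner of the true left eigenvector $\hat{v}^{(1)}_{i,\alpha}$.

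For the first equality, I would verify that $v^{(1)}_{i,\alpha}$ itself is a legitimate candidate for the top of a left Jordan chain of length $n_{i,\alpha}$, i.e.\ that $(M^\dagger-\lambda_i^*\mathbb{I})^{n_{i,\alpha}}v^{(1)}_{i,\alpha}=0$ while $(M^\dagger-\lambda_i^*\mathbb{I})^{n_{i,\alpha}-1}v^{(1)}_{i,\alpha}\neq 0$. The vanishing follows from the dimensional matching of Jordan block sizes of $M$ and $M^\dagger$ recorded in Theorem~\ref{dimker}, while the non-vanishing is forced by the biorthogonality pairing above (if it vanished, the inner product with $v^{(1)}_{i,\alpha}$ would also have to vanish). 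Once $v^{(1)}_{i,\alpha}$ is identified as a valid maximal-rank left generalised eigenvector, the freedom in the left Jordan chain (addition of lower-rank vectors plus overall rescaling) is used to set $\hat{v}^{(n_{i,\alpha})}_{i,\alpha}=v^{(1)}_{i,\alpha}$ exactly. The second equality $\hat{v}^{(1)}_{i,\alpha}=v^{(n_{i,\alpha})}_{i,\alpha}$ is then read off from the chain relation $(M^\dagger-\lambda_i^*\mathbb{I})^{n_{i,\alpha}-1}\hat{v}^{(n_{i,\alpha})}_{i,\alpha}=\hat{v}^{(1)}_{i,\alpha}$: the left-hand side, once expanded with $\hat{v}^{(n_{i,\alpha})}_{i,\alpha}=v^{(1)}_{i,\alpha}$ and evaluated using the Jordan relations of the right chain, lands precisely on $v^{(n_{i,\alpha})}_{i,\alpha}$ up to the normalisation already fixed by biorthogonality.

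The hard part will be controlling the normalisations simultaneously. The biorthogonality theorem fixes only the inner products $(\hat{v}^{(n_{i,\alpha}+1-p)}_{i,\alpha})^\dagger\cdot v^{(q)}_{i,\alpha}=\delta_{pq}$ and not the vectors themselves within the ambient space; asserting both equalities of the corollary simultaneously requires showing that pinning down one endpoint of the left chain to coincide with an endpoint of the right chain automatically propagates, via the shift maps $(M-\lambda_i\mathbb{I})$ and $(M^\dagger-\lambda_i^*\mathbb{I})$, into the matching coincidence at the opposite endpoint. The consistency of this propagation is what makes $p=1$ and $p=n_{i,\alpha}$ the only two privileged values of $p$ at which the general inequality $\hat{v}^{(n_{i,\alpha}+1-p)}_{i,\alpha}\neq v^{(p)}_{i,\alpha}$ can be relaxed to equality.
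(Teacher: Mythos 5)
There is a genuine gap at the heart of your argument. Your first equality rests on the claim that the right eigenvector automatically sits at the top of a left Jordan chain of the same length, i.e.\ that $(M^\dagger-\lambda_i^*\mathbb{I})^{n_{i,\alpha}}v^{(1)}_{i,\alpha}=0$ while $(M^\dagger-\lambda_i^*\mathbb{I})^{n_{i,\alpha}-1}v^{(1)}_{i,\alpha}\neq 0$, and you attribute the vanishing to Theorem~\ref{dimker}. That theorem only compares dimensions of $\text{Ker}\,(M-\lambda\mathbb{I})^l$ for $M$ itself; it says nothing about where a \emph{specific} right eigenvector sits relative to the Jordan structure of $M^\dagger$. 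In fact the claim fails for a generic choice of Jordan basis: expanding $v^{(1)}_{i,\alpha}$ in the left generalised eigenbasis, the coefficients are overlaps with the (non-orthogonal) right basis, so $v^{(1)}_{i,\alpha}$ generically has components along left chains of \emph{other} eigenvalues and along \emph{longer} chains of the same eigenvalue (e.g.\ when a second Jordan block with the same $\lambda_i$ has $n_{i,\beta}>n_{i,\alpha}$), and then $(M^\dagger-\lambda_i^*\mathbb{I})^{n_{i,\alpha}}v^{(1)}_{i,\alpha}\neq 0$. The whole content of the corollary is that the residual freedom in the construction — adding rank-1 vectors to higher-rank generalised eigenvectors, consistently with (\ref{geneigenvdefinition}) — can be used to \emph{make} this hold; you have assumed the conclusion rather than exercised that freedom. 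The paper's proof does exactly this: it expands $\hat{v}^{(n_{i,\alpha}+1-p)}_{i,\alpha}$ in the right generalised eigenbasis, notes that biorthogonality alone cannot force the coefficients to a Kronecker delta because the right basis is not orthogonal, and then shows that for $p=1$ the eigenvector-addition freedom collapses the coefficients, giving $\hat{v}^{(n_{i,\alpha})}_{i,\alpha}=v^{(1)}_{i,\alpha}$.

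Your route to the second equality is also unjustified as written: from $(M^\dagger-\lambda_i^*\mathbb{I})^{n_{i,\alpha}-1}\hat{v}^{(n_{i,\alpha})}_{i,\alpha}=\hat{v}^{(1)}_{i,\alpha}$ you propose to substitute $\hat{v}^{(n_{i,\alpha})}_{i,\alpha}=v^{(1)}_{i,\alpha}$ and evaluate ``using the Jordan relations of the right chain'', but those relations control the action of $(M-\lambda_i\mathbb{I})$ on right vectors, not the action of $(M^\dagger-\lambda_i^*\mathbb{I})$ on them, so nothing forces the result to be $v^{(n_{i,\alpha})}_{i,\alpha}$. The paper instead obtains the second relation by repeating the same expansion argument with the roles of left and right generalised eigenvectors exchanged. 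To repair your proposal you would need to replace the appeal to Theorem~\ref{dimker} by an explicit basis adjustment (as in the paper), and derive the second equality by the dual expansion rather than by propagation along the chain.
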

\begin{proof}
	Because the generalised eigenvector form a basis, we can expand a left generalised eigenvector in terms of right generalised eigenvectors
	\begin{displaymath}
		\hat{v}_{i,\alpha}^{(n_{i,\alpha}+1-p)}=\sum_{k,\gamma,l} m_{i,k,\alpha,\gamma,p,l} v^{(l)}_{k,\gamma} \ .
	\end{displaymath}
	If we apply biorthogonality, we find that the coefficients $m$ have so satisfy the relation
	\begin{displaymath}
		\delta_{ij} \delta_{\alpha \beta} \delta_{pq}= \sum_{k,\gamma,l} m_{i,k,\alpha,\gamma,p,l} (v^{(l)}_{k,\gamma} )^\dagger v^{(p)}_{j,\beta} \ .
	\end{displaymath}
	As the generalised eigenvectors are not orthogonal, we cannot set $m$ to a Kronecker delta.
	
	The only exception to the above statement are the generalised eigenvectors of rank 1, that is, the eigenvectors. As the linear combination of a generalised eigenvector of rank $q$ and an eigenvector is a generalised eigenvector of rank $q$ (notice that this will require to linearly combine the generalised eigenvector of rank $q+1$, if it exists, with a generalised eigenvector of rank 2, so equation (\ref{geneigenvdefinition}) keeps holding for this generalised eigenvector), we can linearly combine generalised eigenvectors in such a way that they are orthogonal to all the generalised eigenvectors of rank 1. Using this result, we can set $m$ to a Kronecker delta when $p=1$, meaning that $\hat{v}^{(n_{i,\alpha})}_{i,\alpha}=v^{(1)}_{i,\alpha}$.
	
	Expanding a right generalised eigenvector in terms of left generalised eigenvectors and repeating the same steps gives us the other relation.
\end{proof}

\begin{corollary}
	Any square complex matrix $M$ can be expressed as
\begin{equation}
	M=\sum_{j=1}^J \sum_{\alpha=1}^{m^g_j} \left( \sum_{p=1}^{n_{j\alpha}} \lambda_jv_{j,\alpha}^{(p)}  \left(\hat{v}_{j,\alpha}^{(n_{j\alpha}+1-p)}\right)^\dagger +\sum_{p=1}^{n_{j\alpha}-1} v_{j\alpha}^{(p+1)} \left(\hat{v}_{j,n_{\alpha}}^{(n_{j\alpha}+1-p)}\right)^\dagger \right)
\end{equation}
where $J$ is the total number of different eigenvalues, $m^g_j$ is the geometric multiplicity of the eigenvalue $\lambda_j$ and $n_{j\alpha}$ is the size of the Jordan cell associated with the $\alpha$-th eigenvector associated with $\lambda_j$.
\end{corollary}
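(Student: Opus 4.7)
The plan is to exploit the biorthogonality theorem that was just proved to construct a resolution of the identity from the left and right generalised eigenvectors, then evaluate $M$ against it and use the defining relation (\ref{geneigenvdefinition}) of the Jordan chains to read off the two terms in the claim.

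First I would recall that, since $M$ is a complex square matrix, a Jordan normal form exists and the collection $\{v_{j,\alpha}^{(p)}\}_{j,\alpha,p}$ of right generalised eigenvectors is a basis of the ambient vector space. The biorthogonality relation
\begin{equation*}
\bigl(\hat{v}_{i,\alpha}^{(n_{i,\alpha}+1-p)}\bigr)^\dagger v_{j,\beta}^{(q)} = \delta_{ij}\,\delta_{\alpha\beta}\,\delta_{pq}
\end{equation*}
then identifies $\bigl(\hat{v}_{j,\alpha}^{(n_{j\alpha}+1-p)}\bigr)^\dagger$ as the dual covector to $v_{j,\alpha}^{(p)}$, so that
\begin{equation*}
\mathbb{I} \;=\; \sum_{j=1}^{J}\sum_{\alpha=1}^{m^g_j}\sum_{p=1}^{n_{j\alpha}} v_{j,\alpha}^{(p)} \bigl(\hat{v}_{j,\alpha}^{(n_{j\alpha}+1-p)}\bigr)^\dagger .
\end{equation*}
Next I would write $M = M\cdot\mathbb{I}$ and push $M$ through the sum using the Jordan-chain action
\begin{equation*}
M v_{j,\alpha}^{(p)} = \lambda_j\, v_{j,\alpha}^{(p)} + v_{j,\alpha}^{(p-1)} \quad (p\geq 2),\qquad M v_{j,\alpha}^{(1)} = \lambda_j\, v_{j,\alpha}^{(1)} ,
\end{equation*}
which follows at once from (\ref{geneigenvdefinition}). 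Splitting the resulting double sum into its two types of contributions yields a piece proportional to $\lambda_j$, matching the first term in the claim, plus a piece containing the shifted vectors $v_{j,\alpha}^{(p-1)}$; a trivial relabelling $p\mapsto p+1$ on the latter, together with the observation that the term originally at $p=1$ drops out because $(M-\lambda_j\mathbb{I})v_{j,\alpha}^{(1)}=0$, produces exactly the second term, in which the summation range shrinks to $1\leq p\leq n_{j\alpha}-1$.

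The main obstacle is not conceptual but purely one of indexing: one must keep careful track of the shift $n_{j\alpha}+1-p$ inherited from the biorthogonality convention and check that the reindexing of the nilpotent part yields the labels written in the statement. A consistency check on a single Jordan block of size two shows that $M=\lambda\,\mathbb{I}+v^{(1)}(\hat v^{(1)})^\dagger$, in agreement with the shape of the formula once the inner index is read correctly. Once the bookkeeping is fixed, the identity of the corollary is immediate, and the same argument works equally well starting from $M=\mathbb{I}\cdot M$ and using the dual relation $(\hat v_{i,\alpha}^{(n)})^\dagger M = \lambda_i (\hat v_{i,\alpha}^{(n)})^\dagger + (\hat v_{i,\alpha}^{(n-1)})^\dagger$, which provides a useful cross-check.
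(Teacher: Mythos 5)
Your proof is correct and is exactly the argument the paper has in mind: the paper's own ``proof'' is the one-line remark that the corollary is immediate from the definition of generalised eigenvectors and the biorthogonality property, which is precisely the resolution-of-identity computation you carry out. One remark: your (correct) reindexing gives the nilpotent part as $\sum_{p=1}^{n_{j\alpha}-1} v_{j,\alpha}^{(p)}\bigl(\hat v_{j,\alpha}^{(n_{j\alpha}-p)}\bigr)^\dagger$, whereas the term printed in the statement, $v_{j\alpha}^{(p+1)}\bigl(\hat v_{j,\alpha}^{(n_{j\alpha}+1-p)}\bigr)^\dagger$, acts on $v^{(q)}$ as $v^{(q)}\mapsto v^{(q+1)}$ and therefore represents the \emph{transposed} nilpotent; your size-two check $M=\lambda\,\mathbb{I}+v^{(1)}(\hat v^{(1)})^\dagger$ already exposes this index typo in the corollary as stated.
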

The proof of this corollary is immediate using the definition of generalised eigenvectors and the biorthogonality property.

In the context of non-Hermitian matrices there exist two different definitions for the norm of an eigenvector. On the one hand, we can define it using the usual inner product in a complex vector space, that is, $|v_{i,\alpha}^{(p)}|=\left( v_{i,\alpha}^{(p)} \right)^\dagger \cdot v_{i,\alpha}^{(p)}$. This is the definition we will be using in this article. On the other hand, we can define it using left and right eigenvectors, $|v_{i,\alpha}^{(p)}|_{lr}=\left( \hat{v}_{i,\alpha}^{(p)}\right)^\dagger \cdot v_{i,\alpha}^{(p)}$. Although we will not use this definition in the article, it is extremely useful to find exceptional points of non-Hermitian matrices, as this norm is non-vanishing for diagonalisable matrices but it vanishes for non-diagonalisable matrices due to self-orthogonality.

\section{Exceptional point with distinguishable Jordan blocks}

In this section, we will consider the case where several eigenvectors coalesce in such a way that the Hamiltonian at the exceptional point is composed of Jordan blocks with pairwise different eigenvalues. We will show first the particular case of the Hamiltonian associated with an eclectic spin chain of length 3 and two excitations with different flavour. All the Jordan blocks in this case are of size two, so a slight modification of proposition 4.3 from \cite{Gainutdinov:2016pxy} is enough to compute everything. After this example, we will present a method that works for any defective matrix with  Jordan cells of any size and pairwise different eigenvalues.

\subsection{A specific example: three-sites chain with three different excitations}\label{3sites}

As a warm-up example, we will consider the simplest case where the strongly twisted limit of the Hamiltonian (\ref{eq:twistedXXX}) becomes non-diagonalisable. This corresponds to the spin chain of length 3 with two excitations of different flavour, i.e. $L=3$, $M=2$, $K=1$. As we commented in the previous section, the plan is to compute the eigenvectors of the twisted $\mathfrak{su}(3)$ Hamiltonian $\mathbf{\tilde{H}}_{(q_1,q_2,q_3)}$ and study how they behave at large values of the twist parameters.

The general wavefunction on this sector can be written as
\begin{equation}
	\psi=\sum_{\sigma\in \mathcal{S}_3} \psi(\sigma) |\sigma (1) \sigma (2) \sigma (3)\rangle \ ,
\end{equation}
where $\mathcal{S}_3$ is the symmetric group over a set of three elements. The twisted Hamiltonian (\ref{eq:twistedXXX}) transforms even elements of $\mathcal{S}_3$ into odd elements of $\mathcal{S}_3$. Thus, it can be represented as the following anti-block-diagonal matrix
\begin{equation}
	\mathbf{\tilde{H}}_{(q_1,q_2,q_3)}=\left( \begin{array}{ccc;{2pt/2pt}ccc}
		0 & 0 & 0 & q_3 & q_1 & q_2 \\
		0 & 0 & 0 & q_2 & q_3 & q_1 \\
		0 & 0 & 0 & q_1 & q_2 & q_3 \\ \hdashline[2pt/2pt]
		\frac{1}{q_3} & \frac{1}{q_2} & \frac{1}{q_1} & 0 & 0 & 0 \\
		\frac{1}{q_1} & \frac{1}{q_3} & \frac{1}{q_2} & 0 & 0 & 0 \\
		\frac{1}{q_2} & \frac{1}{q_1} & \frac{1}{q_3} & 0 & 0 & 0 \\
	\end{array} \right) \ .
\end{equation}
This matrix can be diagonalised for generic values of the twist parameters, with eigenvalues
\begin{align}
	\lambda_1^\pm &= \pm \sqrt{\sum_{i,j} \frac{q_i}{q_j}} \ , \\
	\lambda_2^\pm &= \pm \frac{1}{\sqrt{2}} \sqrt{9+\frac{(q_1 - q_2)(q_2-q_3)(q_3-q_1)}{q_1 q_2 q_3}-\sum_{i,j} \frac{q_i}{q_j}} \ , \\
	\lambda_3^\pm &= \pm \frac{1}{\sqrt{2}} \sqrt{9-\frac{(q_1 - q_2)(q_2-q_3)(q_3-q_1)}{q_1 q_2 q_3}-\sum_{i,j} \frac{q_i}{q_j}} \ .
\end{align}
Notice that the sums include the case $i=j$. Due to the rescaling needed to keep the Hamiltonian finite in the strongly twisted limit, these three eigenvalues vanish in said limit, $E_i^\pm=\lim_{\epsilon\rightarrow 0} \epsilon \lambda_i^\pm = 0$.

Let us consider the eigenvectors associated with the eigenvalues $\lambda_1^\pm$, as they are the simplest ones. They take the following form when properly normalised
\begin{equation}
	v_1^\pm=\sqrt{\frac{(q_1 + q_2 + q_3)^2}{2(q_1 + q_2 + q_3)^2+ 3(\lambda_1^\pm)^2}} \left( \frac{q_1 + q_2 + q_3}{\lambda_1^\pm} , \frac{q_1 + q_2 + q_3}{\lambda_1^\pm} , \frac{q_1 + q_2 + q_3}{\lambda_1^\pm}, 1, 1, 1 \right) \ ,
\end{equation}
with all the $\pm$ signs correlated. Substituting $q_i \rightarrow \frac{\xi_i}{\epsilon}$, it is easy to check that both vectors become the same up to a sign in the limit of large twist
\begin{equation}
	\lim_{\epsilon\rightarrow 0} v_1^\pm =\frac{\pm 1}{\sqrt{3}} \left( 1,1,1,0,0,0 \right) \ .
\end{equation}
A similar situation occurs for the other four eigenvectors: they become pairwise proportional in the limit of vanishing $\epsilon$. This happens because the Hamiltonian becomes strictly upper triangular in this limit and, thus, defective. In fact, the Hamiltonian becomes similar to a matrix composed of three Jordan blocks with eigenvalues equal to zero, and the three vectors we obtain through this procedure coincide with the true eigenvectors of this matrix.

After taking the large twist limit, we lose access to half of our Hilbert space, as this example is a particular case of the coalescence of eigenvectors shown in \cite{StaudacherAhn}. We will argue below that this is not strange because the three remaining vectors of the Hilbert space are not eigenvectors of the strongly twisted Hamiltonian, but generalised eigenvectors of rank 2. Then, this begs the question whether we can extract these generalised eigenvectors at large twist from the eigenvectors at finite twist.

A similar situation was observed in the context of open spin chains with $U_q (\mathfrak{sl}(2))$ symmetry, which become defective when $q$ becomes a root of unity. In that case, it was shown that the generalised eigenvectors can be recovered \cite{Gainutdinov:2016pxy}. To do so, the authors did not only have to consider the limit of the eigenvectors of the Hamiltonian, but also the limit of the eigenvectors after subtracting their leading contribution and multiplying by the appropriate factor to make the limit non-vanishing. Although the method proposed in that article works for the case of Jordan blocks of size two, it has some issues if we try to extend it for larger Jordan bocks. We will show that the correct method involves the difference of two eigenvectors instead. For the example we are considering here, let us take again the two simplest eigenvectors $v_1^\pm$ and compute the (normalised) linear combination
\begin{equation}
	\frac{v_1^+ + v_1^-}{|v_1^+ + v_1^-|}=\frac{1}{\sqrt{3}} \left( 0,0,0,1,1,1 \right) \ .
\end{equation}
Although in this particular case all dependence on the twist parameters disappears after normalising, this will not happen in general. We can check that the vector we have obtained through this procedure is linearly independent of the eigenvector $(1,1,1,0,0,0)$ we obtained previously. Not only that, but we can also check that it is the generalised eigenvector of rank 2 of the strongly twisted Hamiltonian associated with it, up to a scalar factor. If similar computations are done for the remaining coalescing pairs of eigenvectors, we will find the two remaining generalised eigenvectors.

As we said above, in this section we will focus on the case of \emph{distinguishable Jordan blocks}. By that, we mean that all the eigenvalues have geometric multiplicity one. We should point out that the example we have studied might not seem of this type, as the three Jordan blocks have vanishing energies, but it is. Although we cannot distinguish the three Jordan blocks through the energy, they are associated with different values of the total (twisted) momentum operator. The exponential of the momentum operator is the shift operator $U$ we introduced previously. We can check that it takes different values in the three different Jordan blocks. In addition, $U$ commutes with the Hamiltonian $\mathbf{\hat{H}}$, so vectors with different momenta cannot mix and the three Jordan blocks are effectively distinguishable.

\subsection{The defective limit of a diagonalisable matrix: distinguishable Jordan blocks}
\label{distinguishableJB}

Although the observation that generalised eigenvectors could be obtained as the next-to-leading order correction of coalescing eigenvectors when we compute the defective limit of a matrix is not completely new, there does not seem to be a first-principle justification for it in the literature. In the following lines we plan to show that the reason why the (apparently ad hoc) method from  \cite{Gainutdinov:2016pxy} works is not accidental, but it is based on linear algebra and analyticity results.

%As we said, in this section we will focus on the case of \emph{distinguishable Jordan blocks}. By that, we mean that all the eigenvalues have geometric multiplicity one. We should point out that the example we have studied is not {\color{red}of} this type, as the three Jordan blocks have vanishing energies, but the results from this section will still apply. This happens because we cannot distinguish the three Jordan blocks through the energy, but they are associated with different values of the total (twisted) momentum operator. The exponential of the momentum operator is the shift operator $U$ we introduced previously. We can check that it takes different values in the three different Jordan blocks. In addition, $U$ commutes with the Hamiltonian $\mathbf{\hat{H}}$, so vectors with different momenta cannot mix and the three Jordan blocks are distinguishable.

Let us consider a $N\times N$ complex matrix that depends on a parameter $\epsilon$, $M(\epsilon)$. We will assume that such a matrix is diagonalisable for all values of $\epsilon$ except for a finite set of values. In particular, we will assume that the matrix is not diagonalisable for $\epsilon=0$, where it becomes a defective matrix similar to a single Jordan block of size $N$. Outside its exceptional points, we will denote the eigenvalues and eigenvectors of $M(\epsilon )$ by $\lambda_i$ and $v_i$ respectively. Although they depend explicitly on $\epsilon$, we will not explicitly write that dependence most of the time as it will clutter our expressions.

Instead of computing the eigenvalue, eigenvector and generalised eigenvectors of $M(0)$ directly, we want to see if we can obtain the same information by computing the $\epsilon\rightarrow 0$ limit of the eigenvalues and eigenvectors of the diagonalisable matrix $M(\epsilon)$. It is immediate to see that all the eigenvalues have to become equal, because $M(0)$ is similar to a single Jordan block.

It is immediately clear that a non-diagonalisable matrix has to have all its eigenvalues equal. As we are assuming that $M(0)$ is similar to a single Jordan block, we have that
\begin{equation}
\lim_{\epsilon\rightarrow 0} \lambda_i = \lim_{\epsilon\rightarrow 0} \lambda_j \ ,
\end{equation}
for any two eigenvalues $\lambda_i$ and $\lambda_j$ of $M(\epsilon)$.

The fact that all eigenvalues approach the same value as the matrix becomes defective has important implications for the eigenvectors. If we consider the limit of the eigenvector equation, we find that
\begin{equation}
	0=\lim_{\epsilon\rightarrow 0} \left[(M(\epsilon)-\lambda_i \mathbb{I}) v_i\right]= \left[\lim_{\epsilon\rightarrow 0}(M(\epsilon)-\lambda_i \mathbb{I})\right]  \left[ \lim_{\epsilon\rightarrow 0}v_i\right]= \left[M(0)- \mathbb{I}\lim_{\epsilon\rightarrow 0} \lambda_i\right]  \left[ \lim_{\epsilon\rightarrow 0}v_i\right] \ . \label{important}
\end{equation}
As all the eigenvalues degenerate, the matrix $[M(0)- \mathbb{I}\lim_{\epsilon\rightarrow 0} \lambda_i]$ is independent of the index $i$. In addition, because $M(0)$ is similar to a Jordan block and $\lim_{\epsilon\rightarrow 0} \lambda_i$ is the eigenvalue associated with it, then $\lim_{\epsilon\rightarrow 0}v_i$ has to be the true eigenvector of the Jordan block. As there is just one true eigenvector per Jordan block, all the eigenvectors of the diagonalisable matrix $M(\epsilon)$ have to approach the same vector (up to a possible normalisation factor) when $\epsilon$ approaches zero. Therefore, the ``collapse of Bethe vectors'' described in \cite{StaudacherAhn} is just a natural consequence of the matrix becoming of Jordan form. Nevertheless, we should stress that it is only a sufficient condition. Although this has no consequence now, because the kernel of $[M(0)- \mathbb{I}\lim_{\epsilon\rightarrow 0} \lambda_i ]$ is one-dimensional, this will pose a problem when two or more Jordan blocks have the same eigenvalue.

After obtaining information about eigenvalues and eigenvectors, we may wonder if it is also possible to find the generalised eigenvectors by computing limits of eigenvectors. To answer this question, we first have to realise that a linear combination of eigenvectors fulfils the following equation
\begin{equation}
	[M(\epsilon) - \lambda_1 \mathbb{I} ][M(\epsilon) - \lambda_2 \mathbb{I} ] (\alpha_1 \, v_1 + \alpha_2 \, v_2)=0 \ ,
\end{equation}
for any constants $\alpha_1$ and $\alpha_2$, which may depend on $\epsilon$. This equation can be immediately generalised to more general linear combinations. Although it may seem weird to consider linear combinations of eigenvectors associated with different eigenvalues, we need to have in mind that said eigenvalues become identical in the limit we are interested in. In fact, when we compute the limit of the above equation, we find
%\begin{equation}
%	\left[M(0)- \mathbb{I}\lim_{\epsilon\rightarrow 0} \lambda_1\right] \left[M(0)- \mathbb{I}\lim_{\epsilon\rightarrow 0} \lambda_2\right]  \left[ \lim_{\epsilon\rightarrow 0} (\alpha_1 \, v_1 + \alpha_2 \, v_2) \right]=\left[M(0)- \mathbb{I}\lim_{\epsilon\rightarrow 0} \lambda_i \right]^2  \left[ \lim_{\epsilon\rightarrow 0} (\alpha_1 \, v_1 + \alpha_2 \, v_2) \right]=0 \ , \label{GEEquation}
%\end{equation}
\begin{multline}
	\left[M(0)- \mathbb{I}\lim_{\epsilon\rightarrow 0} \lambda_1\right] \left[M(0)- \mathbb{I}\lim_{\epsilon\rightarrow 0} \lambda_2\right]  \left[ \lim_{\epsilon\rightarrow 0} (\alpha_1 \, v_1 + \alpha_2 \, v_2) \right]= \\ =\left[M(0)- \mathbb{I}\lim_{\epsilon\rightarrow 0} \lambda_i \right]^2  \left[ \lim_{\epsilon\rightarrow 0} (\alpha_1 \, v_1 + \alpha_2 \, v_2) \right]=0 \ , \label{GEEquation}
\end{multline}
%\begin{align}
%	&\left[M(0)- \mathbb{I}\lim_{\epsilon\rightarrow 0} \lambda_1\right] \left[M(0)- \mathbb{I}\lim_{\epsilon\rightarrow 0} \lambda_2\right]  \left[ \lim_{\epsilon\rightarrow 0} (\alpha_1 \, v_1 + \alpha_2 \, v_2) \right]= \notag\\ &=\left[M(0)- \mathbb{I}\lim_{\epsilon\rightarrow 0} \lambda_i \right]^2  \left[ \lim_{\epsilon\rightarrow 0} (\alpha_1 \, v_1 + \alpha_2 \, v_2) \right]=0 \ , \label{GEEquation}
%\end{align}
where $\lambda_i$ is any of the eigenvalues of $M(\epsilon)$. This equation implies that the limit of the linear combination contains information regarding the generalised eigenvector of rank 1 (the true eigenvector) and the generalised eigenvector of rank 2. That is, we need to consider the limit of linear combinations of eigenvectors whose associated eigenvalues approach the same eigenvalue if we want to find information about generalised eigenvectors.

Despite equation~(\ref{GEEquation}) assuring us that the linear combination of eigenvectors contains information about the generalised eigenvector of rank 2, this information is not immediately accessible. If we consider general values of the coefficients $\alpha_i$, the linear combination is just a generalised eigenvector of rank 1. This happens because
\begin{equation}
	\lim_{\epsilon\rightarrow 0} [M(\epsilon) - \lambda_1 \mathbb{I} ] (\alpha_1 \, v_1 + \alpha_2 \, v_2)= \lim_{\epsilon\rightarrow 0} \alpha_2 \, (\lambda_2 - \lambda_1) \, v_2= 0 \ , \label{lceigenvector}
\end{equation}
where the last equality comes from the two eigenvalues coinciding at the exceptional point. Thus, the limit of the linear combination is proportional to the true eigenvector of $M(0)$. If we want to find the generalised eigenvector of rank 2, we need to find vectors that are part of Ker$\{\left[M(0)- \mathbb{I}\lim_{\epsilon\rightarrow 0} \lambda_i\right]^2 \}$ but not part of Ker$\{\left[M(0)- \mathbb{I}\lim_{\epsilon\rightarrow 0} \lambda_i\right] \}$. This means that we need to make the right-hand side of (\ref{lceigenvector}) non-vanishing, which can be accomplished by considering a coefficient $\alpha_2$ that diverges as $(\lambda_2 - \lambda_1)^{-1}$. Repeating the argument with $[M(\epsilon) - \lambda_2 \mathbb{I} ]$ we find that $\alpha_1$ has to diverge in the same fashion. Despite that, using $\alpha_i^{-1} \propto v_1 - v_2$ is equally effective, as both quantities vanish at the same rate,\footnote{In fact, by solving the eigenvalue equations for small but non-zero values of $\epsilon$, one can check that $[v_i - \lim_{\epsilon\rightarrow 0} v_i]\propto \lambda_i -\lim_{\epsilon\rightarrow 0} \lambda_i \propto \epsilon^{1/N}$. See, for example, appendix B of \cite{KGTP} for the case of a Jordan block of size 2. This corresponds to the first term of the Puiseux series we commented in the introduction.} and will prove more useful for later generalisation. For this vector to be well-defined, we need that $\lim_{\epsilon\rightarrow 0} (\lambda_2 - \lambda_1) (\alpha_1 \, v_1 + \alpha_2 \, v_2)=0$.

Inverting the argument, if we find two finite coefficients $\alpha_1$ and $\alpha_2$ such that they do not vanish in the $\epsilon\rightarrow 0 $ limit and $\lim_{\epsilon\rightarrow 0} (\alpha_1 \, v_1 + \alpha_2 \, v_2)=0$, then $\lim_{\epsilon\rightarrow 0} \frac{\alpha_1 \, v_1 + \alpha_2 \, v_2}{\lambda_1 - \lambda_2}$ gives us the generalised eigenvector of rank 2.

Furthermore, we have to take into account that, even if two vectors that coalesce are normalised to the same value, they can still differ by a phase factor,  i.e.,  $\lim_{\epsilon\rightarrow 0}v_i=e^{i\phi} \lim_{\epsilon\rightarrow 0}v_j$. Consider then the following two linear combinations
\begin{equation}
w_{ij}^\pm=\frac{ v_i \pm \beta_{ji}v_j}{| v_i \pm \beta_{ji}v_j|} \qquad \text{with} \qquad \beta_{ji}= v_j^\dagger \cdot v_i \ , \label{recipeextra}
\end{equation}
where $|v|^2=v^\dagger \cdot v$ is the usual vector norm. We can check that $\lim_{\epsilon\rightarrow 0} w_{ij}^+$ is equal (up to a phase factor) to $\lim_{\epsilon\rightarrow 0} v_i$, while $\lim_{\epsilon\rightarrow 0} w_{ij}^-$ becomes a $\frac{0}{0}$ indeterminate form. Thus, the latter provides the generalised eigenvector of rank 2 when we apply L'Hôpital's rule. This result provides a first-principle explanation for the proposition 4.3 in \cite{Gainutdinov:2016pxy}.

We should stress that, by applying the same reasoning we used for the collapse of the eigenvectors, every $\lim_{\epsilon\rightarrow 0} w_{ij}^-$ is equal to the generalised eigenvector of rank two regardless of the vectors $v_i$ and $v_j$ we used. This is consistent with theorem \ref{dimker}, which implies that having exactly one true eigenvector means that we can have one generalised eigenvector of rank 2 at most.

From this result it is evident how to continue forward: if we want to find generalised eigenvectors of higher rank, we need to consider linear combinations involving a higher number of eigenvectors of $M(\epsilon)$. This is because the limit of the linear combination of $n$ eigenvectors will fulfil the equation
\begin{equation}
	\left[M(0)- \mathbb{I}\lim_{\epsilon\rightarrow 0} \lambda_i\right]^n  \left[ \lim_{\epsilon\rightarrow 0} \left( \sum_{i=1}^n \alpha_i \, v_i \right) \right]=0 \ ,
\end{equation}
for any constants $\alpha_i$, meaning that the limit of these kinds of linear combinations has information on generalised eigenvectors up to rank $n$. In addition, the previous method for obtaining the generalised eigenvector of rank 2 can be immediately generalised by substituting the vectors $v_i$ by the appropriate ones
\begin{equation}
	w_{ij}^{(n)}=\frac{ w^{(n-1)}_{ji} - \beta^{(n-1)}_{kj} w^{(n-1)}_{ki}}{|w^{(n-1)}_{ji} - \beta^{(n-1)}_{kj} w^{(n-1)}_{ki}|} \qquad \text{with} \qquad \beta^{(n-1)}_{kj}= (w^{(n-1)}_{ji})^\dagger \cdot w^{(n-1)}_{ki}  \quad \text{and} \quad w^{(0)}_{ij}=v_i\ . \label{recipe}
\end{equation}
We should be careful because the $\epsilon\rightarrow 0$ limit of $w_{ij}^{(n)}$ does not give us the generalised eigenvector of rank $n+1$, but a linear combination of generalised eigenvectors up to rank $n+1$ (with a non-vanishing coefficient for the generalised eigenvector of rank $n+1$). The reason behind it is that we are actually solving the weaker condition $\left[M(0)- \mathbb{I}\lim_{\epsilon\rightarrow 0} \lambda_i\right]^{n+1} w^{(n)}=0$ instead of the stronger one $\left[M(0)- \mathbb{I}\lim_{\epsilon\rightarrow 0} \lambda_i\right] w^{(n)}=w^{(n-1)}$. In addition, notice that we can only repeat this process up to $w_{ij}^{(N-1)}$ because $M(\epsilon)$ only has $N$ independent eigenvectors. This is expected, as $M(0)$ can have only $N$ generalised eigenvectors.

We still have to answer the following three questions regarding this recipe for reconstructing the Jordan block: Are all the vectors we get non-zero? Are all the vectors we get independent? Do we get all the generalised eigenvectors?

\begin{proposition}
For every $n\leq N-1$, every $\lim_{\epsilon \rightarrow 0} w^{(n)}_{ij}$ is neither zero nor diverges and it is orthogonal to any $\lim_{\epsilon \rightarrow 0} w^{(m)}_{ij}$ with $m\neq n$.
\end{proposition}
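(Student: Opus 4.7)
My plan is to proceed by induction on $n$, exploiting the Puiseux series structure of the eigenvectors of $M(\epsilon)$ near the exceptional point that was alluded to in the footnote after equation~(\ref{lceigenvector}).

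For the non-vanishing and non-divergence claims, the crucial observation is that each $w^{(n)}_{ij}$ is unit-normalised at every $\epsilon > 0$ where the denominator of (\ref{recipe}) is non-zero. Thus it suffices to prove that the limit $\epsilon \to 0$ exists, because any limit of unit vectors has unit norm and hence is neither zero nor divergent. The inductive hypothesis I would carry is that for all $m \leq n-1$ both $\lim_{\epsilon \to 0} w^{(m)}_{ji}$ and $\lim_{\epsilon \to 0} w^{(m)}_{ki}$ exist, have unit norm, and coincide up to a phase (the latter being the direct analogue, at level $m$, of the coalescence statement derived from equation~(\ref{important})). Granted this, both the unnormalised numerator $w^{(n-1)}_{ji} - \beta^{(n-1)}_{kj} w^{(n-1)}_{ki}$ in (\ref{recipe}) and its norm vanish at the same leading power of the Puiseux parameter $\epsilon^{1/N}$, so the quotient has a well-defined, non-zero limit. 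The base case $w^{(0)}_{ij} = v_i$ is handled directly by equation~(\ref{important}).

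For the orthogonality claim I would exploit the Gram--Schmidt-like structure of the recursion. A direct computation gives
\begin{equation*}
(w^{(n-1)}_{ki})^\dagger \bigl( w^{(n-1)}_{ji} - \beta^{(n-1)}_{kj}\, w^{(n-1)}_{ki} \bigr) = \overline{\beta^{(n-1)}_{kj}} - \beta^{(n-1)}_{kj},
\end{equation*}
which is pure imaginary but generally non-zero at $\epsilon > 0$. However, as $\epsilon \to 0$ the vectors $w^{(n-1)}_{ji}$ and $w^{(n-1)}_{ki}$ coalesce up to a phase, so $\beta^{(n-1)}_{kj}$ tends to a real number of modulus one, and $\overline{\beta}-\beta$ therefore vanishes at leading Puiseux order. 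The content of the orthogonality statement is that this imaginary defect vanishes \emph{faster} in $\epsilon^{1/N}$ than the numerator itself vanishes, so that after division by $|w^{(n-1)}_{ji} - \beta^{(n-1)}_{kj} w^{(n-1)}_{ki}|$ the residual cross-inner-product tends to zero. Combined with the inductive hypothesis of orthogonality at lower levels, this would propagate to orthogonality with every $\lim w^{(m)}_{ij}$ for $m < n$.

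The main obstacle is precisely this order-counting in the Puiseux expansion: one must control, step by step, that the imaginary component of $\beta$ and the analogous cross terms at each level vanish strictly faster than the leading order of the renormalised residual. A cleaner alternative that I would consider as a back-up route is to identify $\lim w^{(n)}_{ij}$ explicitly as a specific Gram--Schmidt reorthogonalisation of the generalised eigenvectors of rank up to $n+1$ guaranteed to exist by theorem~\ref{dimker}, and then invoke the biorthogonality apparatus of the previous section to conclude orthogonality without tracking Puiseux coefficients.
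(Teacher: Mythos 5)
Your treatment of the first claim (neither zero nor divergent) is essentially the paper's own argument: $w^{(n)}_{ij}$ is unit-normalised for every $\epsilon\neq 0$, so once the limit exists it has unit norm. You are right that existence of the limit is the real content being glossed over, and flagging the Puiseux structure as the mechanism for it is a reasonable (if not carried out) refinement; the paper is no more explicit on this point.

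The orthogonality half, however, contains a genuine gap, and it originates in how you read the coefficient $\beta$. Taking $\beta^{(n-1)}_{kj}=(w^{(n-1)}_{ji})^\dagger\cdot w^{(n-1)}_{ki}$ literally and pairing the numerator against the subtracted vector $w^{(n-1)}_{ki}$, you obtain the residue $\overline{\beta}-\beta$ and are then forced into the Puiseux order-counting that you explicitly leave open --- but that order-counting is the \emph{entire} content of the orthogonality claim under your reading, so nothing has been established. Worse, the step you do assert --- that $\beta$ tends to a \emph{real} number of modulus one --- is unjustified: two coalescing unit vectors generically differ by a phase $e^{i\phi}$, so $\beta\to e^{\pm i\phi}$ and $\overline{\beta}-\beta\to \mp 2i\sin\phi$, which need not vanish at all; under that same literal reading the numerator $w^{(n-1)}_{ji}-\beta^{(n-1)}_{kj}w^{(n-1)}_{ki}\to (e^{i\phi}-e^{-i\phi})w_0$ would not even vanish, so the construction itself would break. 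The intended coefficient --- unambiguous in equation~(\ref{recipeextra}), where $\beta_{ji}=v_j^\dagger\cdot v_i$ multiplies the subtracted vector $v_j$, and in the displays of the paper's proof --- is the exact Gram--Schmidt projection coefficient, i.e.\ the (daggered) subtracted vector contracted with the kept one; the index placement in (\ref{recipe}) is a slip. With that reading the cross inner product
\begin{equation*}
	(w^{(n-1)}_{jk})^\dagger\bigl(w^{(n-1)}_{ik}-\beta^{(n-1)}_{ji}\,w^{(n-1)}_{jk}\bigr)=\beta^{(n-1)}_{ji}-\beta^{(n-1)}_{ji}\cdot 1=0
\end{equation*}
holds \emph{identically for every} $\epsilon$, so orthogonality of the limits follows by continuity with no asymptotic analysis whatsoever, and the remaining pairs follow from the same computation plus induction --- which is exactly what the paper does. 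Your back-up route (identifying the limits with a reorthogonalisation of the generalised eigenvectors and invoking biorthogonality) is closer to this in spirit, but as written your main argument does not close.
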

\begin{proof}
The proof for the first statement goes as follows: The vectors $w^{(n)}$ are normalised for any value of $\epsilon$. Because addition and squaring commute with computing limits, the limit of the norm is equal to the norm of the limit. Thus, $\lim_{\epsilon \rightarrow 0} w^{(n)}_{ij}$ cannot be zero because $w^{(n)}_{ij}$ is normalised to $1$ for any value of $\epsilon$.

Let us now prove the orthogonality statement, starting with the generalised eigenvectors of rank $1$ and $2$. For that, we need to compute the scalar product between the limits of $w^{(1)}_{ij}$ and $v_j$, $( \lim_{\epsilon\rightarrow 0} v_j)^\dagger (\lim_{\epsilon\rightarrow 0} w^{(1)}_{ij})$. Instead of this limit, it is simpler to compute instead $\lim_{\epsilon\rightarrow 0} \lim_{\epsilon '\rightarrow \epsilon} ( v_j (\epsilon) )^\dagger ( w^{(1)}_{ij}(\epsilon '))$, which should coincide if the limit exists. As the second limit is smooth, we can prove that the limits of $w^{(1)}_{ij}$ and $v_j$ are orthogonal if the vectors $w^{(1)}_{ij}(\epsilon )$ and $v_j (\epsilon )$ are orthogonal for any $\epsilon$. Using (\ref{recipe}), we get
\begin{displaymath}
	\lim_{\epsilon '\rightarrow \epsilon} ( v_j (\epsilon) )^\dagger ( w^{(1)}_{ij}(\epsilon '))= ( v_j (\epsilon) )^\dagger ( w^{(1)}_{ij}(\epsilon) )=\frac{ (v_j^\dagger \cdot  v_i) - \beta_{ji} (v_j^\dagger \cdot  v_j)}{|v_i - \beta_{ji} v_j|} \ .
\end{displaymath}
If we now substitute the definition of $\beta_{ji}$ and use the fact that the eigenvectors are normalised, the numerator vanishes identically. Thus, the limit of the scalar product vanishes for any $\epsilon$, including $\epsilon=0$.

By similar means, we can prove the orthogonality between the vectors $w^{(n)}_{ij} (\epsilon) $ and $w^{(n-1)}_{jk} (\epsilon)$
\begin{displaymath}
	( w^{(n-1)}_{jk} (\epsilon) )^\dagger ( w^{(n)}_{ij}(\epsilon) )=\frac{ (w^{(n-1)}_{jk})^\dagger \cdot  w^{(n-1)}_{ik} - \beta^{(n-1)}_{ji} (w^{(n-1)}_{jk})^\dagger \cdot  w^{(n-1)}_{jk}}{|w^{(n-1)}_{ik} - \beta^{(n-1)}_{ji} w^{(n-1)}_{jk}|} \ ,
\end{displaymath}
which vanishes identically.

The remaining scalar products are proven to vanish by induction: using the definition of $w^{(m+1)}_{ij} (\epsilon )$, it is easy to show similarly that $w^{(n)}_{ij} (\epsilon )$ and $w^{(m+1)}_{ij} (\epsilon )$ are orthogonal provided $w^{(n)}_{ij} (\epsilon )$ and $w^{(m)}_{ij} (\epsilon )$ are orthogonal, proving the remaining orthogonality relations.

\end{proof}

\begin{corollary}
\emph{Span}$\{\lim_{ \epsilon\rightarrow 0} v_i,\lim_{ \epsilon\rightarrow 0} w^{(1)}_{ij} , \dots , \lim_{ \epsilon\rightarrow 0} w^{(N-1)}_{ij} \}$ has dimension $N$ and is isomorphic to the vector space spanned by the set of generalised eigenvectors of $M(0)$.
\end{corollary}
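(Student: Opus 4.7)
The plan is to verify the two claims separately: first that the $N$ limit vectors are linearly independent and therefore span a space of dimension $N$, and then that this space coincides with the full generalised eigenspace of $M(0)$, which by hypothesis is all of $\mathbb{C}^N$. The isomorphism at the end is then automatic, since both are $N$-dimensional complex vector spaces.

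First I would handle linear independence. By the preceding proposition, each limit $\lim_{\epsilon\rightarrow 0}w^{(n)}_{ij}$ (with the convention $w^{(0)}_{ij}=v_i$) exists, is non-zero, and the family for $n=0,1,\ldots,N-1$ is pairwise orthogonal with respect to the standard Hermitian inner product. A set of non-zero pairwise orthogonal vectors in an inner product space is linearly independent, so their span has dimension exactly $N$.

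Next I would show that every $\lim_{\epsilon\rightarrow 0}w^{(n)}_{ij}$ lies in the generalised eigenspace of $M(0)$ at the eigenvalue $\lambda_0=\lim_{\epsilon\rightarrow 0}\lambda_i$. Unwinding the recursive definition (\ref{recipe}), each $w^{(n)}_{ij}(\epsilon)$ is, up to a nonzero scalar, a linear combination of $n+1$ eigenvectors of $M(\epsilon)$ associated with distinct eigenvalues. Hence
\begin{equation*}
\Bigl[\prod_{k}(M(\epsilon)-\lambda_k\mathbb{I})\Bigr]\,w^{(n)}_{ij}(\epsilon)=0,
\end{equation*}
where the product runs over the $n+1$ participating eigenvalues. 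Taking the $\epsilon\rightarrow 0$ limit, using continuity of matrix multiplication and the fact that all participating eigenvalues coalesce to the same $\lambda_0$, the bracketed operator converges to $[M(0)-\lambda_0\mathbb{I}]^{n+1}$, exactly as in (\ref{GEEquation}). Therefore $\lim_{\epsilon\rightarrow 0}w^{(n)}_{ij}$ belongs to $\ker\{[M(0)-\lambda_0\mathbb{I}]^{n+1}\}$, which is a subspace of the generalised eigenspace of $M(0)$.

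The proof then concludes by dimension counting. Since $M(0)$ is by hypothesis similar to a single Jordan block of size $N$, every vector in $\mathbb{C}^N$ is a generalised eigenvector, i.e.\ the generalised eigenspace is the whole ambient space and has dimension $N$. Having exhibited $N$ linearly independent vectors lying in this space, they must form a basis of it, so the span coincides with the space of generalised eigenvectors, establishing the isomorphism. The main technical obstacle is the commuting-with-limits step in the previous paragraph, because in principle the existence of $\lim_{\epsilon\rightarrow 0}w^{(n)}_{ij}$ must be invoked (guaranteed by the preceding proposition via the normalisation at every $\epsilon$); once existence is in hand, continuity of polynomial functions of matrices in the entries makes the limit interchange routine.
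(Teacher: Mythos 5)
Your proof is correct and follows essentially the same route as the paper: linear independence of the $N$ limits from their non-vanishing and pairwise orthogonality (the preceding proposition), membership of each $\lim_{\epsilon\rightarrow 0} w^{(n)}_{ij}$ in $\text{Ker}\{[M(0)-\lambda_0\mathbb{I}]^{n+1}\}$ via the limit of the annihilation identity (\ref{GEEquation}), and a dimension count using the single-Jordan-block hypothesis. The only cosmetic difference is that the paper performs the count level by level through the one-dimensional quotients $\text{Ker}\{(M(0)-\lambda\mathbb{I})^{n}\}/\text{Ker}\{(M(0)-\lambda\mathbb{I})^{n-1}\}$, which also records the rank of each limit vector, whereas you count on the total space; your containment statement recovers the same graded information when combined with $\dim\text{Ker}\{(M(0)-\lambda\mathbb{I})^{n}\}=n$.
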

\begin{proof}
As the limits $\lim_{ \epsilon\rightarrow 0} w^{(n-1)}_{ij}$ are independent of the vectors we used to construct them, we can take one representative for each $w^{(n)}_{ij}$. From the previous proposition, the $N$ vectors that generate this space are all non-zero and orthogonal. This proves the first part of the corollary. 

By construction, the vector $\lim_{ \epsilon\rightarrow 0} w^{(n-1)}_{ij}$ is linear combination of  generalised eigenvectors up of rank $n$. As we have only one true eigenvector, $\text{Ker} \{ (M(0)-\lambda \mathbb{I})^n \} / \text{Ker} \{ (M(0)-\lambda \mathbb{I})^{n-1} \}$ is one-dimensional. Thus, we can stablish as isomorphism between $\text{Ker} \{ (M(0)-\lambda \mathbb{I})^n \}$ and Span$\{\lim_{ \epsilon\rightarrow 0} v_i,\lim_{ \epsilon\rightarrow 0} w^{(1)}_{ij} , \dots ,\lim_{ \epsilon\rightarrow 0} w^{(n-1)}_{ij}\}$, proving the second part of the corollary.
\end{proof}

Thus, we can claim that our method for constructing generalised eigenvectors is complete, meaning that it gives us all the generalised eigenvectors of $M(0)$, and indicates us what their rank is.

Despite considering here the case of a single Jordan block, the results of this section hold without any modification for any matrix with several Jordan blocks as long as all the eigenvalues have geometric multiplicity one, i.e.,  the eigenvalues associated with each Jordan Block are pairwise distinct. In addition, it is easy to check that the dimension of each Jordan block is equal to the number of eigenvectors that collapse to the same vector. 

\subsection{How to compute the limit} \label{howtolimit}

At this point, we have laid down an algorithm to find all the generalised eigenvectors of a defective matrix by perturbing it enough to become diagonalisable and computing the limit of vanishing perturbation of linear combinations of eigenvectors. Although the expressions we have provided for these linear combinations of eigenvectors were good for proving the completeness of the method, they are not so good for actual computations. In particular, they require us to compute scalar products of eigenvectors and norms of linear combinations of eigenvectors, which usually are far from straightforward in the context of the Bethe ansatz. Here we will provide some alternative expressions for the vectors $w_{ij}^{(n)}$ that are simpler for computational purposes.

The first simplification we can do concerns the norm in the denominator of (\ref{recipe}). The main purpose of this norm is to provide the correct power of $\epsilon$ for the vector $w_{ij}^{(n)}$ to be non-vanishing in the $\epsilon \rightarrow 0$ limit, as we are free to normalise our vectors as it suits us best. Thus, instead of considering the full norm, we are allowed to substitute it by the appropriate power of $\epsilon$, which is much easier to compute.

Regarding the numerator of (\ref{recipe}), we should recall that we required the condition $\lim_{\epsilon\rightarrow 0} (\lambda_2 - \lambda_1) (\alpha_1 \, v_1 + \alpha_2 \, v_2)=0$ for the limit $\lim_{\epsilon\rightarrow 0} (\alpha_1 \, v_1 + \alpha_2 \, v_2)$ to be well-defined. Similar conditions hold for all $w_{ij}^{(n)}$ vectors. Thus, we can choose the numerator to be a linear combination of eigenvectors with finite coefficients (with the restriction that these coefficients do not vanish when $\epsilon\rightarrow 0$) that vanishes in the $\epsilon \rightarrow 0$ limit. Nevertheless, if we want to make our computations even simpler, we might want to choose this linear combination to be orthogonal to the generalised eigenvector computed in the previous step. Notice that, if this is done at every step, this linear combination will be orthogonal to all the generalised eigenvectors computed in previous steps. We should stress again that what we get after doing these simplifications is not actually a generalised eigenvector, but a generalised eigenvector up to a linear combination of the vectors we obtained in previous steps. This is not a bad trade-off, as the algorithm computes solutions to the eigenvector problem associated with the matrix $(M-\lambda_i \mathbb{I})^n $, so it is entirely possible that the results we were obtaining were already generalised eigenvector up to a linear combination of the vectors we obtained in previous steps.

Finally, we want to stress that we have to compute the limit of a linear combination of eigenvectors of the perturbed diagonal matrix. If we compute the limit of the difference between an eigenvector of the diagonalisable matrix and a linear combination of generalised eigenvectors of the defective matrix, the algorithm might give us the wrong generalised eigenvector. This is because, in general, 
\begin{equation}
	\lim_{\epsilon \rightarrow 0} \frac{ w^{(n-1)}_{ji} - \beta^{(n-1)}_{kj} w^{(n-1)}_{ki}}{|w^{(n-1)}_{ji} - \beta^{(n-1)}_{kj} w^{(n-1)}_{ki}|} \neq \lim_{\epsilon \rightarrow 0} \frac{ w^{(n-1)}_{ji} - \lim_{\epsilon \rightarrow 0} [\beta^{(n-1)}_{kj} w^{(n-1)}_{ki}]}{|w^{(n-1)}_{ji} - \beta^{(n-1)}_{kj} w^{(n-1)}_{ki}|} \ .
\end{equation}
We can see the consequences of this inequality very clearly in the following example: Consider the matrix
\begin{displaymath}
	M(\epsilon) =\begin{pmatrix}
		0 & 1 & \epsilon^2 \\
		\epsilon^2 & 0 & 1 \\
		0 & 0 & 0
	\end{pmatrix} \ .
\end{displaymath}
This matrix has the following eigenvalues and eigenvectors
\begin{align*}
	\lambda_1 &= 0 \ , & v_1&= (1,\epsilon^4 , -\epsilon^2 ) \ , \\
	\lambda_2 &= -\epsilon \ , & v_2&= (1,-\epsilon , 0 ) \ , \\
	\lambda_3 &= \epsilon \ , & v_3&= (1,\epsilon , 0 ) \ .
\end{align*}
We can check that $\lim_{\epsilon\rightarrow 0} v_i=(1,0 , 0 )$ for the three vectors. If we move now to $w^{(1)}$ we find that
\begin{align*}
	&\frac{v_1 - \beta_{21} v_2}{|v_1 - \beta_{21} v_2|} \approx ( \epsilon^4 , 1, -\epsilon) \ , & &\frac{v_1 - \lim_{\epsilon \rightarrow 0} (\beta_{21} v_2)}{|v_1 - \beta_{21} v_2|} \approx ( 0 , -\epsilon^2, 1) \ ,\\
	&\frac{v_2 - \beta_{32} v_3}{|v_2 - \beta_{32} v_2|} \approx ( -\epsilon , 1, 0) \ , & &\frac{v_2 - \lim_{\epsilon \rightarrow 0} (\beta_{32} v_3)}{|v_2 - \beta_{32} v_3|} \approx ( 0 , 1, 0) \ .
\end{align*}
As we can see, if we consider the difference between an eigenvector and the limit of an eigenvector, we will wrongly assume that there are two generalised eigenvectors of rank 2 instead of just one, which contradicts theorem \ref{dimker} because we have only one eigenvector.

Although the authors of \cite{Gainutdinov:2016pxy} use the difference of the vector and the limit of a second vector to compute generalised eigenvectors, there is no issue with it in their context. The reason being that all their Jordan blocks are of size two, so there are not enough generalised eigenvectors to create the ``misplacement'' we have seen in this example.

\section{Exceptional point with degenerate Jordan blocks}

The algorithm from the previous sections works perfectly as long as the matrix we are considering is composed of Jordan blocks associated with different eigenvalues at the exceptional point. This is thanks to having exactly one generalised eigenvector of a given rank for each eigenvalue. However, the generalisation to the case where we have two or more Jordan cells associated with the same eigenvalue can give rise to issues because we lack this uniqueness.

As in the previous section, we will first examine a particular example that captures most of the peculiarities of the problem, and then discuss how to generalise the results from the previous section in depth.

\subsection{A specific example: five-sites chain with three different excitations and zero total momentum}\label{5sites}

In this section, we take a look at the simplest case where the strongly twisted limit of the Hamiltonian (\ref{eq:twistedXXX}) develops more than one Jordan block for the case where the total momentum of the excitations vanishes (that is, eigenstates of the operator $U$ with eigenvalue $1$). This corresponds to a spin chain of length $5$ with two excitations of one flavour and one of the other ($L=5$, $M=3$, $K=1$). According to \cite{StaudacherAhn}, the Jordan normal form of the Hamiltonian has a Jordan cell of size 5 and a Jordan cell of size 1. Although it is outside of the range of validity of the procedure we explained before, as the two Jordan blocks are indistinguishable, we will still apply it to check if we can retrieve any useful information about the Jordan structure of the Hamiltonian.

For the case of vanishing total momentum, the Hamiltonian can be written in the following form
\begin{equation}
	\mathbf{\tilde{H}}_{(q_1,q_2,q_3)}=\begin{pmatrix}
	2 & q_3 & q_1 & q_2 & 0 & 0\\
	\frac{1}{q_3} & 0 & q_3 & q_3 & q_1+q_2 & 0 \\
	\frac{1}{q_1} & \frac{1}{q_3} & 1 & 0 & q_3 & q_1 \\
	\frac{1}{q_2} & \frac{1}{q_3} & 0 & 1 & q_3 & q_2 \\
	0 & \frac{1}{q_1}+\frac{1}{q_2} & \frac{1}{q_3} & \frac{1}{q_3} & 0 & q_3 \\
	0 & 0 & \frac{1}{q_1} & \frac{1}{q_2} & \frac{1}{q_3} & 2
	\end{pmatrix}\ .
\end{equation}
Sadly, this Hamiltonian is very difficult to diagonalise. Consequently, we decided to set some of its strictly lower diagonal elements to zero to simplify this example, as it makes the expressions for the eigenvalues and eigenvectors much simpler. In principle, we are allowed to do that because setting them to zero by hand modifies neither the position of the exceptional point nor the matrix at the exceptional point. In particular, we will focus on the following two matrices
\begin{equation}
 \tilde{H}^{(1)}=\begin{pmatrix}
	2 & q_3 & q_1 & q_2 & 0 & 0\\
	0 & 0 & q_3 & q_3 & q_1+q_2 & 0 \\
	0 & 0 & 1 & 0 & q_3 & q_1 \\
	0 & 0 & 0 & 1 & q_3 & q_2 \\
	0 & 0 & 0 & 0 & 0 & q_3 \\
	0 & 0 & 0 & 0 & \frac{1}{q_3} & 2
	\end{pmatrix} \ , \qquad  \tilde{H}^{(2)}=\begin{pmatrix}
	2 & q_3 & q_1 & q_2 & 0 & 0\\
	0 & 0 & q_3 & q_3 & q_1+q_2 & 0 \\
	0 & 0 & 1 & 0 & q_3 & q_1 \\
	0 & \frac{1}{q_3} & 0 & 1 & q_3 & q_2 \\
	0 & 0 & 0 & 0 & 0 & q_3 \\
	0 & 0 & 0 & 0 & \frac{1}{q_3} & 2
	\end{pmatrix} \ ,
\end{equation}
but similar results hold for other choices.

To simplify our discussion, we will denote by $\hat{u}_i$ the vector (with appropriate dimension) where component $i$ is equal to $1$ and all the remaining components are equal to zero. 

After substituting $q_i\rightarrow \frac{\xi_i}{\epsilon}$, the six eigenvectors of $\tilde{H}^{(1)}$ take the form
\begin{align*}
	v_1 =& \left( 1 , 0, 0, 0, 0, 0 \right) \ , \\
	v_2 =& \left( \xi_3 , -2\epsilon, 0, 0, 0, 0 \right) \ , \\
	v_3 =& \left( \xi_3^2 + \xi_2 \epsilon , -2 \xi_3 \epsilon, 0, -\epsilon^2, 0, 0 \right) \ , \\
	v_4 =& \left( \xi_3^2 + \xi_1 \epsilon , -2 \xi_3 \epsilon, 0, -\epsilon^2, 0, 0 \right) \ , \\
	v_5 =& \left( \xi_3^4-\frac{1+2\sqrt{2}}{2+\sqrt{2}} (\xi_1+\xi_2)\xi_3^2 \epsilon + \frac{2-\sqrt{2}}{2+\sqrt{2}} (\xi_1^2+\xi_2^2) \frac{\epsilon^2}{2} , -\frac{4+3\sqrt{2}}{2+\sqrt{2}} \xi_3^3 \epsilon + \frac{8+5\sqrt{2}}{2+\sqrt{2}} (\xi_1+\xi_2)\xi_3 \frac{\epsilon^2}{2} , \right. \\
	&\left. \frac{\xi_3^2 \epsilon^2}{2}+\frac{1-\sqrt{2}}{2} \xi_1 \epsilon^3 ,\frac{\xi_3^2 \epsilon^2}{2}+\frac{1-\sqrt{2}}{2} \xi_2  \epsilon^3 , -\frac{1+\sqrt{2}}{2+\sqrt{2}} \xi_3 \epsilon^3 , \frac{\epsilon^4}{2+\sqrt{2}} \right) \ , \\
	v_6 =& \left( \xi_3^4-\frac{1-2\sqrt{2}}{2-\sqrt{2}} (\xi_1+\xi_2)\xi_3^2 \epsilon + \frac{2+\sqrt{2}}{2-\sqrt{2}} (\xi_1^2+\xi_2^2) \frac{\epsilon^2}{2} , -\frac{4-3\sqrt{2}}{2-\sqrt{2}} \xi_3^3 \epsilon + \frac{8-5\sqrt{2}}{2-\sqrt{2}} (\xi_1+\xi_2)\xi_3 \frac{\epsilon^2}{2} , \right. \\
	&\left. \frac{\xi_3^2 \epsilon^2}{2}+\frac{1+\sqrt{2}}{2} \xi_1 \epsilon^3 ,\frac{\xi_3^2 \epsilon^2}{2}+\frac{1+\sqrt{2}}{2} \xi_2  \epsilon^3 , -\frac{1-\sqrt{2}}{2-\sqrt{2}} \xi_3 \epsilon^3 , \frac{\epsilon^4}{2-\sqrt{2}} \right) \ .
\end{align*}
We can check that all these six vectors become $\hat{u}_1=(1,0,0,0,0,0)$ when we compute the limit $\epsilon\rightarrow 0$. In addition, if we construct the following five vectors
\begin{equation}
	w_{i,1}^{(1)}=\frac{v_i - (v_i\cdot v_1) v_1}{\epsilon} \ ,
\end{equation}
we get that they all become proportional to $\hat{u}_2=(0,1,0,0,0,0)$ in the limit of vanishing $\epsilon$. If we consider now the following four vectors
\begin{equation}
	w_{i,2}^{(2)}=\frac{w_{i,1}^{(1)} - (w_{2,1}^{(1)}\cdot w_{i,1}^{(1)}) w_{2,1}^{(1)}}{\epsilon} \ ,
\end{equation}
we find that they do not approach the same vector when $\epsilon$ approaches $0$. Instead, we find that they are linear combinations of the vectors $\hat{u}_3$ and $\hat{u}_4$. However, theorem~\ref{dimker} tells us that it is impossible to obtain two generalised eigenvectors at this step because we naïvely have just one true eigenvector. We can check that indeed both vectors are generalised eigenvectors of rank 2 (up to a linear combination with $\hat{u}_2$) with respect to our true eigenvector. This is only possible if they differ by a true eigenvector. After some algebra, we find that the vector $(0,\xi_2-\xi_1,\xi_3,-\xi_3,0,0)$ is indeed another eigenvector of the strongly twisted Hamiltonian. If we continue the process, we find
\begin{align}
	\lim_{\epsilon \rightarrow 0} w^{(3)} &=(0,0,0,0,1,0)=\hat{u}_5 \ , & \lim_{\epsilon \rightarrow 0} w^{(4)} &=(0,0,0,0,0,1)=\hat{u}_6 \ .
\end{align}

The structure of the limits can be summarised by the following diagram:
\begin{center}
\begin{tikzcd}
\text{Span} \{\lim v_i\} & \text{Span} \{\lim w^{(1)}_{ij} \} & \text{Span} \{\lim w^{(2)}_{ij} \} & \text{Span} \{\lim w^{(3)}_{ij} \} & \text{Span} \{\lim w^{(4)}_{ij} \} \\
 & & \hat{u}_3 \arrow[rd] & & \\
\hat{u}_1 \arrow[r] & \hat{u}_2 \arrow[ru] \arrow[rd] & & \hat{u}_5 \arrow[r] & \hat{u}_6 \\
 & & \hat{u}_4 \arrow[ru] & & 
\end{tikzcd}
\end{center}
At the left most part we have placed the true eigenvector and, as we move one step to the right, we get the set of vectors obtained after applying once our ``limit of linear combination of eigenvectors that coalesce'' recipe. Although we might think that this is an indication of the existence of a Jordan block of size $5$ and a Jordan block of size $1$, we should not be so eager to jump to conclusions. In order to check that, we can apply the definition of generalised eigenvector (\ref{geneigenvdefinition}). After some tedious algebra, one finds this identification to be correct
\begin{align*}
	&\mathbf{\hat{H}}^5 \frac{2\xi_3^3 \hat{u}_6-3 \xi_3^2 (\xi_1 + \xi_2) \hat{u}_5 +2\xi_3 (\xi_1^2 + \xi_2^2 + 3\xi_1 \xi_2)(\hat{u}_3 +\hat{u}_4) -2(\xi_1^3 + \xi_2^3 +4\xi_1^2 \xi_2 + 4\xi_1 \xi_2^2) \hat{u}_2}{4 \xi_3^7} = \\
	&\mathbf{\hat{H}}^4 \frac{2\xi_3^2 \hat{u}_5 -\xi_3 (3\xi_1 + \xi_2) \hat{u}_4 -\xi_3 (\xi_1 + 3\xi_2) \hat{u}_3+ (\xi_1^2 + \xi_2^2 + 6\xi_1 \xi_2) \hat{u}_2}{4 \xi_3^5}=\mathbf{\hat{H}}^3 \frac{\xi_3 (\hat{u}_3 +\hat{u}_4)-(\xi_1 + \xi_2) \hat{u}_2}{2\xi_3^3}= \\
	&\mathbf{\hat{H}}^2 \frac{\hat{u}_2}{\xi_3}= \mathbf{\hat{H}}\hat{u}_1=0 \ , \\
	&\mathbf{\hat{H}} [\xi_3 (\hat{u}_3 -\hat{u}_4) -(\xi_1 - \xi_2) \hat{u}_2 ] =0 \ ,
\end{align*}
where $\mathbf{\hat{H}}=\lim_{\epsilon \rightarrow 0} \mathbf{\tilde{H}}_{(\frac{\xi_1}{\epsilon},\frac{\xi_2}{\epsilon},\frac{\xi_3}{\epsilon})}$. This result is in agreement with the one presented in \cite{StaudacherAhn}. We will later show a better method to compute the generalised eigenvectors once we know the eigenvector at the bottom of the Jordan chain.

Let us now examine the other matrix. The six eigenvectors of $\tilde{H}^{(2)}$ take the form
\begin{align*}
	v_1 =& \left( 1 , 0, 0, 0, 0, 0 \right) \ , \\
	v_2 =& \left( \frac{(1+\sqrt{5}) \xi_3^2 -2\xi_2 \epsilon}{3+\sqrt{5}} , -(1+\sqrt{5}) \xi_3 \frac{\epsilon}{2} , 0, \epsilon^2 , 0, 0 \right) \ , \\
	v_3 =& \left( \frac{(1-\sqrt{5}) \xi_3^2 -2\xi_2 \epsilon}{3-\sqrt{5}} , -(1-\sqrt{5}) \xi_3 \frac{\epsilon}{2} , 0, \epsilon^2 , 0, 0 \right) \ , \\
	v_4 =& \left( \xi_1-\xi_2 , 0, -\epsilon , \epsilon , 0, 0 \right) \ , \\
	v_5 =& \left( 2(2-\sqrt{2}) \xi_3^4 -[(1-3\sqrt{2})\xi_1 +(3-5\sqrt{2}) \xi_2]\xi_3^2 \epsilon + [(1+\sqrt{2}) \xi_1^2 + (3-\sqrt{2}) \xi_1 \xi_2 + 4\xi_2^2]\epsilon^2 , \right. \\
	&  -\frac{4 (1- \sqrt{2}) \xi_3^3 \epsilon - 2 (3-\sqrt{2}) (\xi_1 + \xi_2) \xi_3 \epsilon^2}{2+\sqrt{2}} , -(1-\sqrt{2})\xi_3^2 \epsilon^2 +\xi_1 \epsilon^3, \\
	&\left. (1-\sqrt{2}) [(1-2\sqrt{2}) \xi_3^2 \epsilon^2 -4\xi_2 \epsilon^3 -(3-\sqrt{2}) \xi_1 \epsilon^3 , (2-\sqrt{2}) \xi_3 \epsilon^3 , \sqrt{2} \epsilon^4 \right) \ , \\
	v_6 =& \left( -2(2+\sqrt{2}) \xi_3^4 +[(1+3\sqrt{2})\xi_1 +(3+5\sqrt{2}) \xi_2]\xi_3^2 \epsilon - [(1-\sqrt{2}) \xi_1^2 + (3+\sqrt{2}) \xi_1 \xi_2 + 4\xi_2^2]\epsilon^2 , \right. \\
	&  \frac{4 (1+ \sqrt{2}) \xi_3^3 \epsilon - 2 (3+\sqrt{2}) (\xi_1 + \xi_2) \xi_3 \epsilon^2}{2+\sqrt{2}} , (1+\sqrt{2})\xi_3^2 \epsilon^2 -\xi_1 \epsilon^3, \\
	&\left. -(1+\sqrt{2}) [(1+2\sqrt{2}) \xi_3^2 \epsilon^2 -4\xi_2 \epsilon^3 -(3+\sqrt{2}) \xi_1 \epsilon^3 , -(2+\sqrt{2}) \xi_3 \epsilon^3 , \sqrt{2} \epsilon^4 \right) \ .
\end{align*}
We can check that all these six vectors become $\hat{u}_1$ when we compute the limit $\epsilon\rightarrow 0$. However, if we construct the following five vectors
\begin{equation}
	w_{i,1}^{(1)}=\frac{v_i - (v_i\cdot v_1) v_1}{\epsilon} \ ,
\end{equation}
we get that they become proportional to a linear combination of the vectors $\hat{u}_2$ and $\hat{u}_3-\hat{u}_4$. We can check that indeed both vectors are generalised eigenvectors of rank 2 with respect to our true eigenvector. If we continue the process, we find
\begin{align}
	\lim_{\epsilon \rightarrow 0} w^{(2)} &=\hat{u}_3+\hat{u}_4 \ , & \lim_{\epsilon \rightarrow 0} w^{(3)} &=\hat{u}_5 \ , & \lim_{\epsilon \rightarrow 0} w^{(4)} &=\hat{u}_6 \ .
\end{align}
The results can be summarised in the following diagram:
\begin{center}
\begin{tikzcd}
 & \hat{u}_2 \arrow[rd] \\
\hat{u}_1  \arrow[ru] \arrow[rd] & & \hat{u}_3 +\hat{u}_4 \arrow[r] & \hat{u}_5 \arrow[r] & \hat{u}_6 \\
 & \hat{u}_3-\hat{u}_4 \arrow[ru]
\end{tikzcd}
\end{center}
The Hamiltonian $\tilde{H}^{(2)}$ gives us a very similar structure to the one we found for the  Hamiltonian $\tilde{H}^{(1)}$, with the exception that the ``fake generalised eigenvector of rank 3'' now appears as a ``fake generalised eigenvector of rank 2''.

In order to see that this is not a special feature of the case we are considering, we can repeat the same construction for the case of a spin chain of length $7$ with two excitations of one flavour and one of the other ($L=7$, $M=3$, $K=1$). We will not reproduce the whole process here, but the final result can be summarised in the following diagram
\begin{center}
\begin{tikzcd}
   & & & & \hat{u}_5 \arrow[rd]  \\
   & & \hat{u}_3 \arrow[r] & \hat{u}_4 \arrow[ru] \arrow[rd] & & \hat{u}_9 \arrow[r] & \hat{u}_{12}  \arrow[rd] \\
\hat{u}_1 \arrow[r] & \hat{u}_2 \arrow[ru] \arrow[rd] & & & \hat{u}_8 \arrow[ru] \arrow[rd] & & & \hat{u}_{14} \arrow[r] & \hat{u}_{15}  \\
   & & \hat{u}_6 \arrow[r] & \hat{u}_7 \arrow[ru] \arrow[rd] & & \hat{u}_{11} \arrow[r] & \hat{u}_{13} \arrow[ru]  \\
   & & & & \hat{u}_{10} \arrow[ru] 
\end{tikzcd}
\end{center}
From this diagram we can postulate that this case has three Jordan blocks of size $9$, $5$ and $1$ respectively. This result again matches exactly the one collected in Table 1 in \cite{StaudacherAhn} and the computation from \cite{Ahn:2021emp}.

\subsection{The defective limit of a diagonalisable matrix: Degenerate Jordan blocks} \label{nondistinguishableJB}

In the previous examples we have seen that, although we are outside the range of applicability of the method we proposed in section~\ref{distinguishableJB}, it seems to capture enough of the structure of the Jordan blocks. In this section, we will explore which parts of the method remain unaltered and which parts change when several Jordan blocks share the same eigenvalues.

Let us consider again a matrix $M(\epsilon)$ that is diagonalisable for generic values of $\epsilon$ but that becomes defective for $\epsilon=0$. Here, we will assume that two or more Jordan cells have the same eigenvalue. First, we should revisit equation~(\ref{important})
\begin{equation}
	\left[M(0)- \mathbb{I}\lim_{\epsilon\rightarrow 0} \lambda_i\right]  \left[ \lim_{\epsilon\rightarrow 0}v_i\right]=0 \ .
\end{equation}
As we commented before, this equation is giving us a sufficient condition on the limit of the eigenvectors, not a necessary one. This means that the limit of any eigenvector of $M(\epsilon)$ will be an eigenvector of $M(0)$ but it does not guarantee that all the eigenvectors of $M(0)$ can be obtained in this way. In the case of distinguishable Jordan blocks, we did not have this issue because both spaces were one-dimensional, so we were able to unequivocally identify one with the other. However, if we have several Jordan blocks associated with the same eigenvalue, the dimension of Ker$\{(M(0)-\lambda \mathbb{I})\}$ is larger than one while the dimension of Span$\{\lim_{\epsilon\rightarrow 0}v_i\}$ depends on the form of $M(\epsilon)$. In fact, these two dimensions do not coincide in any of the examples at the beginning of this section, with true eigenvectors appearing as fake generalised eigenvector of rank 2, 3 or 5. As Jordan chains become entangled, we will denote this effect as \emph{chain mixing}.

The core reason behind this chain mixing is that there exist linear combinations of eigenvectors of $M(\epsilon)$ with coefficients that diverge at $\epsilon =0$ that fulfil
\begin{equation}
\left[M(\epsilon)- \mathbb{I} \lambda_1 (\epsilon) \right] \sum_{i=1}^{n-1} \alpha_i v_i (\epsilon) = \mathcal{O} (\epsilon) \qquad \text{while} \qquad \lim_{\epsilon\rightarrow 0} \sum_{i=1}^{n-1} \alpha_i v_i (\epsilon)= \mathcal{O} (\epsilon^0) \ .
\end{equation}
That is, they are true eigenvectors of $M(0)$ but they cannot be extended to true eigenvectors of $M(\epsilon)$. Obviously, which of the true eigenvectors of $M(0)$ cannot be extended to a true eigenvector of $M(\epsilon)$ depends only on how $M(\epsilon)$ approaches $M(0)$.  As an example of this effect, consider the four following matrices, all of which have the same limit at $\epsilon=0$
	\begin{align}
	&\begin{pmatrix}
	0 & 1 & \epsilon^2 \\
	\epsilon^2 & \epsilon^4 & \epsilon^2 \\
	0 & 0 & \epsilon^6
	\end{pmatrix} & &\begin{pmatrix}
	0 & 1 & \epsilon^2 \\
	\epsilon^2 & \epsilon^4 & \epsilon \\
	0 & 0 & \epsilon^6
	\end{pmatrix} & &\begin{pmatrix}
	0 & 1 & \epsilon^2 \\
	\epsilon^5 & 0 & 0 \\
	\epsilon & 0 & 0
	\end{pmatrix} & \begin{pmatrix}
	0 & 1 & \epsilon^2 \\
	0 & \epsilon^4 & \epsilon^2 \\
	0 & 0 & \epsilon^6
	\end{pmatrix} \ . \label{EX1}
\end{align}
Although they approach the same matrix as $\epsilon$ approaches zero, their eigenvectors have completely different analytic structure. The results of applying our recipe to each of them can be summarised in the following four diagrams\footnote{As we indicated in section~\ref{howtolimit}, we have to be careful when computing the limit of the vectors $w^{(n)}$, particularly in the case of the third matrix.}
	\begin{align*}
	&\begin{tikzcd}[ampersand replacement=\&]
   \hat{u}_1 \arrow[r] \& \hat{u}_2  \\
   \hat{u}_3 \arrow[ru]
\end{tikzcd} & &\begin{tikzcd}[ampersand replacement=\&]
   \hat{u}_1 \arrow[r] \arrow[rd] \& \hat{u}_2 \\
   \& \hat{u}_3
\end{tikzcd} & &\begin{tikzcd}[ampersand replacement=\&]
   \hat{u}_3 \arrow[r] \arrow[r] \& \hat{u}_1 \arrow[r]  \& \hat{u}_2
\end{tikzcd} & &\begin{tikzcd}[ampersand replacement=\&]
   \hat{u}_1 \arrow[r] \arrow[r] \& \hat{u}_2 \arrow[r]  \& \hat{u}_3
\end{tikzcd}
\end{align*}
From these four examples, we see that having several Jordan cells associated with the eigenvalue muddies the analysis, making it seemingly impossible to separate the two Jordan chains of $M(0)$. This justifies the term ``chain mixing'' we decided to use. The problem only gets worse if we consider bigger Jordan blocks. Consider the following example

	\begin{align}
	\begin{pmatrix}
	\epsilon & 1 & \epsilon & 0 & 0 \\
	0 & \epsilon^4 & 1 & \epsilon & 0 \\
	0 & 0 & \epsilon^7 & \epsilon^9 & 0 \\
	0 & 0 & 0 & \epsilon^6 & 1 \\
	0 & 0 & 0 & 0 & \epsilon^8
	\end{pmatrix} \ . \label{EX2}
	\end{align}
the diagram representing the structure of the vectors we obtain takes the form

	\hfill\begin{tikzcd}
	   \hat{u}_1 \arrow[r] & \hat{u}_2 \arrow[r] \arrow[rd] & \hat{u}_3 \arrow[r] & \hat{u}_5 \\
	    & & \hat{u}_4 \arrow[ru]
	\end{tikzcd}\hfill \null
	
	\noindent Thus, at first sight we cannot distinguish if $\lim_{\epsilon\rightarrow 0} w^{(3)}=\hat{u}_5$ is a generalised eigenvector of rank $4$, a generalised eigenvector of rank $2$ or even a true eigenvector.

Although the situation looks a bit hopeless, there is more information in these examples that what it seems at first sight. In fact, we want to make two observations: the method to compute generalised eigenvectors is still complete (meaning that we find all the generalised eigenvectors) and Jordan chains are not broken (meaning that if we find a generalised eigenvector of rank $n$ at a given step, its associated generalised eigenvector of rank $n-1$ has to appear in the previous step). The remaining of this section is devoted to analysing these two observations and their consequences.

First of all, in spite of being outside its range of applicability, the method to compute the generalised eigenvalues we proposed in section~\ref{distinguishableJB} is complete also in this case, meaning that it gives us the full set of generalised eigenvectors of $M(0)$. The proof of that is exactly the same in the situation of distinguishable Jordan blocks: the method give us non-zero and non-divergent orthogonal vectors, and we can apply it as many times as eigenvectors of $M(\epsilon)$ we have. In addition, it is still true that all the vectors we obtain by naïvely computing the limit $\epsilon\rightarrow 0$ of the eigenvectors of $M(\epsilon)$ are true eigenvectors of $M(0)$.

Let us now analyse the second observation. If a ``misplaced'' eigenvector appears at the $n$-th step of our computation, we may find a generalised eigenvector of rank $m$ at the $n+m-1$-th step. The proof of this is relatively similar to the proof for the case of distinguishable Jordan block. Given the fact that there exists a linear combination of vectors fulfilling
\begin{displaymath}
	\left[M(\epsilon)- \mathbb{I} \lambda_n (\epsilon) \right] \sum_{i=1}^{n} \alpha_i v_i = \mathcal{O} (\epsilon) \ ,
\end{displaymath}
it is easy to see that a linear combination of $n+m-1$ vectors will fulfil
\begin{equation}
	\prod_{j=1}^m\left[M(\epsilon)- \mathbb{I} \lambda_{n+j-1} (\epsilon) \right] \sum_{i=1}^{n+m-1} \beta_i v_i =\mathcal{O} (\epsilon) \ ,
\end{equation}
as long as we choose $\beta_i=\gamma \frac{\alpha_i}{\prod_{j=1}^{m} (\lambda_i-\lambda_{n+j-1})}$ for $1\leq i<n$, where $\lambda_j$ is the eigenvalue associated with $v_j$ and $\gamma$ is a constant. This means that such linear combinations contain information about generalised eigenvectors of higher rank, which can be extracted by applying the same procedure we used for the case of distinguishable Jordan blocks.

With this information, we can take another look at the issue of vectors that seemingly contradict theorem~\ref{dimker}. If we where in the case of distinguishable Jordan blocks, the number of vectors we have obtained at the $n$-th step would have been the dimension of the space Ker$\{(M(0)-\lambda_i \mathbb{I})^n\}/$Ker$\{(M(0)-\lambda_i \mathbb{I})^{n-1}\}$. This would have implied, via said theorem, that we would have obtained the same number or less vectors at the $n$-th step than at the first step. Thus, if the dimension of the vector space we obtain at a given step is larger than the dimension of the vector space we obtained at the previous step, that is an unambiguous sign that we have found an eigenvector in disguise. In fact, we can say that if we consider the dimension of each of the vector spaces we get at each step of the process, the largest of those dimensions will be a lower bound on the number of Jordan blocks associated with the eigenvalue we are considering.

We want to end the section by showing a reliable but cumbersome procedure to separate the mixed Jordan chains. We have discussed above that, despite the chain mixing issue, our method is able to identify some true eigenvectors of $M(0)$ beyond the shadow of a doubt. Thus, if we are able to completely reconstruct a Jordan chain from its generalised eigenvector of rank 1, we can separate the mixed chains. This reconstruction can be done if we combine the corollary~\ref{lowesttohighest} and the definition of generalised eigenvectors (\ref{geneigenvdefinition}). The procedure goes as follows. Using the corollary~\ref{lowesttohighest}, we can identify the right generalised eigenvector of rank 1 of $M(0)$ with a left generalised eigenvector of highest rank (which is equivalent to a right generalised eigenvector of highest rank of $M^\dagger (0)$). Now, we can use the definition of generalised eigenvectors (\ref{geneigenvdefinition}) to find the left generalised eigenvector of rank 1 associated with it. Using again the corollary~\ref{lowesttohighest}, we can identify this vector with the right generalised eigenvector of highest rank of the same Jordan chain than the generalised eigenvector of rank 1 we started with. At this point, we only have to descend the Jordan chain using the definition (\ref{geneigenvdefinition}). To summarise it, if $v$ is a generalised eigenvector of rank 1 of $M(0)$, then the generalised eigenvectors of its associated Jordan chain can be constructed as follows
\begin{equation}
	M(0)^l \left[ M (0)^\dagger \right]^m v \ ,
\end{equation}
where $0\leq l \leq m$, and $m$ is an non-negative integer such that $\left[ M (0)^\dagger \right]^m v \neq 0$ and $\left[ M (0)^\dagger \right]^{m+1} v =0$.

This issue of chain mixing was also pointed out in \cite{Ahn:2021emp} under the name of \emph{unexpected shortening}, although the authors do not provide a method to deal with it.

\section{Computing generalised eigenvectors for the eclectic spin chain}

In this section, we plan to find the generalised eigenvectors of the eclectic spin chain with a general number of excitations of type $2$ and only one excitation of type $3$, i.e. $K=1$ and $M > 1$. Recall that the $M=K=1$ case is diagonalisable and therefore not particularly interesting to us. Our first goal will be  to construct the eigenvectors of the twisted Hamiltonian (\ref{eq:twistedXXX}) for general values of the twist parameters $q_i$ by means of a modified version of the Nested Coordinate Bethe Ansatz (NCBA). Using these results, we will compute their behaviour at large values of the twist parameters $q_i$ and apply the algorithm presented in the previous sections to extract all the eigenvectors and generalised eigenvectors associated with the $K=1$ sector of the Hamiltonian.

\subsection{The Coordinate Bethe Ansatz for Twisted Spin Chains}

Our first step will be to construct the eigenvectors of the twisted Hamiltonian (\ref{eq:twistedXXX}) for the case of arbitrary length $L$ and excitations with two different flavours. The most straightforward method to achieve this is a slightly modified version of the NCBA. The original method is described in section II.O of \cite{Sutherland}, as well as in section 2 of \cite{Beisert:2005fw}.\footnote{Another version of the NCBA using Zamolodchikov-Faddeev operators can be found in \cite{MariusZF}.}

We should first clarify that the twisted Hamiltonian (\ref{eq:twistedXXX}) does not commute with the regular permutation operator. Nevertheless, it commutes with the shift operator $U$ that we defined in previous sections, meaning that we can apply Bloch's theorem despite the twists. Although this allows us to assume a plane-wave ansatz, we still have to include additional factors proportional to the twist parameters $q_i$. In particular, we will consider the following ansatz for the case of two excitations
\begin{align}\label{M2wavefunction}
	|\psi_{23} (p_1 , p_2) \rangle &=\sum_{1 \leq n_1 < n_2 \leq L} \left[ A_{23} e^{i(p_1  n_1 + p_2  n_2)} \frac{q_3^{n_1}}{q_2^{n_2}} + \tilde{A}_{23} e^{i(p_1  n_2 + p_2  n_1)} \frac{q_3^{n_1}}{q_2^{n_2}} \right] S^{2,+}_{n_1} S^{3,+}_{n_2} |0\rangle \notag \\
	&+\sum_{1 \leq n_1 < n_2 \leq L} \left[ A_{32} e^{i(p_1  n_1 + p_2  n_2)} \frac{q_3^{n_2}}{q_2^{n_1}} + \tilde{A}_{32} e^{i(p_1  n_2 + p_2  n_1)} \frac{q_3^{n_2}}{q_2^{n_1}} \right] S^{3,+}_{n_1} S^{2,+}_{n_2}|0\rangle \ ,
\end{align}
where we define the pseudo-vacuum as the tensor product of $L$ states of type 1, i.e. $|0\rangle=\otimes_{n=1}^L |1\rangle$, and $S_k^{n,+}$ is the operator that transforms a state of type $1$ in the $k$-th term of the tensor product into a state of type $n$. To simplify our notation, we will drop the dependence on the momenta and always understand momenta $p_1$ and $p_2$ unless otherwise is stated.

Substituting this ansatz into the Schrödinger equation $\mathbf{\tilde{H}}_{(q_1,q_2,q_3)}|\psi_{23} \rangle=E|\psi_{23} \rangle$ gives us the following relations
\begin{align}
\label{boi}
	E&=L-4+2\cos (p_1) + 2\cos(p_2 )  \ , \\
	\begin{pmatrix}
		\tilde{A}_{32} \\
		\tilde{A}_{23}
	\end{pmatrix} &= \begin{pmatrix}
		\frac{1}{q_1 q_2 q_3} \frac{e^{i p_2}-e^{i p_1}}{1-2 e^{i p_1} + e^{i (p_1+p_2)}} & \frac{-(1-e^{i p_1})(1-e^{i p_2})}{1-2 e^{i p_1} + e^{i (p_1+p_2)}} \\
		\frac{-(1-e^{i p_1})(1-e^{i p_2})}{1-2 e^{i p_1} + e^{i (p_1+p_2)}} &  q_1 q_2 q_3 \frac{e^{i p_2}-e^{i p_1}}{1-2 e^{i p_1} + e^{i (p_1+p_2)}}
	\end{pmatrix} \begin{pmatrix}
		A_{23} \\
		A_{32}
	\end{pmatrix} \ . \label{boi2}
\end{align}
This $S$-matrix only accounts for the mixed-flavour entries of the full two-body $S$-matrix. If we want the full $S$-matrix, we have to repeat the above computations with the wavefunctions
\begin{align*}
	|\psi_{22} \rangle &=\sum_{1 \leq n_1 < n_2 \leq L} \left[ A_{22} e^{i(p_1  n_1 + p_2  n_2)} q_3^{n_1+n_2} + \tilde{A}_{22} e^{i(p_1  n_2 + p_2  n_1)} q_3^{n_1+n_2} \right] S^{2,+}_{n_1} S^{2,+}_{n_2} |0\rangle \ , \\
	|\psi_{33} \rangle &=\sum_{1 \leq n_1 < n_2 \leq L} \left[ A_{33}  \frac{e^{i(p_1  n_1 + p_2  n_2)} }{q_2^{n_1+n_2}} + \tilde{A}_{33}  \frac{e^{i(p_1  n_2 + p_2  n_1)}}{q_2^{n_1+n_2}} \right] S^{3,+}_{n_1} S^{3,+}_{n_2} |0\rangle \ ,
\end{align*}
which give us the same dispersion relation and $S$-matrix as a regular $\mathfrak{su}(2)$ spin chain. Thus, the complete $S$-matrix takes the form
\begin{equation}
	S(p_2, p_1)=\begin{pmatrix}
		-\frac{1-2 e^{i p_2} + e^{i (p_1+p_2)}}{1-2 e^{i p_1} + e^{i (p_1+p_2)}} & 0 & 0 & 0 \\
		0 & \frac{1}{q_1 q_2 q_3} \frac{e^{i p_2}-e^{i p_1}}{1-2 e^{i p_1} + e^{i (p_1+p_2)}} & \frac{-(1-e^{i p_1})(1-e^{i p_2})}{1-2 e^{i p_1} + e^{i (p_1+p_2)}} & 0 \\
		0 & \frac{-(1-e^{i p_1})(1-e^{i p_2})}{1-2 e^{i p_1} + e^{i (p_1+p_2}} &  q_1 q_2 q_3 \frac{e^{i p_2}-e^{i p_1}}{1-2 e^{i p_1} + e^{i (p_1+p_2)}} & 0 \\
		0 & 0 & 0 & -\frac{1-2 e^{i p_2} + e^{i (p_1+p_2)}}{1-2 e^{i p_1} + e^{i (p_1+p_2)}}
	\end{pmatrix} \ .
\end{equation}

Finally, as we are working with a closed spin chain, we have to impose that our wavefunction is periodic. If we write the wavefunction as $|\psi_{ij} \rangle = \sum_{n_1 < n_2} |\psi_{ij} (n_1 , n_2) \rangle$, the periodicity condition can be expressed as $|\psi_{ij} (n_1 , n_2) \rangle=|\psi_{ij} (n_2 , n_1+L) \rangle$. Substituting our ansatz for the case of two excitations, this equation takes the form
\begin{equation}
\label{yo}
	e^{i p_1 L} \begin{pmatrix}
		q_3^L A_{22} \\
		q_3^L A_{23} \\
		q_2^{-L} A_{32} \\
		q_2^{-L} A_{33}
	\end{pmatrix}=\begin{pmatrix}
		\tilde{A}_{22} \\
		\tilde{A}_{23} \\
		\tilde{A}_{32} \\
		\tilde{A}_{33}
	\end{pmatrix}= S (p_1 , p_2)  \begin{pmatrix}
		A_{22} \\
		A_{23} \\
		A_{32} \\
		A_{33}
	\end{pmatrix} \ , \qquad e^{i p_2 L} \begin{pmatrix}
		q_3^L \tilde{A}_{22} \\
		q_2^{-L} \tilde{A}_{23} \\
		q_3^L \tilde{A}_{32} \\
		q_2^{-L} \tilde{A}_{33}
	\end{pmatrix}= \begin{pmatrix}
		A_{22} \\
		A_{23} \\
		A_{32} \\
		A_{33}
	\end{pmatrix}= S (p_2 , p_1)  \begin{pmatrix}
		\tilde{A}_{22} \\
		\tilde{A}_{23} \\
		\tilde{A}_{32} \\
		\tilde{A}_{33}
	\end{pmatrix}\ .
\end{equation}
These equations are usually referred to as \textit{matrix Bethe equations} in the literature.

Thanks to the integrability of the theory, we can extend the matrix Bethe equations to any number of excitations $M$ and $K$
\begin{equation}
	e^{i p_k L} q_3^{(3-f_k)L} q_2^{(2-f_k)L}=S(p_k , p_{k+1}) \dots S(p_k , p_M) S (p_k , p_1) \dots S (p_k , p_{k-1} )\ ,
\end{equation}
where $f_k$ is the flavour of the $k$-th excitation. This way we get a dressing $q_2^{-L}$ if $f_k=3$ and a dressing $q_3^L$ if $f_k=2$. The presence of these additional factors prevents us to use the regular NCBA construction, but it is enough to slightly modify the prescription to make room for them.

As already mentioned, we will be focussing only on the case of $K=1$. To do so, we consider the following ansatz for the wavefunction with a general number of excitations
\begin{equation}
	\left| \psi \right\rangle =\sum_{k=1}^M \sum_{\sigma \in S_M} \psi_k (\sigma) e^{i \sum_j p_{\sigma (j)} n_j} \frac{q_3^{\sum_j n_j}}{(q_2 q_3)^{n_k}} S_{n_1}^{2+} S_{n_2}^{2+} \cdots S_{n_k}^{3+} \cdots S_{n_M}^{2+} |0\rangle\ . \label{wavefunctiongeneralM}
\end{equation}
In this notation, the coefficients of (\ref{M2wavefunction}) are $\psi_2 (Id.)=A_{23}$, $\psi_1 (Id.)=A_{32}$, $\psi_1 (\tau) =\tilde{A}_{32}$ and $\psi_2 (\tau) =\tilde{A}_{23}$, where $\tau$ is the permutation of $1$ and $2$. To solve the matrix Bethe equations acting on this wavefunction we will closely follow the recipe from section II.O of \cite{Sutherland}, with the appropriate modifications due to the twist factors $q_i$.

As we will be dealing with a wavefunction where nearly all the excitations have the same flavour, it is simpler to solve the matrix Bethe equation for an ``undressed'' $S$-matrix that has the upper-left entry equal to $1$,
\begin{align}
	\lambda_k q_3^{(3-f_k)L} q_2^{(2-f_k)L} \left| \psi \right\rangle &=s(p_k , p_{k+1}) \dots s(p_k , p_M) s (p_k , p_1) \dots s (p_k , p_{k-1} ) \left| \psi \right\rangle \ , \\
	s(p_i,p_j) =-\frac{1-2 e^{i p_j} + e^{i (p_i+p_j)}}{1-2 e^{i p_i} + e^{i (p_i+p_j)}} S(p_i,p_j) &=\begin{pmatrix}
		1 & 0 & 0 & 0 \\
		0 & \frac{1}{q_1 q_2 q_3} \frac{e^{i p_j}-e^{i p_i}}{1-2 e^{i p_i} + e^{i (p_j+p_i)}} & \frac{(1-e^{i p_j})(1-e^{i p_i})}{1-2 e^{i p_i} + e^{i (p_j+p_i)}} & 0 \\
		0 & \frac{(1-e^{i p_j})(1-e^{i p_i})}{1-2 e^{i p_i} + e^{i (p_j+p_i}} &  q_1 q_2 q_3 \frac{e^{i p_j}-e^{i p_i}}{1-2 e^{i p_i} + e^{i (p_j+p_i)}} & 0 \\
		0 & 0 & 0 & 1
	\end{pmatrix} \ . \notag
\end{align}
In addition, we will be changing from the momentum variable to a rapidity variable defined by the map $e^{ip_i}=\frac{u_i}{u_i+1}$.\footnote{The definition of the rapidity we use is the inverse of the one used in \cite{StaudacherAhn}. We have chosen this definition because our definition of the momentum is also the opposite of the one they use, so we can immediately compare the Bethe equations we obtain with the ones of \cite{StaudacherAhn} without any additional transformation.} We do so because this variable will prove extremely useful in solving the Bethe matrix equation. In terms of these variables, the undressed $S$-matrix reads
\begin{equation}
	s(u_i,u_j)=\begin{pmatrix}
		1 & 0 & 0 & 0 \\
		0 & \frac{1}{q_1 q_2 q_3} \frac{u_j - u_i}{u_j-u_i+1} & \frac{1}{u_j-u_i+1} & 0 \\
		0 & \frac{1}{u_j-u_i+1} &  q_1 q_2 q_3 \frac{u_j - u_i}{u_j-u_i+1} & 0 \\
		0 & 0 & 0 & 1
	\end{pmatrix} \ .
\end{equation}

Let us consider now what happens when we disentangle the product of $S$-matrices ``from inside to outside'' if we focus on the coefficients of the form $\psi_k (Id.)$. As we will be considering nearly exclusively these coefficients, we will drop the argument $(Id.)$ in all the equations that follows unless it is necessary. Starting with the innermost operator action, i.e.
\begin{align*}
s(u_k , u_{k+1}) \dots s(u_k , u_M) s (u_k , u_1) \dots \underbrace{ s (u_k , u_{k-1} ) \left| \psi \right\rangle} \ ,
\end{align*}
and noticing that no other factor of the string of $S$-matrices will affect $\psi_{k-1}$ again, the eigenvector equation becomes the following algebraic equation for $\psi_{k-1}$
\begin{equation}
	q_3^L \lambda_k \psi_{k-1}=\frac{1}{u_{k-1}-u_k+1} \psi_{k} + q_1 q_2 q_3 \frac{u_{k-1}-u_k}{u_{k-1}-u_k+1} \psi_{k-1} \ .
\end{equation}
In addition, we will define the function $\psi^{(1)}_k$ as
\begin{equation}
	\psi^{(1)}_k=\frac{1}{q_1 q_2 q_3} \frac{u_{k-1}-u_k}{u_{k-1}-u_k+1} \psi_{k} +\frac{1}{u_{k-1}-u_k+1} \psi_{k-1} \ .
\end{equation}
A similar reasoning holds for the remaining coefficients $\psi_{l}$, giving us the equations
\begin{gather}
	q_3^L \lambda_k \psi_{k-l-1}=\frac{1}{u_{k-l-1}-u_{k}+1} \psi^{(l-1)}_k + q_1 q_2 q_3 \frac{u_{k-l-1}-u_{k}}{u_{k-l-1}-u_{k}+1} \psi_{k-l-1} \ , \label{Suth1} \\
	\psi^{(l)}_k=\frac{1}{q_1 q_2 q_3} \frac{u_{k-l-1}-u_{k}}{u_{k-l-1}-u_{k}+1} \psi^{(l-1)}_k +\frac{1}{u_{k-l-1}-u_{k}+1} \psi_{k-l-1} \ , \label{Suth2}
\end{gather}
where $k-l-1$ and the other superscripts are all to be understood as integers mod $M$. Notice also that equation (\ref{Suth1}) is valid for all the cases except for $l=M-1$ (which corresponds to $k-l-1$ mod $M=k$) where we have to substitute the $q_3^L $ factor by $q_2^{-L}$. This happens because $f_k=3$ and $f_l=2$ if $l\neq k$ for the part of the wavefunction associated with the coefficient $\psi_k$.

Regarding the other coefficients of the wavefunction, $\psi_k (\sigma)$ for $\sigma$ different from identity, we can obtain those coefficients either by appropriately applying the $S$-matrix to $\psi_k (Id.)$, as is shown in the case of $M=2$ in equation (\ref{yo}), or using the periodicity condition.

If we use (\ref{Suth1}) to eliminate the coefficients $\psi^{(l)}_k$ and $\psi^{(l-1)}_k$ from (\ref{Suth2}), we get an equation for the ratio of coefficients
\begin{displaymath}
	\frac{\psi_{k-l-1}}{\psi_{k-l}}=\frac{1}{q_1 q_2 q_3} \, \frac{q_1 q_2 q_3 +(u_k-u_{k-l}) (q_1 q_2 q_3-q_3^{L} \lambda_k ) }{q_3^{L} \lambda_k + (q_1 q_2 q_3-q_3^{L} \lambda_k ) ( u_k - u_{k-l-1})} \ .
\end{displaymath}
To find the eigenvalue $\lambda_k$, we have to realise that the left-hand side of the equation depends only on the difference $k-l$, while the right-hand side has some terms that depend only on $k$. The ratio can be rewritten as\footnote{In terms of the momenta variables, the previous equation takes a more cumbersome form $\frac{\psi_{k-j-1}}{\psi_{k-j}}=\frac{ (e^{i p_{k-j-1}} -1) [q_1 q_2 q_3 (1-2e^{i p_{k-j}} +e^{i (p_{k-j} + p_k)})  + q_3^L \lambda_k (e^{i p_k} - e^{i p_{k-j}})]}{q_1 q_2 q_3 (e^{i p_{k-j}} -1) [ q_3^L \lambda_k (1-2e^{i p_{k}} +e^{i (p_{k-j-1} + p_k)})  - q_1 q_2 q_3 (e^{i p_k} - e^{i p_{k-j-1}})]}$. This makes the reasoning in this paragraph far from transparent.}
\begin{align}
\frac{\psi_{k-l-1}}{\psi_{k-l}} = \frac{1}{q_1 q_2 q_3} \, \frac{u_{k-l} - \bar{x}}{u_{k-l-1} - \bar{x}+1} \ , \quad \text{ with } \quad \bar{x} -u_k = \frac{q_1 q_2 q_3}{q_1 q_2 q_3- \lambda_k q_3^{L}} \ . \label{Suth}
\end{align}
Thus, the only way the right-hand side of the equation depends only on the difference $k-l$ is if $\bar{x}$ is a constant. If we solve for the eigenvalue $\lambda_k$, we get
\begin{equation}
	q_3^L \lambda_k= q_1 q_2 q_3 \frac{\bar{x} -u_k-1}{\bar{x} -u_k} \ .
\end{equation}

At this point, our argument slightly diverges from the one laid down in \cite{Sutherland}. The original NCBA argument uses equation (\ref{Suth}) to write a general $\psi_l$ in terms of $\psi_1$ and substitutes that expression in (\ref{Suth1}). We cannot do the same due to the twist factors $q_i$, which makes the equation for $\psi_k$ different from the rest. Thus, we are forced to relate a general $\psi_l$ to $\psi_k$. This gives us
\begin{equation}
	\psi_{k-l}=\left[\frac{1}{(q_1 q_2 q_3)^l} \prod_{j=0}^{l-1} \frac{u_{k-j} - \bar{x}}{u_{k-j-1} - \bar{x} +1} \right] \psi_k =q_1 q_2 q_3 \, \frac{u_{k} - \bar{x} +1}{u_{k-l} - \bar{x}} \frac{\psi_k}{\prod_{j=0}^l (q_3^L \lambda_{k-j})} \ .
\end{equation}
In addition, the functions $\psi^{(l)}_k$ take the form
\begin{equation}
	\psi^{(l-1)}_k=q_1 q_2 q_3 \frac{u_{k-l-1} - \bar{x} +1}{u_k - \bar{x}} \psi_{k-l-1} =\frac{\psi_k}{\prod_{j=1}^{l-1} (q_3^L \lambda_{k-j})} \ .
\end{equation}
The final step of the computation is to ``close'' the system of equations by identifying the excitation $M+1$ with the excitation $1$. This amounts to equating the function $\psi^{(M-1)}_k$ with $q_2^{-L} \lambda_k \psi_k$. This identification gives us a constraint to the eigenvalues $\lambda_k$, which translates into a constraint for $\bar{x}$
\begin{equation}
	\frac{\lambda_k}{q_2^L} \prod_{\substack{j=1 \\ j\neq k}}^M q_3^L \lambda_j =1 \Longrightarrow  \frac{({q_2}{q_3})^L}{({q_1}{q_2}{q_3})^{M}}
\prod_{j=1}^M\frac{\bar{x}-u_k}{\bar{x}-u_k-1}=1 \ . \label{AuxBetheEq}
\end{equation}
This equation that the constant $\bar{x}$ has to satisfy is the auxiliary Bethe equation of our model.

Once we have solved the matrix Bethe equations for the undressed $S$-matrix, solving the matrix Bethe equations for the original $S$-matrix is immediate, as they reduce to the algebraic equation
\begin{equation}
	e^{i p_k L}= \lambda_k\prod_{j \neq k} \left( - \frac{1-2 e^{i p_k} + e^{i (p_k+p_j)}}{1-2 e^{i p_j} + e^{i (p_k+p_j)}} \right)
\end{equation}
where $\lambda_k$ are the eigenvalues we have computed above. Substituting them and replacing the momenta by rapidities, the Bethe equations take the form
\begin{equation}
	\frac{q_3^L}{q_1 q_2 q_3} \frac{\bar{x} -u_k}{\bar{x} -u_k-1} \prod_{j \neq k} \frac{u_k - u_j +1}{u_k - u_j -1} = \left( \frac{u_k+1}{u_k} \right)^L \ . \label{BetheEq}
\end{equation}
The fact that both the Bethe equations and the auxiliary Bethe equation we obtained are exactly the same as the ones appearing in \cite{StaudacherAhn} when we identify our $\bar{x}$ with their $v+1$ works as a sanity check.

\subsection{The coalescence of the Bethe states and the structure of generalised eigenvectors}

After computing the eigenvectors for finite values of the twists, we can now use our algorithm to find the generalised eigenvectors of the Hamiltonian (\ref{eq:twistedXXX}) at the exceptional point of large twists. As we have done in previous sections, we will substitute the twist parameters $q_i$ by $\frac{\xi_i}{\epsilon}$ and compute the limit $\epsilon \rightarrow 0$.

We will assume that chain mixing does not affect the naïve identification of Jordan chains, but we will only prove it for some specific cases. Nevertheless, we will compare our results with the results from \cite{StaudacherAhn} and \cite{Ahn:2021emp} to show that they agree.

We will study in detail the particular cases of $M=2$ and $M=3$ before proceeding to analyse the case of arbitrary $M$.

\subsubsection{Two excitations}\label{M2section}

If we want to find how each of the terms appearing in the wavefunction (\ref{M2wavefunction}) behaves in the strongly twisted limit, we need to find how the momenta $p_1$ and $p_2$ and the coefficients $A_{ij}$ and $\tilde{A}_{ij}$ behave in such a limit.

In order to find how the momenta scale with $\epsilon$, we have to study the behaviour of the Bethe equations (\ref{BetheEq}) and the auxiliary Bethe equation (\ref{AuxBetheEq}) in the limit of large twist. This computation was carried out previously in \cite{StaudacherAhn}, so we will just borrow their results. In the limit $\epsilon \rightarrow 0$ the two physical rapidities $u_1$, $u_2$ and the auxiliary rapidity $\bar{x}$ scale as follows
\begin{equation}
	u_1 \approx \epsilon^\alpha u_- \ , \qquad u_2 \approx -1+\epsilon^\alpha u_+ \ , \qquad \bar{x} \approx u_2 + \epsilon^\gamma \hat{v} \ ,
\end{equation}
where $\alpha = \frac{L-3}{L-1}$ and $\gamma = 2 L-6$, while the Bethe equations become
\begin{equation}
	u_-^L = \frac{\xi}{\xi_3^L} (u_- - u_+) \ , \quad (-u_+)^L = \frac{\xi}{\xi_2^L} (u_- - u_+) \ , \quad \hat{v}=-\frac{2 \xi_1^L}{\xi^{L-2}} \ ,
\end{equation}
where $\xi=\xi_1 \xi_2 \xi_3$. At the level of momenta, the behaviour of the rapidities translates into
\begin{equation}
	e^{i p_1} = \frac{u_- \epsilon^\alpha}{u_- \epsilon^\alpha + 1} \approx u_- \epsilon^\alpha \ , \qquad e^{i p_2} = \frac{-1+\epsilon^\alpha u_+}{\epsilon^\alpha u_+} \approx -u_+^{-1} \epsilon^{-\alpha} \ .
\end{equation}
This means that the plane wave factor $e^{i (p_1 n_1 + p_2 n_2)}$ behaves as
\begin{equation}
	e^{i (p_1 n_1 + p_2 n_2)} \sim \epsilon^{\alpha (n_1 - n_2)} \ ,
\end{equation}
which depends only on the relative position of the excitations. This was not unexpected, as the state is an eigenvector of the shift operator $U$.

The next step is to find how the coefficients $A_{ij}$ scale with $\epsilon$. As usual, we are free to normalise the wavefunction to the value that better suits us, so we will be choosing $A_{23}=1$ for simplicity. For the case of $M=2$, equation (\ref{Suth}) takes the form
\begin{align*}
%\frac{\psi_2 (Id.)}{\psi_1 (Id.)}=\frac{A_{23}}{A_{32}} = Q \frac{\bar{x}-u_1}{\bar{x}-u_2+1},
\frac{\psi_1 (Id.)}{\psi_2 (Id.)}=\frac{A_{32}}{A_{23}} = \frac{1}{q_1 q_2 q_3} \frac{u_2 - \bar{x}}{u_1-\bar{x}+1} \ .
\end{align*}
For the coefficients with a tilde, we can use equation~(\ref{boi2}) written in terms of rapidities instead of momenta
\begin{equation}
	\begin{pmatrix}
		\tilde{A}_{23} \\
		\tilde{A}_{32}
	\end{pmatrix} =  \begin{pmatrix}
\frac{1}{q_1 q_2 q_3} \frac{u_2 - u_1}{u_2-u_1+1} & \frac{1}{u_2-u_1+1} \\
\frac{1}{u_2-u_1+1} & - q_1 q_2 q_3 \frac{u_2 - u_1}{u_2-u_1+1}	
	\end{pmatrix} \begin{pmatrix}
		A_{23} \\
		A_{32}
	\end{pmatrix} \ .
\end{equation}
Substituting the behaviour of the rapidities when $\epsilon$ approaches zero, we find
\begin{align}
\begin{pmatrix}
		A_{23} \\
		A_{32}
	\end{pmatrix}&= \begin{pmatrix}
 1 \\
 \frac{1}{q_1 q_2 q_3} \, \frac{-\hat{v} \epsilon^\gamma}{2+(u_- - u_+) \epsilon^\alpha - \hat{v} \epsilon^\gamma} 
	\end{pmatrix} \approx  \begin{pmatrix}
 1 \\
 -\frac{\hat{v}}{2 \xi} \epsilon^{\gamma+3}
	\end{pmatrix} \notag \\
	\begin{pmatrix}
\tilde{A}_{32}
 \\
 \tilde{A}_{23}
	\end{pmatrix}&= \begin{pmatrix}
 \frac{\epsilon ^{-\alpha
   }+u_--u_+}{q_1 q_2 q_3 u_--q_1 q_2 q_3
   u_+}+\frac{-\hat{v} \epsilon
   ^{\gamma -\alpha
   }}{q_1 q_2 q_3 (u_--u_+)
   \left(2-u_- \epsilon ^{\alpha
   }+u_+ \epsilon ^{\alpha
   }-\hat{v} \epsilon ^{\gamma
   }\right)} \\
 \frac{\epsilon ^{-\alpha
   }}{u_--u_+}-\frac{-\hat{v} \epsilon ^{\gamma -\alpha} (-1+u_- \epsilon^\alpha -u_+ \epsilon^\alpha)}{(u_--u_+)
   \left(2-u_- \epsilon ^{\alpha
   }+u_+ \epsilon ^{\alpha
   }-\hat{v} \epsilon ^{\gamma
   }\right)}
	\end{pmatrix} \approx  \begin{pmatrix}
 \frac{1}{\xi (u_- - u_+)} \epsilon^{3-\alpha} \\
 \frac{1}{u_- - u_+}  \epsilon^{-\alpha}
	\end{pmatrix}
\end{align}

If we put everything together, we can find that each term that appears in the wavefunction (\ref{M2wavefunction}) behaves as
\begin{align*}
	&A_{23} e^{i(p_1  n_1 + p_2  n_2)} \frac{q_3^{n_1}}{q_2^{n_2}} \sim \epsilon^{(1-\alpha) (n_2-n_1)} \ , & 	&\tilde{A}_{23} e^{i(p_1  n_2 + p_2  n_1)} \frac{q_3^{n_1}}{q_2^{n_2}} \sim \epsilon^{(1+\alpha)(n_2-n_1)-\alpha} \ , \\
	&A_{32} e^{i(p_1  n_1 + p_2  n_2)} \frac{q_3^{n_2}}{q_2^{n_1}} \sim \epsilon^{\gamma + 3 - (1+\alpha)(n_2-n_1)} \ , & 	&\tilde{A}_{32} e^{i(p_1  n_2 + p_2  n_1)} \frac{q_3^{n_2}}{q_2^{n_1}} \sim \epsilon^{3-\alpha-(1-\alpha)(n_2-n_1)} \ ,
\end{align*}
where we should keep in mind that $n_2 - n_1 \in \{1, \cdots, L-1\}$, as we need to take into account the restriction $1 \leq n_1 < n_2 \leq L$ from the wavefunction. With this restriction in mind, we can see that the term associated with the coefficient $A_{23}$ always dominates over the term associated with $\tilde{A}_{23}$ (as we can check that $(1-\alpha) (n_2-n_1) < (1+\alpha)(n_2-n_1)-\alpha$ for all the values we are interested in), and the term associated with the coefficient $\tilde{A}_{32}$ always dominates over the term associated with $A_{32}$ (as, similarly, $3-\alpha-(1-\alpha)(n_2-n_1)<\gamma + 3 - (1+\alpha)(n_2-n_1)$ for all the values we are interested in). Thus, from now on, we will work with the wavefunction (\ref{M2wavefunction}) as if the contributions associated with the coefficients $\tilde{A}_{23}$ and $A_{32}$ were not there.

At this point, we have enough information about the wavefunction $|\psi_{23} \rangle$ to apply our algorithm. We start by looking at how $|\psi_{23} \rangle$ behaves when $\epsilon$ approaches zero
\begin{align*}
	|\psi_{23} \rangle &\approx \sum_{n_1 < n_2} \left[ A_{23} e^{i(p_1  n_1 + p_2  n_2)} \frac{q_3^{n_1}}{q_2^{n_2}} S^{2,+}_{n_1} S^{3,+}_{n_2} + \tilde{A}_{32} e^{i(p_1  n_2 + p_2  n_1)} \frac{q_3^{n_2}}{q_2^{n_1}} S^{3,+}_{n_1} S^{2,+}_{n_2} \right] |0\rangle \\
	&\approx \sum_{n_1 < n_2} \left[ \frac{(\xi_3 u_-)^{n_1}}{(-\xi_2 u_+)^{n_2}} \epsilon^{(1-\alpha ) (n_2 - n_1)} S^{2,+}_{n_1} S^{3,+}_{n_2} + \frac{1}{\xi (u_- - u_+)} \frac{(\xi_3 u_-)^{n_2}}{(-\xi_2 u_+)^{n_1}} \epsilon^{3-\alpha- (1-\alpha)(n_2 - n_1)} S^{3,+}_{n_1} S^{2,+}_{n_2} \right] |0\rangle \ .
\end{align*}
Using that $\alpha$ fulfils the relation $3-\alpha- (1-\alpha)(L-1)=1-\alpha$, we see that the terms that dominate take the form
\begin{displaymath}
	\frac{|\psi_{23} \rangle}{ \epsilon^{1-\alpha}} \approx \left[\sum_{n_1=1}^{L-1} \frac{-1}{\xi_2 u_+} \left(-\frac{\xi_3 u_-}{\xi_2 u_+}\right)^{n_1} S^{2,+}_{n_1} S^{3,+}_{n_1+1} \right] |0\rangle + \frac{1}{\xi (u_- - u_+)} \frac{(\xi_3 u_-)^{L}}{(-\xi_2 u_+)^{1}} S^{3,+}_{1} S^{2,+}_{L} |0\rangle \ .
\end{displaymath}
Using the Bethe equation $(-\xi_2 u_+)^L= \xi (u_- - u_+) $, we can combine the two terms into one, giving us
\begin{equation}
	\frac{|\psi_{23} \rangle}{ \epsilon^{1-\alpha}} \approx \frac{-1}{\xi_2 u_+} \sum_{n_1=1}^{L} \left(-\frac{\xi_3 u_-}{\xi_2 u_+}\right)^{n_1} S^{2,+}_{n_1} S^{3,+}_{n_1+1} |0\rangle = |\psi^{(1)} (u_-/u_+) \rangle \ ,
\end{equation}
where the position of the operators $S^{i,+}_n$ should be understood modulo $L$. We can see that this is indeed the locked state studied in \cite{StaudacherAhn}.

Using the results from the previous sections, we can claim that the states $|\psi^{(1)} (u_-/u_+) \rangle$ are eigenstates of the Hamiltonian (\ref{eq:twistedXXX}) at its exceptional point, that is, they will be eigenstates of the Hamiltonian $\mathbf{\hat{H}}_{\xi_1 , \xi_2 , \xi_3}$. Notice that these states do not depend on the rapidities $u_+$ and $u_-$ separately, but on the combination $\frac{u_-}{u_+}$, which is equivalent to say that they depend only on the total momentum $p_1+p_2$. One might claim that this is not true due to the factor $\frac{1}{\xi_2 u_+}$, but it is nothing more than a normalisation factor that we can eliminate.

The next step will be to compute the vectors $w_{ij}^{(1)}$ defined in equation (\ref{recipeextra}). As we commented in section \ref{howtolimit}, we take two vectors that coalesce to the same vector at the exceptional point, find a linear combination that becomes zero at the exceptional point, and then compute the limit of this linear combination divided by its norm (or, at least, by $\epsilon$ to the correct power such that it gives us a finite non-vanishing limit).

As we have just shown, two vectors $|\psi_{23} (p_1 , p_2) \rangle$ and $|\psi_{23} (p'_1 , p'_2) \rangle$ coalesce to the same eigenvector $|\psi^{(1)} (u_-/u_+) \rangle$ if their total momentum coincide, i.e., if $p_1 + p_2=p'_1 + p'_2$. In addition, by writing them explicitly we can check that
\begin{equation}
	e^{i p_1} |\psi_{23} (p_1 , p_2) \rangle - e^{i p'_1}|\psi_{23} (p'_1 , p'_2) \rangle =0 \ .
\end{equation}
Indeed, if we study how this linear combination behaves when $\epsilon$ approaches zero, we find that the terms with $n_2=n_1+1$ of $e^{i p_1} |\psi_{23} (p_1 , p_2) \rangle$ cancel with those from $e^{i p'_1}|\psi_{23} (p'_1 , p'_2) \rangle$. Therefore, the leading contribution in this linear combination takes the form
\begin{align*}
	&e^{i p_1} |\psi_{23} (p_1 , p_2) \rangle - e^{i p'_1}|\psi_{23} (p'_1 , p'_2) \rangle \approx { \epsilon^{2-\alpha}} \sum_{n_1=1}^{L-2} \left[ \frac{u_-}{(-\xi_2 u_+)^2} \left(-\frac{\xi_3 u_-}{\xi_2 u_+}\right)^{n_1} - \frac{u'_-}{(-\xi_2 u'_+)^2} \left(-\frac{\xi_3 u'_-}{\xi_2 u'_+}\right)^{n_1} \right] \cdot \\
	&\cdot S^{2,+}_{n_1} S^{3,+}_{n_1+2} |0\rangle + \epsilon^{3- (1-\alpha)(L-2)} \left[ \frac{u_-}{\xi (u_- - u_+)} \frac{(\xi_3 u_-)^{L}}{(-\xi_2 u_+)^{2}} - \frac{u'_-}{\xi (u'_- - u'_+)} \frac{(\xi_3 u'_-)^{L}}{(-\xi_2 u'_+)^{2}} \right] S^{3,+}_{2} S^{2,+}_{L} |0\rangle + \\
	&+\epsilon^{3- (1-\alpha)(L-2)} \left[ \frac{u_-}{\xi (u_- - u_+)} \frac{(\xi_3 u_-)^{L-1}}{(-\xi_2 u_+)^{1}} - \frac{u'_-}{\xi (u'_- - u'_+)} \frac{(\xi_3 u'_-)^{L-1}}{(-\xi_2 u'_+)^{1}} \right] S^{3,+}_{1} S^{2,+}_{L-1} |0\rangle \ .
\end{align*}
Using the explicit value of $\alpha$, we can check that $3- (1-\alpha)(L-2)=2-\alpha$. If we now use the condition that the total momentum of both states is the same, that is, $\frac{u_-}{u_+}=\frac{u'_-}{u'_+}$, and the Bethe equations, we can rewrite this expression into a more compact form
\begin{equation}
	\frac{e^{i p_1} |\psi_{23} (p_1 , p_2) \rangle - e^{i p'_1}|\psi_{23} (p'_1 , p'_2) \rangle}{ \epsilon^{2-\alpha}} \approx \frac{u_-}{u_+} \, \frac{u'_+ - u_+}{\xi_2^2 u_+ u'_+} \sum_{n_1=1}^{L} \left(-\frac{\xi_3 u_-}{\xi_2 u_+}\right)^{n_1} S^{2,+}_{n_1} S^{3,+}_{n_1+2} |0\rangle = |\psi^{(2)} (u_-/u_+) \rangle \ .
\end{equation}

Hence, using the results from the previous sections, we can claim that the states $|\psi^{(2)} (u_+/u_-) \rangle$ will be generalised eigenstates of rank 2 of the Hamiltonian (\ref{eq:twistedXXX}) at its exceptional point, that is, they will be generalised eigenstates of rank 2 of the Hamiltonian $\mathbf{\hat{H}}_{\xi_1 , \xi_2 , \xi_3}$. Similarly to $|\psi^{(1)} (u_-/u_+) \rangle$, these states do not depend on the two momenta separately, but only thought the total momentum $\frac{u_-}{u_+}$.

We can repeat the process again with the combination
\begin{equation}
	\frac{e^{i (p_1 + p'_1)}}{e^{i p'_1} - e^{i p_1}} \left( e^{i p_1} |\psi_{23} (p_1 , p_2) \rangle - e^{i p'_1}|\psi_{23} (p'_1 , p'_2) \rangle \right) - \frac{e^{i (p_1 + p^{\prime \prime}_1)}}{e^{i p^{\prime \prime}_1} - e^{i p_1}} \left( e^{i p_1} |\psi_{23} (p_1 , p_2) \rangle - e^{i p^{\prime \prime}_1}|\psi_{23} (p^{\prime \prime}_1 , p^{\prime \prime}_2) \rangle \right)
\end{equation}
which, after some algebra, we can find that it behaves as
\begin{equation}
	\text{const. } \epsilon^{3-\alpha} \sum_{n_1=1}^{L} \left(-\frac{\xi_3 u_-}{\xi_2 u_+}\right)^{n_1} S^{2,+}_{n_1} S^{3,+}_{n_1+3} |0\rangle = \epsilon^{3-\alpha} |\psi^{(3)} (u_-/u_+) \rangle \ .
\end{equation}
This indicates us that $|\psi^{(3)} (u_-/u_+) \rangle $ is a generalised eigenvector of rank 3 of $\mathbf{\hat{H}}_{\xi_1 , \xi_2 , \xi_3}$ up to a possible additional contribution proportional to $|\psi^{(2)} (u_-/u_-) \rangle $, as we commented in section~\ref{distinguishableJB} and \ref{howtolimit}.

If we continue with this process, we find that the generalised eigenvector of rank $k$ of $\mathbf{\hat{H}}_{\xi_1 , \xi_2 , \xi_3}$ with $M=2$ and $K=1$ takes the form
\begin{equation}
	 |\psi^{(k)} (u_-/u_+) \rangle =\text{const. }  \sum_{n_1=1}^{L} \left(-\frac{\xi_3 u_-}{\xi_2 u_+}\right)^{n_1} S^{2,+}_{n_1} S^{3,+}_{n_1+k} |0\rangle \ .
\end{equation}
As we see, this expression is only meaningful if $k \: \in \: \{1, \cdots L-1 \}$. In addition, we find exactly one vector for a given value of the total momentum at any step of the process. Thus, we can claim with some certainty that the sector of $M=2$ and $K=1$ has Jordan chains of size $L-1$ for every allowed value of the total momentum. This result is in agreement with the numerical results from \cite{StaudacherAhn} and the combinatorial results from \cite{Ahn:2021emp}.

We want to stress that this is only a claim at this point, as we cannot be sure that we are not mislead by chain mixing. To show that this is not the case, we will follow the procedure we described at the end of section \ref{nondistinguishableJB}. Consider the adjoint of the strongly twisted Hamiltonian  
\begin{align}
	& & \mathbf{\hat{H}}_{\xi_1 , \xi_2 , \xi_3}^\dagger &= \sum_{l=1}^L (\hat{\mathbb{P}}^{l,l+1})^\dagger \ , \\
\hat{\mathbb{P}}^\dagger \,|12\rangle &=\xi_3^*\, |21\rangle \ , & \hat{\mathbb{P}}^\dagger \,|23\rangle &=\xi_1^*\, |32\rangle \ , & \hat{\mathbb{P}}^\dagger \, |31\rangle &=\xi_2^*\, |13\rangle \ .
\end{align}
According to corollary \ref{lowesttohighest}, the generalised eigenvector of rank 1 we have computed, $|\psi^{(1)} (u_-/u_+) \rangle$, behaves like a generalised eigenvector of maximal rank of $\mathbf{\hat{H}}_{\xi_1 , \xi_2 , \xi_3}^\dagger$. In fact, we can check that\footnote{Notice that not all the left generalised eigenvectors are orthogonal to the left generalised eigenvector of rank 1, $|\psi^{(L-1)} (u_-/u_+) \rangle$. This is because the corollary \ref{lowesttohighest} only requires orthogonality with respect to the eigenvector we are starting with, not with the one associated with the generalised eigenvector of highest rank we are computing.}
\begin{align*}
	&\mathbf{\hat{H}}_{\xi_1 , \xi_2 , \xi_3}^\dagger S_{n}^{2,+} S_{n+1}^{3,+} |0 \rangle= \xi_3^* S_{n-1}^{2,+} S_{n+1}^{3,+} |0 \rangle + \xi_1^* S_{n}^{3,+} S_{n+1}^{2,+} |0 \rangle + \xi_2^* S_{n}^{2,+} S_{n+3}^{3,+} |0 \rangle \ , \\
	&\mathbf{\hat{H}}_{\xi_1 , \xi_2 , \xi_3}^\dagger |\psi^{(1)} (u_-/u_+) \rangle= \rho (\xi_2^* + \xi_3^*) |\psi^{(2)} (u_-/u_+) \rangle + \rho'\xi_1^* |\psi^{(L-1)} (u_-/u_+) \rangle \ , \\
	&\mathbf{\hat{H}}_{\xi_1 , \xi_2 , \xi_3}^\dagger \mathbf{\hat{H}}_{\xi_1 , \xi_2 , \xi_3}^\dagger |\psi^{(1)} (u_-/u_+) \rangle \propto (\xi_2^* + \xi_3^*)^2 |\psi^{(3)} (u_-/u_+) \rangle \ , \\
	&(\mathbf{\hat{H}}_{\xi_1 , \xi_2 , \xi_3}^\dagger )^k |\psi^{(1)} (u_-/u_+) \rangle \propto (\xi_2^* + \xi_3^*)^k |\psi^{(k+1)} (u_-/u_+) \rangle \ , \\
	&(\mathbf{\hat{H}}_{\xi_1 , \xi_2 , \xi_3}^\dagger )^{L-1} |\psi^{(1)} (u_-/u_+) \rangle \propto \mathbf{\hat{H}}_{\xi_1 , \xi_2 , \xi_3}^\dagger (\xi_2^* + \xi_3^*)^{L-2} |\psi^{(L-1)} (u_-/u_+) \rangle =0 \ ,
\end{align*}
where $\rho$ and $\rho ' $ are two constants that are not important for our discussion. Notice that the generalised eigenvector of highest rank has the form of a locked state, but with the 2's to the right of the 3, instead of to its left. These states are called \emph{anti-locked states}, and they were shown to be generalised eigenvectors of highest rank in \cite{StaudacherAhn}.

As we now have the generalised eigenvector of highest rank of the Jordan chain, we can apply the definition of generalised eigenvector to compute the remaining generalised eigenvectors of the chain
\begin{equation}
	\mathbf{\hat{H}}_{\xi_1 , \xi_2 , \xi_3}^k |\psi^{(L-1)} (u_-/u_+) \rangle = \rho^{\prime \prime} (\xi_2 + \xi_3)^k |\psi^{(L-k-1)} (u_-/u_+) \rangle + \rho^{\prime \prime \prime} \xi_1 \delta_{k,1} |\psi^{(1)} (u_-/u_+) \rangle \ ,
\end{equation}
where $\rho^{\prime \prime}$ and $\rho^{\prime \prime \prime}$ are two constants that are not important for our discussion. Thus, as long as $\xi_2 + \xi_3 \neq 0$, we are able to act $L-2$ times with the strongly twisted Hamiltonian on $|\psi^{(L-1)} (u_-/u_+) \rangle$. This means that our claim of having Jordan cells of size $L-1$ is correct provided $\xi_2 + \xi_3 \neq 0$. On the other hand, if $\xi_2 + \xi_3 = 0$ and $\xi_1 \neq 0$, our claim is wrong and we actually have a Jordan chain of size $2$ and $L-3$ eigenvectors. In addition, if $\xi_2 + \xi_3 = 0$ and $\xi_1 = 0$, there are no non-trivial Jordan chains and we have $L-1$ eigenvectors.

The issue of chain mixing in this setting was also studied in \cite{Ahn:2021emp}. The authors conjectured that chain mixing does not affect these results for $K=1$, but they were not able to provide a proof of this statement. Nevertheless, they checked that their conjecture numerically up to $L=30$ and $M=6$.

\subsubsection{Three excitations}

Let us move now to the case of $M=3$. The analysis follows the same structure as in the $M=2$ case, but it presents some new peculiarities that are worth mentioning before tackling the sector with a general number of excitations.

As in the previous case, we start by studying the behaviour of the Bethe equations (\ref{BetheEq}) and the auxiliary Bethe equation (\ref{AuxBetheEq}) in the limit of large twist. Although this computation was carried out previously in \cite{StaudacherAhn}, we need the next-to-leading order contribution for $u_1$ and $u_2$, which were not computed there. In the limit $\epsilon \rightarrow 0$ the three physical rapidities $u_1$, $u_2$, $u_3$ and the auxiliary rapidity $\bar{x}$ scale as follows
\begin{equation}
	u_1 \rightarrow \epsilon^\alpha u_- + \epsilon^{2 \alpha} \tilde{u}_- \ , \qquad u_2 \rightarrow \epsilon^\alpha u'_-+ \epsilon^{2 \alpha} \tilde{u}'_- \ , \qquad u_3 \rightarrow -1+\epsilon^\beta u_+ \ , \qquad \bar{x} \rightarrow u_3 + \epsilon^\gamma \hat{v} \ ,
\end{equation}
where $\alpha = \frac{L-M-1}{L-M+1}$, $\beta = \frac{L-3M-3}{L-M+1}$ and $\gamma = 2 L-3M$, while the Bethe equations become
\begin{align*}
	&u_-^L =(-1)^{M-1} \frac{\xi}{\xi_3^L} u_+ \ , \quad \frac{ u_-^2 - \tilde{u}_-}{u_-^{L+1}} = 2 \frac{\xi_3^L}{\xi} \frac{\tilde{u}_- - \tilde{u}'_-}{u_+}  \ , \\
	&(u'_-)^L=(-1)^{M-1} \frac{\xi}{\xi_3^L} u_+ \ , \quad \frac{ u_-^{\prime 2} - \tilde{u}'_-}{(u'_-)^{L+1}} = 2 \frac{\xi_3^L}{\xi} \frac{\tilde{u}'_- - \tilde{u}_-}{u_+}  \ , \\
	&-(-u_+)^{L-M+1} = \frac{\xi_3^L}{2^{M-1} \xi} \hat{v} \ , \quad \hat{v}=-\frac{2^{M-1} \xi_1^L}{\xi^{L-M}} \ ,
\end{align*}
where $\xi=\xi_1 \xi_2 \xi_3$. At the level of momenta, the behaviour of the rapidity translate into
\begin{equation}
	e^{i p_1} \approx u_- \epsilon^\alpha + (\tilde{u}_- - u_-^2) \epsilon^{2\alpha} \ , \qquad e^{i p_2} \approx u'_- \epsilon^\alpha + (\tilde{u}'_- - u_-^{\prime 2}) \epsilon^{2\alpha} \ , \qquad e^{i p_3} \approx -u_+^{-1} \epsilon^{-\beta} \ .
\end{equation}
Substituting this result into the plane wave factor $\exp [i (p_1 n_1 + p_2 n_2 + p_3 n_3)]$, we find that it behaves as
\begin{equation}
	e^{i (p_1 n_1 + p_2 n_2 + p_3 n_3)} \frac{q_3^{n_1} q_3^{n_2}}{q_2^{n_3}} \sim \epsilon^{(\alpha-1) (n_1 + n_2) - (\beta -1) n_3}=\epsilon^{(\alpha-1) [(n_1 -n_3) + (n_2 -n_3)] } \ ,
\end{equation}
where we have used that $(M-K) (\alpha-1)=K(\beta -1)$. As it happened in the $M=2$ case, this quantity depends only on the differences between the positions of excitations of type $2$ and the excitation of type $3$ because our wavefunction is an eigenstate of the shift operator $U$.

Let us move now to the coefficients $\psi_k (\sigma)$. First, we will fix $\psi_3 (Id.)=1$ and use equation~(\ref{Suth}) to fix $\psi_1 (Id.)$ and $\psi_2 (Id.)$
\begin{equation}
	\frac{\psi_{2}(Id.)}{\psi_{3}(Id.)} = \frac{1}{q_1 q_2 q_3} \, \frac{u_{3} - \bar{x}}{u_{2} - \bar{x}+1} \ , \quad   \frac{\psi_{1}(Id.)}{\psi_{2}(Id.)} = \frac{1}{q_1 q_2 q_3} \, \frac{u_{2} - \bar{x}}{u_{1} - \bar{x}+1} \ .
\end{equation}
In addition, we can easily construct the coefficients $\psi_{i} ((jk))$, with $i$, $j$, $k$ different integers. As the permutation $(jk)$ affects two excitations with the label $2$, these excitations scatter with the usual $\mathfrak{su}(2)$ $S$-matrix, and we have $\psi_{i} ((jk))=S_{22} (p_j , p_k)  \psi_{i} (Id.)=-\frac{1-2 e^{i p_j} + e^{i (p_j+p_k)}}{1-2 e^{i p_k} + e^{i (p_j+p_k)}} \psi_{i} (Id.)$. This is enough to fix a total of six coefficients of the wavefunction.

To find the remaining coefficients, we can use the periodicity condition to relate a term of the form $S^{2 +}_{n_1}  S^{2 +}_{n_2}  S^{3 +}_{n_3} $ with a term of the form $S^{3 +}_{n'_1}  S^{2 +}_{n'_2}  S^{2 +}_{n'_3} $, and, if we apply it again, to term of the form $S^{2 +}_{n^{\prime \prime}_1}  S^{3 +}_{n^{\prime \prime}_2}  S^{2 +}_{n^{\prime \prime}_3} $. This implies that, for example
\begin{equation}
	\psi_1 ((321)) = \psi_3 (Id.)  \frac{e^{i p_3 L}}{q_2^L} \ , \quad \psi_2 ((123)) = \psi_3 (Id.) e^{i (p_2+p_3) L} \frac{q_3^L}{q_2^L} \ .
\end{equation}
As we had a total of six coefficients, and we can obtain two other coefficients from a given one through these relations, this is enough to completely fix all the coefficients of the wavefunction.

Putting everything together, we can see that the coefficients that are associated with $\psi_3 (Id.)$ are the ones that dominate with respect to the other contributions. Consider, for example, the part of the wavefunction associated with the raising operators $S^{2 +}_{n_1}  S^{2 +}_{n_2}  S^{3 +}_{n_3} $
\begin{align*}
	\psi_3 (Id.) e^{i (p_1 n_1 + p_2 n_2 + p_3 n_3)} \frac{q_3^{n_1 + n_2}}{q_2^{n_3}} &\sim \psi_3 ((12)) e^{i (p_2 n_1 + p_1 n_2 + p_3 n_3)} \frac{q_3^{n_1 + n_2}}{q_2^{n_3}} \sim \epsilon^{(\alpha-1) [(n_1 -n_3) + (n_2 -n_3)] } \ , \\
	\psi_3 ((123)) e^{i (p_2 n_1 + p_3 n_2 + p_1 n_3)} \frac{q_3^{n_1 + n_2}}{q_2^{n_3}} &\sim \psi_1 (Id.) \frac{q_2^L}{e^{ip_1 L}} \epsilon^{\alpha (n_1+n_3) - \beta n_2} \epsilon^{n_1 + n_2 - n_3} \\
	&\sim \epsilon^{\gamma -3} \epsilon^{-(1+\alpha) L} \epsilon^{(\alpha-1) [(n_1 -n_3) + (n_2 -n_3)]+ (\alpha + \beta) (n_3 - n_2)} \ , \\
	\psi_3 ((13)) e^{i (p_3 n_1 + p_2 n_2 + p_1 n_3)} \frac{q_3^{n_1 + n_2}}{q_2^{n_3}} &\sim \psi_1 (Id.) S_{22} (p_3 , p_2) \frac{q_2^L}{e^{ip_1 L}} \epsilon^{\alpha (n_2+n_3)-\beta n_1 } \epsilon^{n_1 + n_2 - n_3} \\
	&\sim \epsilon^{\gamma -3} \epsilon^{-\beta} \epsilon^{-(1+\alpha) L} \epsilon^{(\alpha-1) [(n_1 -n_3) + (n_2 -n_3)]+ (\alpha + \beta) (n_3 - n_1)} \ , \\
	\psi_3 ((321)) e^{i (p_3 n_1 + p_1 n_2 + p_2 n_3)} \frac{q_3^{n_1 + n_2}}{q_2^{n_3}} &\sim \psi_2 (Id.) \frac{q_2^L}{e^{i (p_1 + p_2) L} q_3^L} \epsilon^{\alpha (n_2+n_3)-\beta n_1 } \epsilon^{n_1 + n_2 - n_3} \\
	&\sim \epsilon^3 \epsilon^{-2\alpha L} \epsilon^{(\alpha-1) [(n_1 -n_3) + (n_2 -n_3)]+ (\alpha + \beta) (n_3 - n_1)} \ , \\
	\psi_3 ((23)) e^{i (p_1 n_1 + p_3 n_2 + p_2 n_3)} \frac{q_3^{n_1 + n_2}}{q_2^{n_3}} &\sim \psi_2 (Id.) \frac{q_2^L}{e^{i (p_1 + p_2) L} q_3^L} S_{22} (p_1 , p_3) \epsilon^{\alpha (n_1+n_3) - \beta n_2} \epsilon^{n_1 + n_2 - n_3} \\
	&\sim \epsilon^3 \epsilon^{-2\alpha L} \epsilon^{\beta} \epsilon^{\alpha (n_1+n_3) - \beta n_2} \epsilon^{n_1 + n_2 - n_3} \ .
\end{align*}
We should be careful, because this analysis overlooks one detail. In the $\epsilon\rightarrow 0$ limit the momenta $p_1$ and $p_2$ become equal. This means that the leading contribution of $\psi_3 (Id.) e^{i (p_1 n_1 + p_2 n_2 + p_3 n_3)} \frac{q_3^{n_1 + n_2}}{q_2^{n_3}}$ perfectly cancels the leading contribution of $\psi_3 ((12)) e^{i (p_2 n_1 + p_1 n_2 + p_3 n_3)} \frac{q_3^{n_1 + n_2}}{q_2^{n_3}}$, as $S_{22}(p,p)=-1$ if the two excitations have the same flavour. Thankfully, the next-to-leading contribution to the rapidities $u_1$ and $u_2$ makes them different, giving us an additional $\epsilon^\alpha$ factor compared to our previous naïve computation
\begin{equation}
	\psi_3 (Id.) e^{i (p_1 n_1 + p_2 n_2 + p_3 n_3)} \frac{q_3^{n_1 + n_2}}{q_2^{n_3}} + \psi_3 ((12)) e^{i (p_2 n_1 + p_1 n_2 + p_3 n_3)} \frac{q_3^{n_1 + n_2}}{q_2^{n_3}}\sim \epsilon^\alpha \epsilon^{(\alpha-1) [(n_1 -n_3) + (n_2 -n_3)] } \ .
\end{equation}
All in all, we find a situation analogous to the $M=2$ case: the contributions associated with $\psi_3 (Id.)$ and $\psi_3 ((12))$ dominate over the other four. Thus, from now on, we will work with the wavefunction as if only the contributions associated with the coefficient $\psi_3 (Id.)$ were the only ones present.

At this point, we have enough information about the wavefunction (\ref{wavefunctiongeneralM}) to apply our algorithm. We start by looking at how $|\psi \rangle$ behaves when $\epsilon$ approaches zero
\begin{align*}
	|\psi \rangle &\approx \sum_{\substack{n_1 < n_2 < n_3 \\ 1\leq n_i \leq L}}^L \left[ \left( \psi_3 (Id.) e^{i(p_1  n_1 + p_2  n_2)} + \psi_3 ((12)) e^{i(p_2  n_1 + p_1  n_2)} \right) e^{ip_3 n_3} \frac{q_3^{n_1+n_2}}{q_2^{n_3}} S^{2,+}_{n_1} S^{2,+}_{n_2} S^{3,+}_{n_3} \right. \\
	&+ \left( \psi_1 ((321)) e^{i(p_1  n_2 + p_2  n_3)} + \psi_1 ((13)) e^{i(p_2  n_2 + p_1  n_3)} \right) e^{ip_3 n_1} \frac{q_3^{n_2+n_3}}{q_2^{n_1}} S^{3,+}_{n_1} S^{2,+}_{n_2} S^{2,+}_{n_3} \\
	&+\left. \left( \psi_2 ((23)) e^{i(p_1  n_1 + p_2  n_3)} + \psi_2 ((123)) e^{i(p_2  n_1 + p_1  n_3)} \right) e^{ip_3 n_2} \frac{q_3^{n_1+n_3}}{q_2^{n_2}} S^{2,+}_{n_1} S^{3,+}_{n_2} S^{2,+}_{n_3} \right] |0\rangle \\
	&\approx \sum_{\substack{n_1 < n_2 < n_3 \\ 1\leq n_i \leq L}}^L \left[ (n_1 - n_2) \frac{\tilde{u}_- - \tilde{u}'_-}{u_-}\frac{(\xi_3 u_-)^{n_1 + n_2}}{(-\xi_2 u_+)^{n_3}} S^{2,+}_{n_1} S^{2,+}_{n_2} S^{3,+}_{n_3}+ (n_2 - n_3) \frac{\tilde{u}_- - \tilde{u}'_-}{u_-}  \right. \\
	&\left.\cdot\frac{(\xi_3 u_-)^{n_2 + n_3}}{(-\xi_2 u_+)^{n_1+L}} S^{3,+}_{n_1} S^{2,+}_{n_2} S^{2,+}_{n_3} + (n_1 - n_3 - L) \frac{\tilde{u}_- - \tilde{u}'_-}{u_-} \frac{(\xi_3 u_-)^{n_1 + n_3+L}}{(-\xi_2 u_+)^{n_2+L}} S^{2,+}_{n_1} S^{3,+}_{n_2} S^{2,+}_{n_3} + \right] \\
	&\epsilon^{\alpha + (\alpha-1) [(n_1 -n_3) + (n_2 -n_3)] } |0\rangle \ .
\end{align*}
We can see that the terms that dominate in this expansion are
\begin{align*}
	\frac{|\psi \rangle}{\epsilon^{3-2\alpha}} &\approx \left[ \sum_{n_1=1}^{L-2} \frac{\xi_3(\tilde{u}'_- - \tilde{u}_- )}{\xi_2^2 u_+^2}  \left( -\frac{\xi_3^2 u_-^2}{\xi_2 u_+} \right)^{n_1} S^{2,+}_{n_1} S^{2,+}_{n_1+1} S^{3,+}_{n_1+2} \right] |0\rangle + \frac{\tilde{u}'_- - \tilde{u}_- }{u_-}  \frac{(\xi_3 u_-)^{2L-1}}{(-\xi_2 u_+)^{L+1}} \\
	&\cdot S^{3,+}_{1} S^{2,+}_{L-1} S^{2,+}_{L} |0\rangle + \frac{\tilde{u}'_- - \tilde{u}_- }{u_-}  \frac{(\xi_3 u_-)^{2L+1}}{(-\xi_2 u_+)^{L+2}} S^{2,+}_{1} S^{3,+}_{2} S^{2,+}_{L} |0\rangle \ .
\end{align*}
We can combine the three terms of this expression into one, giving us
\begin{equation}
	\frac{|\psi \rangle}{\epsilon^{3-2\alpha}} \approx \frac{\xi_3(\tilde{u}'_- - \tilde{u}_- )}{\xi_2^2 u_+^2} \sum_{n_1=1}^{L} \left( -\frac{\xi_3^2 u_-^2}{\xi_2 u_+} \right)^{n_1} S^{2,+}_{n_1} S^{2,+}_{n_1+1} S^{3,+}_{n_1+2} |0\rangle \ ,
\end{equation}
where the position of the operators $S^{i,+}_n$ should be understood modulo $L$. We can see that this is indeed the locked state studied in \cite{StaudacherAhn}. Notice that this eigenvector depends on the individual momenta separately only through the total momenta $p_1+p_2+p_3$.

The process of computing the generalised eigenvector of rank 2 is similar to the case of $M=2$. As two states $|\psi (p_1 , p_2, p_3) \rangle$ and $|\psi (p'_1 , p'_2, p'_3) \rangle$ whose momenta fulfil the relation $p_1+p_2+p_3=p'_1 +p'_2+p'_3$ will coalesce to the same eigenvector of the defective matrix, the information about the generalised eigenvector of rank 2 should be codified in the linear combination $e^{i (2p_1 + p_2)} |\psi (p_1 , p_2, p_3) \rangle - e^{i (2p'_1 + p'_2)} |\psi (p'_1 , p'_2, p'_3) \rangle$. After some algebra, we can check that
\begin{equation}
	\frac{e^{i (2p_1 + p_2)} |\psi (p_1 , p_2, p_3) \rangle - e^{i (2p'_1 + p'_2)} |\psi (p'_1 , p'_2, p'_3) \rangle}{\epsilon^{4}} \approx \text{const.} \sum_{n_1=1}^{L} \left( -\frac{\xi_3^2 u_-^2}{\xi_2 u_+} \right)^{n_1} S^{2,+}_{n_1} S^{2,+}_{n_1+2} S^{3,+}_{n_1+3} |0\rangle \ .
\end{equation}

The differences with the $M=2$ case start to appear when we move to the generalised eigenvectors of rank 3. After some algebra, we find that the limit of the appropriate linear combination of eigenvectors of the diagonalisable Hamiltonian can be written as the linear combination of two vectors that depend only on the total momentum
\begin{equation}
	a_1 \left[ \sum_{n_1=1}^{L} \left( -\frac{\xi_3^2 u_-^2}{\xi_2 u_+} \right)^{n_1} S^{2,+}_{n_1} S^{2,+}_{n_1+3} S^{3,+}_{n_1+4} |0\rangle \right] + a_2 \left[ \sum_{n_1=1}^{L} \left( -\frac{\xi_3^2 u_-^2}{\xi_2 u_+} \right)^{n_1} S^{2,+}_{n_1} S^{2,+}_{n_1+1} S^{3,+}_{n_1+3} |0\rangle \right] \ ,
\end{equation}
where $a_1$ and $a_2$ are coefficients that depend on the individual momenta, not only on the total momentum. As we have seen in section~\ref{nondistinguishableJB}, this is an indication that we have found another true eigenvector of the defective Hamiltonian (up to linear combinations with the other generalised eigenvectors we have found).

We find a similar or ever more complex situation as we explore linear combinations that contain information about generalised eigenvectors of higher rank. Thus, we would like to find a systematic method to know the number of independent vectors we will find at a given step. The answer to this question is actually not that difficult. If we analyse again the dependence on $\epsilon$ of each term in $|\psi \rangle$, we see that what is actually controlling which vectors appear at a given level is the momentum factor
\begin{equation}
	e^{i (p_1 n_1 + p_2 n_2 + p_3 n_3)} \frac{q_3^{n_1} q_3^{n_2}}{q_2^{n_3}} \sim \epsilon^{(\alpha-1) [(n_1 -n_3) + (n_2 -n_3)] } \ , \label{pre-diophantus}
\end{equation}
meaning that the only relevant contribution to how the wavefunction behaves as $\epsilon$ goes to zero depends on the separation of the excitations. Consequently, the number of independent vectors that our algorithm provides at a given step has to be equal to the number of solutions to the equation $(n_3 -n_1) + (n_3 -n_2)=n$ for the appropriate value of $n$. Notice that, as our positions $n_i$ take positive integer values, what we have to solve is a \emph{Diophantine equation}. However, this is not enough, as we have to take into account that the wavefunction ~(\ref{wavefunctiongeneralM}) is written with the restriction $1\leq n_1 < n_2 <n_3 \leq L$. This means that we have to supplement the Diophantine equation with the restrictions $1\leq (n_3 - n_2) < (n_3 - n_1) \leq L-1$.

Both the Diophantine equation and the restrictions simplify significantly if we perform the change of coordinates
\begin{equation}
	(n_3 - n_2) = x_2 + 1 \ , \qquad (n_3 - n_1)=x_1+x_2+2 \ ,
\end{equation}
giving us
\begin{equation}
	x_1+2x_2=n-3=\Delta \ ,  \qquad x_1 \geq 0 \ , \qquad x_2 \geq 0 \ , \qquad x_1+x_2+2 \leq L-1 \ .
\end{equation}
We can further simplify these conditions by introducing an additional variable $x_0\geq 0$ to rewrite the last inequality as an equality. Thus, the set of equations we have to solve is
\begin{equation}
	\left\{\begin{matrix}
	x_0 + x_1 + x_2=L-3 \\
	x_1 + 2x_2 = \Delta \\
	x_j \geq 0
	\end{matrix} \right. \label{Diophantine} \ .
\end{equation}
After all these rewritings, our equations take the standard form of a system of linear Diophantine equations.

\begin{table}[t]
\begin{center}
	\begin{tabular}{|c|c|c|} \hline
	$\Delta$ & $(x_0 , x_1 , x_2)$ & Number of sol. \\ \hline
	0 & $(3,0,0)$ & 1 \\ \hline
	1 & $(2,1,0)$ & 1 \\ \hline
	2 & $(1,2,0)$, $(2,0,1)$ & 2 \\ \hline
	3 & $(0,3,0)$, $(1,1,1)$ & 2 \\ \hline
	4 & $(0,2,1)$, $(1,0,2)$ & 2 \\ \hline
	5 & $(0,1,2)$ & 1 \\ \hline
	6 & $(0,0,3)$ & 1 \\ \hline
	\end{tabular}
	\qquad
	\begin{tabular}{|c|c|c|} \hline
	$\Delta$ & $(x_0 , x_1 , x_2)$ & Number of sol. \\ \hline
	0 & $(4,0,0)$ & 1 \\ \hline
	1 & $(3,1,0)$ & 1 \\ \hline
	2 & $(2,2,0)$, $(3,0,1)$ & 2 \\ \hline
	3 & $(1,3,0)$, $(2,1,1)$ & 2 \\ \hline
	4 & $(0,4,0)$, $(1,2,1)$, $(2,0,2)$ & 3 \\ \hline
	5 & $(0,3,1)$, $(1,1,2)$ & 2 \\ \hline
	6 & $(0,2,2)$, $(1,0,3)$ & 2 \\ \hline
	7 & $(0,1,3)$ & 1 \\ \hline
	8 & $(0,0,4)$ & 1 \\ \hline
	\end{tabular}
\end{center}
	\caption{Solutions to the system of Diophantine equations~(\ref{Diophantine}) for the cases of $L=6$ and $L=7$. In both cases we consider $M=3$ and $K=1$.} \label{Solutions}
\end{table}

As examples, we have collected the solutions for the cases of $L=6$ and $L=7$ in Table~\ref{Solutions}. For the case of $L=6$, the system of equations has one solution or more for seven different values of $\Delta$, while it has two solutions for three different values of $\Delta$. This allows us to conjecture that the Jordan normal form of the Hamiltonian contains a Jordan cell of size 3 and a Jordan cell of size 7. Similarly, for the case of $L=7$ we get one solution or more for nine different values of $\Delta$, we get two solutions or more for five different values of $\Delta$, and we get three solutions just once. Thus, we can conjecture that the Jordan chains for this case are of size one, five and seven. We can see that our results match with the ones in \cite{StaudacherAhn}.

We will now see how to generalise these results to any length $L$. For that, we should notice that $(x_0+1, x_1, x_2)$ is a solution for length $L+1$ if $(x_0, x_1, x_2)$ is a solution for length $L$. Therefore, we can find the number of solutions recursively by adding the solutions associated with $x_0=0$. In order to know how many solutions we have to add and how we have to add them, we have to solve the system of Diophantine equations~(\ref{Diophantine}) with $x_0=0$
\begin{equation}
	\left\{\begin{matrix}
	x_1 + x_2=L-3 \\
	x_1 + 2x_2 = \Delta \\
	x_j \geq 0
	\end{matrix} \right. \label{reducedDiophantine} \ .
\end{equation}
In contrast with our previous system of equations, this one has exactly one solution for a given value of $\Delta$, characterised by $x_2=\Delta+3-L$ and $x_1=2L-6-\Delta$. However, not all these solutions are admissible due to the restrictions $x_i \geq 0$. This forces $\Delta$ to be in the interval $L-3 \leq \Delta \leq 2L-6$. With a little bit of effort, we can prove by induction that there are a total of $\lfloor \frac{\Delta}{2} \rfloor +1$ solutions if $0\leq \Delta \leq L-3$ and $\lfloor L-3-\frac{\Delta}{2} \rfloor +1$ if $L-3\leq \Delta \leq 2L-6$, where $\lfloor x \rfloor$ is the floor function, i.e., the function that gives us the largest integer less or equal to $x$. We will prove below that this result can be written more compactly in terms of a special function.

Once we have this result, we can easily compute the size of the Jordan chains for any length $L$. The longest Jordan chain will correspond to how many times the system of equations has one or more solutions. We can see that there is a total of $2L-6+1=2L-5$ different values of $\Delta$ for which equation~(\ref{Diophantine}) has one or more solutions. Due to the form of the floor function, the system of equations has two or more solutions for a total of $2L-5-4=2L-9$ different values of $\Delta$, it has three or more solutions for a total of $2L-13$ different values of $\Delta$ and so on. To sum up, the Jordan chains for the case with $M=3$ and $K=1$ have size $2L-5-4n$, with $n\geq 0$. This result is in complete agreement with the computations performed in \cite{StaudacherAhn} and  \cite{Ahn:2021emp}.

\subsubsection{General values of $M$} \label{Gaussianpolynomialssection}

A similar argument can be laid down for any value of $M$: as we approach $\epsilon\rightarrow 0$, the momentum contribution appearing in the wavefunction behaves as
\begin{equation}
	e^{i \sum_i p_i n_i} \frac{\prod_i q_3^{n_i}}{q_2^{n_M} q_2^{n_M}} \sim \epsilon^{(\alpha-1) (\sum_i n_i) - (\alpha + \beta -2) n_M}=\epsilon^{(\alpha-1) \sum_i (n_i - n_M) } \ , \label{pre-diophantus2}
\end{equation}
where we have used that $(M-K) (\alpha-1)=K (\beta -1)$ to make the final rewriting.

Now we proceed similarly to how we made the counting in the case of $M=3$. In order to count how may generalised eigenvector we find at a given step of our computation, we have to find how many different solutions the linear equation $\sum_i (n_M - n_i)=n$ has for a given value of $n$, taking into account the restriction $1\leq (n_M - n_{M-1}) < \dots < (n_M - n_2) < (n_M - n_1) \leq L-1$. If we perform the change of variables
\begin{equation}
	(n_M - n_i) = (M-i) + \sum_{j=i}^{M-1} x_j \ ,
\end{equation}
and introduce the variable $x_0 \geq 0$ such that we can rewrite the condition $(n_M - n_1) \leq L-1$ as the equality $(x_0 + n_M - n_1) = L-1$, we can again map the problem into a system of linear Diophantine equations
\begin{equation}
	\left\{\begin{matrix}
	\sum_{j=0}^{M-1} x_j=L-M \\
	\sum_{j=0}^{M-1} (j x_j ) = \Delta \\
	x_j \geq 0
	\end{matrix} \right. \label{DiophantineM} \ .
\end{equation}

\begin{table}[t]
\begin{center}
	\begin{tabular}{|c|c|c|} \hline
	$\Delta$ & $(x_0 , x_1 , x_2, x_3)$ & Number of sol. \\ \hline
	0 & $(4,0,0,0)$ & 1 \\ \hline
	1 & $(3,1,0,0)$ & 1 \\ \hline
	2 & $(2,2,0,0)$, $(3,0,1,0)$ & 2 \\ \hline
	3 & $(1,3,0,0)$, $(2,1,1,0)$, $(3,0,0,1)$ & 3 \\ \hline
	4 & $(0,4,0,0)$, $(1,2,1,0)$, $(2,0,2,0)$, $(2,1,0,1)$ & 4 \\ \hline
	5 & $(0,3,1,0)$, $(1,2,0,1)$, $(1,1,2,0)$, $(2,0,1,1)$ & 4 \\ \hline
	6 & $(0,2,2,0)$, $(0,3,0,1)$, $(1,0,3,0)$, $(1,1,1,1)$, $(2,0,0,2)$ & 5 \\ \hline
	7 & $(0,2,1,1)$, $(0,1,3,0)$, $(1,0,2,1)$, $(1,1,0,2)$ & 4 \\ \hline
	8 & $(0,1,2,1)$, $(0,0,4,0)$, $(1,0,1,2)$, $(0,2,0,2)$ & 4 \\ \hline
	9 & $(0,1,1,2)$, $(0,0,3,1)$, $(1,0,0,3)$ & 3 \\ \hline
	10 & $(0,1,0,3)$, $(0,0,2,2)$ & 2 \\ \hline
	11 & $(0,0,1,3)$ & 1 \\ \hline
	12 & $(0,0,0,4)$ & 1 \\ \hline
	\end{tabular}
\end{center}
	\caption{Solutions to the system of Diophantine equations~(\ref{DiophantineM}) for the case of $L=8$,  $M=4$ and $K=1$.} \label{Solutions2}
\end{table}

As an example, we have collected the solutions for the case of $L=8$, $M=4$ and $K=1$ in Table~\ref{Solutions2}. Following a similar logic as in the $M=3$, we can see that the Jordan chains in this example have lengths $13$, $9$, $7$, $5$ and $1$. This matches the results from Table 2 in \cite{StaudacherAhn}.

Let us focus now on solving the Diophantine equation for general values of $L$. For any value of $M$ we can prove that $(x_0+1, x_1, \dots ,x_{M-1})$ is a solution for length $L+1$ if $(x_0, x_1, \dots ,x_{M-1})$ is a solution for length $L$. Therefore, we can find the number of solutions recursively by adding the solutions associated with $x_0=0$. In order to know how many solutions we have to add and where we have to add them, we have to solve the system of Diophantine equations~(\ref{DiophantineM}) with $x_0=0$. Substituting $x_0=0$ in those equations and subtracting the first equation form the second one once, we find
\begin{equation}
	\left\{\begin{matrix}
	\sum_{j=1}^{M-1} x_j=L-M \\
	\sum_{j=1}^{M-1} [(j-1) x_j ] = \Delta +M-L \\
	x_j \geq 0
	\end{matrix} \right. \label{reducedDiophantineM} \ ,
\end{equation}
which is exactly the Diophantine equation for the case with length $L-1$, $M-1$ total excitations and $K=1$. Thus, we can find the number of solutions of the system of Diophantine equations~(\ref{DiophantineM}) for given values of $\Delta$, $L$ and $M$ by taking the number solution for $\Delta$, $L-1$ and $M$ and adding the number of solutions for $\Delta+M-L$, $L-1$ and $M-1$. We have presented an explicit example of how this computation works in Table~\ref{Solutions3}.

From this table, we can read that the Jordan chains for the case of $L=9$, $M=4$ and $K=1$ have lengths $4$, $6$, $8$, $10$, $12$ and $16$; while for the case of $L=10$, $M=4$ and $K=1$ they have lengths $3$, $7$, $7$, $9$, $11$, $13$, $15$ and $19$. These numbers match precisely the computations from \cite{StaudacherAhn}.

For given values of $L$ and $M$, we can construct the following generating function
\begin{equation}
	F(L,M,x)=\sum_{\Delta=0}^\infty x^\Delta \# \{\text{Solutions of eq.~\ref{DiophantineM} for given values of } L,M,\Delta\} \ .
\end{equation}
In terms of this generating function, the recurrence relation we have found can be written as
\begin{equation}
	F(L,M,x)=F(L-1,M,x)+x^{L-M} F(L-1,M-1,x) \ ,
\end{equation}
which has to be supplemented with the initial condition $F(L,2,x)=\sum_{j=0}^{L-2} x^j$. This condition comes from our previous result for $M=2$, where we found a single Jordan cell of size $L-1$ for each allowed value of the total momentum.

The solution to this recurrence relation is a known family of functions called \emph{Gaussian polynomials} or \emph{$q$-deformed binomial coefficients}. In particular
\begin{equation}
	F(L,M,x)=\binom{L-1}{M-1}_x \ .
\end{equation}
In fact, we can check that the rows of Table~\ref{Solutions3} coincide with the coefficients of the polynomials $\binom{8-1}{4-1}_x$, $x^{9-4}\binom{8-1}{3-1}_x$, $\binom{9-1}{4-1}_x$, $x^{10-4}\binom{9-1}{3-1}_x$ and $\binom{10-1}{4-1}_x$ respectively.\footnote{We want to thank the online database ``On-Line Encyclopedia of Integer Sequences'', as otherwise we would have not been able to identify the sequences of number we got from the system of Diophantine equations with the coefficients of Gaussian polynomials.}

This generating function was computed independently without using integrability in \cite{Ahn:2021emp}. The fact that we used completely different methods but we obtained matching results strengthens their validity.

\begin{table}[t]
\begin{center}
	\begin{tabular}{|c|c|c|c|c|c|c|c|c|c|c|c|c|c|c|c|c|c|c|c|c|} \hline
		\backslashbox{$(L,M)$}{$\Delta$} & 0 & 1 & 2 & 3 & 4 & 5 & 6 & 7 & 8 & 9 & 10 & 11 & 12 & 13 & 14 & 15 & 16 & 17 & 18 \\ \hline
		$(8,4)$ & 1 & 1 & 2 & 3 & 4 & 4 & 5 & 4 & 4 & 3 & 2 & 1 & 1 & 0 & 0 & 0 & 0 & 0 & 0 \\ \hline
		$(8,3) \cdot q^{8-3}$ & 0 & 0 & 0 & 0 & 0 & 1 & 1 & 2 & 2 & 3 & 3 & 3 & 2 & 2 & 1 & 1 & 0 & 0 & 0 \\ \hline
		$(9,4)$ & 1 & 1 & 2 & 3 & 4 & 5 & 6 & 6 & 6 & 6 & 5 & 4 & 3 & 2 & 1 & 1 & 0 & 0 & 0 \\ \hline
		$(9,3) \cdot q^{9-3}$ & 0 & 0 & 0 & 0 & 0 & 0 & 1 & 1 & 2 & 2 & 3 & 3 & 4 & 3 & 3 & 2 & 2 & 1 & 1 \\ \hline
		$(10,4)$ & 1 & 1 & 2 & 3 & 4 & 5 & 7 & 7 & 8 & 8 & 8 & 7 & 7 & 5 & 4 & 3 & 2 & 1 & 1 \\ \hline
	\end{tabular}
\end{center}
	\caption{Explicit computation of the number of solutions for the cases $L=9$, $M=4$ and $K=1$ and $L=10$, $M=4$ and $K=1$, starting from the number of solutions for the case with $L=8$, $M=4$ and $K=1$.} \label{Solutions3}
\end{table}

\subsubsection{Properties of Gaussian polynomials and their consequences}

Here we plan to analyse some properties of the Gaussian polynomials and infer some features of the Jordan cells of the eclectic spin chain from them.

First, we should give a proper definition of the Gaussian polynomials
\begin{equation}
	\binom{m}{r}_q= \frac{(1-q^m) (1-q^{m-1}) \ldots (1-q^{m-r+1})}{(1-q) (1-q^2) \ldots (1-q^r)} \ .
\end{equation}
Despite its appearance, the numerator is always divisible by the denominator, so they are polynomials. From this definition, we can check that Gaussian polynomials fulfil the following properties
\begin{enumerate}
	\item Reflection: $\binom{m}{r}_q= \binom{m}{m-r}_q $.
	\item First Pascal identity: $\binom{m}{r}_q = q^r \binom{m-1}{r}_q + \binom{m-1}{r-1}_q$.
	\item Second Pascal identity: $\binom{m}{r}_q = \binom{m-1}{r}_q + q^{m-r} \binom{m-1}{r-1}_q$.
\end{enumerate}

Straightforward from the definition, we have the following identity
\begin{equation}
	\binom{m}{1}_q= \sum_{j=0}^{m-1} q^j \ .
\end{equation}
The coefficients of this Gaussian polynomial give us the information about the Jordan cells of the case of length $L=m+1$ and two excitations with different flavour, i.e. the sector $M=2$ and $K=1$ we analysed in section \ref{M2section}. As the polynomial goes up to $q^{m-1}$ and all coefficients are 1, this means that we have a single Jordan cell of size $L-1$ for each allowed value of the total momentum.

We can combine this result with the first Pascal identity to show that
\begin{equation}
	\text{Coeff}\left( \binom{m}{r}_q ; q^0 \right)= 1 \ ,
\end{equation}
where Coeff$\left( P(q) ; q^n \right)$ means the coefficient accompanying $q^n$ in the polynomial $P(q)$. In addition, we can prove that the highest non-vanishing coefficient has to be of the form
\begin{equation}
	\text{Coeff}\left( \binom{m}{r}_q ; q^{(m-r)r} \right)= 1 \ .
\end{equation}
This means that the largest Jordan cell for the case of length $L$, $M$ excitations and $K=1$ has size $(L-M)(M-1)+1$. We can check that this result perfectly reproduces the values $L-1$ and $2L-5$ we found for $M=2$ and $M=3$ respectively.

We can again follow the procedure explained in section \ref{distinguishableJB} and show that this Jordan chain is not spoiled by chain mixing. This is quicker to prove if we set $\xi_1=\xi_2=0$, as $\mathbf{\hat{H}}_{0 , 0 , \xi_3}^\dagger$ can only move the 2's in the state to the left until they find a 3. As we have a total of $L-M$ 1's, we can apply $\mathbf{\hat{H}}_{0 , 0 , \xi_3}^\dagger$ a total of $(M-1)(L-M)$ times, which implies that the Jordan chain has size $(L-M)(M-1)+1$. This proves the conjecture regarding chain mixing from \cite{Ahn:2021emp} for the longest Jordan chain.

Although it takes a bit more effort, we can prove the following properties
\begin{align}
	\text{Coeff}\left( \binom{m}{r}_q ; q^1 \right) &= 1 \ , & \text{Coeff}\left( \binom{m}{r}_q ; q^{(m-r)r-1} \right) &= 1 \ ,
\end{align}
which holds for general $r$ and $m>r$, and
\begin{align}
	\text{Coeff}\left( \binom{m}{r}_q ; q^2 \right) &= 2 \ , & \text{Coeff}\left( \binom{m}{r}_q ; q^{(m-r)r-2} \right) &= 2 \ ,
\end{align}
which holds for general $r\geq 2$ and $m>$Max$(r,3)$. This means that, for a fixed value of the total momentum, we have a second Jordan cell only if $M\geq 3$, and it has size $(L-M)(M-1)-3$. This result agrees with the value $2L-9$ we found for $M=3$.

Another interesting property we can prove is that
\begin{equation}
	\text{Coeff}\left( \binom{m}{r}_q ; q^s \right)= \text{Coeff}\left( \binom{m}{r}_q ; q^{(m-r)r-s} \right) \ ,
\end{equation}
that is, the Gaussian polynomials are palindromic polynomials. This means that the size of all Jordan cells have to differ by even numbers.

\section{Conclusions}

In this article, we have presented an algorithm to construct the generalised eigenvectors associated with a defective Hamiltonian by studying how a perturbed version of that Hamiltonian that is diagonalisable approaches its exceptional point, that is, the point where it becomes non-diagonalisable. The idea behind this method is that we need to consider the limit of the linear combination of at least $n$ eigenvectors of the diagonalisable matrix to find the generalised eigenvector of rank $n$ of the defective matrix.

We have proven that this method is complete for the case when all the Jordan blocks of the Hamiltonian have different eigenvalues. By this statement, we mean that we are able to find all the generalised eigenvectors of the Hamiltonian, to which eigenvalue they are associated, and what their rank is. In addition, we have shown that this method can be applied to the case where there are two or more Jordan blocks that share the same eigenvalue. Although we have proven that the method also works in this case, in the sense that it retrieves all the generalised eigenvectors and tells us to which eigenvalue they are associated, there are many occasions in which the method is not able to correctly identify the rank of the eigenvector. This is due to the chain mixing effect, as some generalised eigenvectors may appear at the end of a different Jordan chain, so we may misguidedly consider Jordan chains that are longer or shorter than they really are (even up to the point where two Jordan chains appear as just one). Thankfully, our method is capable of identifying correctly some of the generalised eigenvectors of rank 1. We have shown that this is enough information to reconstruct the full Jordan chain, allowing us to disentangle the chain mixing.

We applied this method to the case of the eclectic spin chain introduced in \cite{Ipsen:2018fmu} to describe the one-loop dilatation operator of the fishnet conformal field theory. In that article and its continuation, \cite{StaudacherAhn}, the authors described in detail the Hamiltonian of this spin chain and how the algebraic Bethe ansatz works on that setting. In particular, they studied the Bethe roots and Bethe states of this model by computing the strongly twisted limit of the Bethe roots and Bethe states constructed for the theory at finite twist. Using this approach, they were able to find the correct number of Bethe roots, but due to the coalescence of eigenstates at the exceptional point, there were not able to find the full set of Bethe states.

Using the method we presented here, that encourages us to compute limits of linear combinations of eigenvectors that coalesce to the same vector at the exceptional point, we were able to find the generalised eigenvectors of this eclectic spin chain, together with the eigenvectors that the authors of \cite{StaudacherAhn} knew that existed from numerical computations but could not find by computing the limit of eigenvectors. In fact, our counting of eigenstates and generalised eigenstates is in agreement with the counting they do for the particular values of the number of excitations we consider here.

The counting from \cite{StaudacherAhn} was later refined to any values of $M$ and $K$ in \cite{Ahn:2021emp} by means of combinatorial arguments. The cornerstone of this computation is the definition of generalised eigenvector and the fact that the highest eigenstate of the longest Jordan chain is always an anti-locked state. This allows the authors to use orthogonality and combinatorial arguments to extract all the generalised eigenvectors with no knowledge of the finitely twisted spin chain. All the results we presented here are consistent with the ones they obtain for $K=1$. In fact, we want to stress that they also find that the number of generalised eigenstates of $\mathbf{\hat{H}}_{(\xi_1,\xi_2,\xi_3)}$ can be codified in terms of the same Gaussian polynomials. As their method and our are substantially different, this matching is more than welcome.

Despite our success in finding and classifying the generalised eigenstates of the eclectic spin chain for $K=1$, we have only showed that chain mixing is not present for the longest of the Jordan chains. This is because, although we can reconstruct a Jordan chain from its lowest generalised eigenvector, the process is tedious and becomes more cumbersome as the chain gets longer (as we are reconstructing its left generalised eigenvectors one by one). This partly confirms the conjecture laid down in \cite{Ahn:2021emp} claiming that there is no chain mixing in the eclectic spin chain.

Another interesting direction we would like to pursue is to investigate if we can adapt the method to the Algebraic Bethe Ansatz. Among other advantages, the Nested Algebraic Bethe Ansatz gives us a systematic method for constructing the eigenstates of a diagonalisable integrable system, which allows us to describe them and their properties without too much effort. Despite that, the explicit form of the wavefunctions is not immediately available, and requires us to make some additional algebraic computations that most of the time are far from easy. This is the main reason why we made use of the NCBA in this article, as our method requires us to know the wavefunctions of the diagonalisable Hamiltonian in detail.

In fact, the discussion at the end of section~\ref{nondistinguishableJB} gives us a hint of the form that generalised Bethe vectors should have. If $\mathbb{B} |0\rangle$ is an eigenvector associated with the non-diagonalisable transfer matrix $\tau (u)$ with eigenvalue $\Lambda (u)$, then
\begin{equation}
	[\tau (u) - \Lambda (u)]^l [\tau (u)^\dagger - \Lambda (u)^* ]^m \mathbb{B} |0\rangle \ ,
\end{equation}
has to be a generalised eigenvector of $\tau (u)$ if $[\tau (u)^\dagger - \Lambda (u)^* ]^m \mathbb{B} |0\rangle\neq 0$ and $[\tau (u)^\dagger - \Lambda (u)^* ]^{m+1} \mathbb{B} |0\rangle=0$. Although we may be able to construct eigenvectors of the non-diagonalisable transfer matrix by computing the limit of eigenvectors of a diagonalisable one, we have shown that not every eigenvector can be constructed using this procedure due to the presence of chain mixing (see also \cite{StaudacherAhn}). This means that the limit of the $B$ operators of the diagonalisable transfer matrix are not enough for an Algebraic Bethe Ansatz for non-diagonalisable $R$-matrices.  Nevertheless, the above argument would work for every eigenvector regardless of its origin. Thus, the only obstacle would be to find a method to construct all the eigenvectors of the non-diagonalisable transfer matrix $\tau (u)$ from first principle.

Finally, in this article we have applied this algorithm to compute generalised eigenvectors of the eclectic spin chain, but nothing prevents us from applying it to other non-diagonalisable integrable systems. All the integrable systems with nearest-neighbour interaction and either $\mathfrak{su}(2)$ or $\mathfrak{su}(2) \oplus \mathfrak{su}(2)$ symmetry were classified in \cite{DeLeeuw:2019gxe} and \cite{DeLeeuw:2019fdv,deLeeuw:2020ahe,deLeeuw:2020xrw} respectively. Some of the R-matrices that appear in these classifications are non-diagonalisable, making their transfer matrices good candidates for the algorithm presented here.

\section{Acknowledgements}

We are very thankful to Changrim Ahn, Luke Corcoran, Matthias Staudacher, Alessandro Torrielli, Roberto Ruiz and Rafael Hernández for helpful discussions. We are grateful to Luke Corcoran, Matthias Staudacher, Alessandro Torrielli and Roberto Ruiz for reading the manuscript and providing very useful comments. LW is funded by a University of Surrey Doctoral College Studentship Award. This work is supported by the EPSRC-SFI grant EP/S020888/1 {\it Solving Spins and Strings}. 

No data beyond those presented and cited in this work are needed to validate this study.

%\appendix

%\section{An appendix} \label{appendix}

\end{document}